\documentclass[12pt]{article}
\usepackage[margin=1.0in]{geometry}
\usepackage[utf8]{inputenc}
\usepackage[intlimits]{amsmath}
\usepackage{amssymb,amsfonts,amsthm,dsfont}
\usepackage{physics}
\usepackage[english]{babel}
\usepackage[bottom]{footmisc}
\usepackage{csquotes}

\newtheorem{theorem}{Theorem}[section]
\newtheorem{lemma}[theorem]{Lemma}

\renewcommand{\d}{\mbox{d}}
\renewcommand{\bar}[1]{\overline{#1}}

\renewcommand{\dd}{\partial}

\numberwithin{equation}{section}

\begin{document}

\vspace{1.7 cm}

\begin{flushright}
{\small FIAN/TD/5-2026}
\end{flushright}
\vspace{1.7 cm}

\begin{center}
{\large\bf The $\sigma_-$ Cohomology Analysis for Coxeter higher spin $B_2$ model}

\vspace{1 cm}

{\bf A.A.~Tarusov and K.A.~Ushakov }\\
\vspace{0.5 cm}
\textbf{}\textbf{}\\
\vspace{0.5cm}
\textit{ I.E. Tamm Department of Theoretical Physics,
    Lebedev Physical Institute,}\\
\textit{ Leninsky prospect 53, 119991, Moscow, Russia}\\
\end{center}

\vspace{0.4 cm}

\begin{abstract}
\noindent
The dynamical content of equations resulting from rank-two covariant derivatives in $B_2$ Coxeter theory in $AdS_4$ are analyzed in terms of $\sigma_-$-complexes. Primary fields and gauge-invariant differential operators on primary fields are classified for $(adj \otimes adj)$ one-form fields $\omega$ and $(tw\otimes adj)$ zero-form fields $C$. It is shown that one-forms $\omega$ in the $(adj \otimes adj)$ sector encode symmetric massless fields and partially massless fields of all spins and depth of masslessness. Gluing of the one-form module to the zero-form modules at the linear vertices is studied.
\end{abstract}

\newpage
\vspace{-1cm}
\tableofcontents
\newpage

\section{Introduction}

Higher-spin (HS) gauge theories are consistent interacting theories containing massless fields of all spins. However, for the fields with spin above $2$ to have interactions, they must occur on a curved background as the restrictions of no-go theorems \cite{Coleman:1967ad, Haag:1974qh} are circumvented by abandonment of a flat space-time. The most symmetric curved background is $AdS$ \cite{Fradkin:1987ks, Fradkin:1986qy, Vasiliev:1988sa}. A particular interest to us is the 4-dimensional case, the lowest dimensional case with propagating fields.

As the spectrum of HS theories involves infinite towers of massless fields of arbitrary spins \cite{Konshtein:1988yg, Vasiliev:2004cm}, these theories are somewhat analogous to String theory whose spectrum also includes particles of all spins. The relation between the HS theory and the String theory has been discussed in literature for some time. Some calculations in limit cases indicate that String theory possesses HS symmetry at high energy \cite{Gross:1987ar,Gross:1988ue, Lindstrom:2003mg, Bonelli:2003kh, Sagnotti:2003qa}, but a direct relation between the two theories has not yet been established. It has been conjectured, however, that the multiparticle $B_2$ Coxeter extended HS (CHS) theory can provide a realization of the most symmetric phase of String theory \cite{Vasiliev:2018zer}. After a symmetry breaking the originally massless fields are expected to gain mass, thus potentially matching a String theory spectrum. The research into CHS has naturally required the analysis of dynamical content of such systems.
    
Nonlinear HS theory is described as a consistent nonlinear system of equations on zero-forms $C$ and one-forms $\omega$ in the so-called unfolded approach \cite{Vasiliev:1988sa, Vasiliev:1988xc} (see \cite{Joung:2021bhf, Iazeolla:2025btr, Misuna:2024ccj, Misuna:2024dlx} for recent applications of unfolding technique). Due to this compact formalism, not all fields in the system are independent primary fields, as some are their derivatives. The set of primary fields, independent field equations and differential gauge symmetries can be found as cohomology of a specific operator $\sigma_-$ \cite{Shaynkman:2000ts}. This $\sigma_-$ cohomological analysis has been carried out in a wide variety of different HS setups \cite{Shaynkman:2000ts, Vasiliev:2001zy, Gelfond:2003vh, Bekaert:2004qos, Boulanger:2008up, Boulanger:2008kw, Vasiliev:2009ck, Spirin:2024zgy, Tatarenko:2025krq}. However, in the context of $B_2$ CHS, the relevant results are those concerning $H(\sigma_-)$ in the standard $4d$ theory of HS \cite{Bychkov:2021zvd}, partially massless systems \cite{Skvortsov:2009nv}, as well as $\sigma_-$ analysis of $\mathfrak{sp}(2M)$ invariant equations in spaces with matrix coordinates \cite{Gelfond:2013lba}, which include usual $4d$ Minkowski space as a particular case. In \cite{Tarusov:2025sre} it was conjectured that $(adj \otimes adj)$ one-forms encode partially massless fields \cite{Deser:2001pe,Deser:2001us,Zinoviev:2001dt,Zinoviev:2002ye,Skvortsov:2006at} (see \cite{Grigoriev:2020lzu, Basile:2022mif, Basile:2024dcs, Zinoviev:2024xta, Zinoviev:2025nlp, Zinoviev:2025tpx, Zinoviev:2025yqa, Zinoviev:2025jff} for recent advancements in studying partially massless HS theories), as well as the usual massless fields \cite{Fronsdal:1978rb}, i.e., 2-particle field $\omega$ contains an infinite number of copies of 1-particle fields described by at most two-row Young diagrams of $\mathfrak{so}(3,2)$. If the conjecture is valid, then the $\sigma_-$ analysis of $(adj \otimes adj)$ sector should lead to the cohomology of \cite{Bychkov:2021zvd, Skvortsov:2009nv}. In \cite{Gelfond:2013lba} dynamical content of covariant constancy equations that correspond to the rank-$r$ tensor products of a standard twisted-adjoint modules has been studied. Therefore, there is no need to examine the content of $(tw\otimes tw)$ and the novel entangled (which are not isomorphic to a tensor product of modules from standard theory) one-form equations, since the entangled sector has been shown \cite{Tarusov:2025sre} to be related to $(tw \otimes tw)$. It was found that $H^2(\sigma_-)$ is zero, so $(tw\otimes tw)$ and entangled one-forms are topological. Therefore it should be possible for linear vertices presented in \cite{Tarusov:2025sre} to be eliminated by a suitable field redefinitions in both cases.

In this paper we study the dynamical content of $B_2$ CHS theory encoded in $(adj \otimes adj)$, $(tw \otimes adj)$ and $(adj \otimes tw)$ modules. We find primary fields and gauge-invariant differential operator on non-trivial primary fields as cohomology of the $\sigma_-$ operator. As the correct definition of $\sigma_-$ requires it to act on $\mathfrak{so}(3,2)$ indecomposable (irreducible if possible) modules, we consider highest(lowest) weight vectors of a commuting $\mathfrak{sl}(2)\oplus \mathfrak{sl}(2)$ algebra (which in the $(adj \otimes adj)$ case forms a Howe-dual pair with $\mathfrak{so}(3,2)$ \cite{Howe:1989}). This construction reveals a highly intricate structure of zero-form modules $(tw \otimes adj)$ and $(adj \otimes tw)$, which can, however, be greatly simplified by the truncation to unitary submodules. A $\sigma_-$ operator defined on $\mathfrak{so}(3,2)$ indecomposable representations allows a direct computation of cohomology on the highest(lowest) weight vectors of the dual algebra.

The cohomological analysis of the $(adj\otimes adj)$ sector shows that 1-cocycles describe symmetric massless and partially massless fields, and 2-cocycles are generalizations of the Weyl cocycle (Chevalley-Eilenberg cocycle that allows mixing of one-form and zero-form fields) and cocycles corresponding to the Fronsdal equations and the equations of a partially massless field. Thus, the conjecture that 2-particle $(adj\otimes adj)$ field encodes copies of all $AdS_4$ 1-particle fields is proven. For zero-form $C$ associated with $(tw \otimes adj)$ and $(adj \otimes tw)$ the lowest cohomology groups $H^{0,1}(\sigma_-)$ are obtained. A unitary truncation results in a standard generalized Weyl tensors and Bianchi 1-cocycles for symmetric massless fields coupled with a scalar 0-cocycle and Klein-Gordon 1-cocycle, as expected. The list of $0-$ and $1-$cocycles in a non-truncated case shows a presence of cocycles that can be associated with generalized Weyl tensors and Bianchi 1-cocycles for symmetric massless and partially massless fields, as well as some additional cocycles whose meaning is yet to be explored. Moreover, in the course of the cohomology analysis, it has been found that with respect to the algebra $\mathfrak{sl}(2)\oplus \mathfrak{sl}(2)$, commuting with the action of $\mathfrak{so}(3,2)$, the modules of zero-forms decompose into the sum of irreducible and reducible indecomposable representations. We identify the reducible indecomposable $\mathfrak{sl}(2)\oplus \mathfrak{sl}(2)$ modules with the non-split extension modules \cite{Humphreys2008} (similar structure was discussed in \cite{Shaynkman:2004vu} with respect to subsingular and subsubsingular modules of conformal algebra $\mathfrak{o}(M,2)$). As $\mathfrak{sl}(2)\oplus \mathfrak{sl}(2)$ and $\mathfrak{so}(3,2)$ are commuting algebras, presence of reducible indecomposable representations of one algebra may induce the same structures for the other. However, this issue requires the application of advanced representation theory and will be discussed in more detail elsewhere.

Gluing of one- and zero-forms performed by the linear vertices in the $(adj\otimes adj)$ sector is discussed. Comparison of the vertices with 2-cocycles of $\sigma_-$ in the $(adj\otimes adj)$ sector shows that the Weyl 2-cocycles are present in all one-form equations while 2-cocycles responsible for the field equations do not appear only if $\omega$ is built out of monomials with equal number of $Y_1$ and $Y_2$ oscillators. We determine that the standard homotopy with a zero shift used in \cite{Tarusov:2025sre} produce vertices with additional non-cohomological terms. The resulting vertices allow a straightforward unitary truncation, but interpretation in a non-truncated case is complicated. A field redefinition can be carried out so that the linear vertices consist of only 2-cocycles, as shown in the simplest examples. After the unitary truncation all equation 2-cocycles and most of the Weyl cocycles vanish, the system is found to be on-shell. Truncated $B_2$ CHS theory encodes several copies of dynamical Fronsdal fields and an infinite number of copies of topological massless and partially massless fields in accordance with the conclusions of \cite{Tarusov:2025sre}.

The paper is organized as follows. In section \ref{sec: Coxeter HS model} we recall the main concepts for a general CHS following \cite{Vasiliev:2018zer}, provide a succinct summary of the analysis of $B_2$ CHS in $AdS_4$ performed in \cite{Tarusov:2025sre}, give a definition of a $\sigma_-$-complex in a general unfolded setup and explain (following \cite{Shaynkman:2000ts}) the physical interpretation of cohomology $H^\bullet(\sigma_-)$ of an unfolded systems. Then in sections \ref{sec: one-form adj-adj}, \ref{sec: zero-form tw-adj} we carry out the cohomological analysis for $(adj \otimes adj)$ and $(tw \otimes adj)$ modules. In section \ref{sec: linear vertext analysis} the gluing of one-forms $(adj \otimes adj)$ with zero-forms $(tw \otimes adj)$ and $(adj \otimes tw)$ is examined. Obtained results are
summarized in section \ref{sec: conclusion}. We discuss a non-split extension modules of $\mathfrak{sl}(2)$ and $\mathfrak{sl}(2)\oplus \mathfrak{sl}(2)$ in appendix \ref{sec: nonsplit extension}. In appendix \ref{sec: momentum action} we show how $\mathfrak{so}(3,2)$ momentum operator acts on the lowest-weight $\mathfrak{sl}(2)\oplus \mathfrak{sl}(2)$ vectors in $(tw \otimes adj)$ module. Equations for components of zero-forms $C$ in a rank-one like decomposition are presented in appendix \ref{sec: rank-one zero-form eq} for $(adj \otimes tw)$ and $(tw\otimes adj)$ modules.

\section{Coxeter HS model and $\sigma_-$ cohomology}\label{sec: Coxeter HS model}

In this section we provide a brief introduction into the general CHS framework \cite{Vasiliev:2018zer}, recollect the analysis of $B_2$ CHS \cite{Tarusov:2025sre} and outline the concept of $\sigma_-$ cohomology \cite{Shaynkman:2000ts, Bekaert:2004qos}. 

\subsection{Coxeter groups and framed Cherednik algebra}
 
A rank-$p$ Coxeter group $\mathcal{C}$ is generated by reflections with respect to a system of root vectors $v_a$ in a $p$-dimensional Euclidean vector space $V$ with the scalar product $(x, y) \in \mathds{R}$, $x, y \in V$. An elementary reflection associated with the root vector $v_a$ acts on $x \in V$ as follows
\begin{equation}
    R_{v_a}x^i = x^i - 2 \frac{(v_a,x)}{(v_a,v_a)}v_a^i\,, \quad R_{v_a}^2 = Id\,.
\end{equation}

The root system of $B_2$ consists of two conjugacy classes under the action of $B_2$
\begin{equation}\label{e: B2 roots}
    \mathcal{R}_1 = \{\pm e^i, 1 \leq i \leq 2\}\,, \quad \mathcal{R}_2 = \{\pm e^1 \pm e^2\}\,.
\end{equation}
Group $B_2$ contains reflections of any basis axis in $V = \mathds{R}^2$ generated by $v^i_\pm = \pm e^i$ and permutations generated by $v^{12} = e^1-e^2$.

CHS theories are based on a framed Cherednik algebra generated via a set of idempotents $I_n$, a set of oscillators $q_\alpha^n$ and Klein operators $\hat{K}_v$ for each root vector $v$ (here $\alpha \in \{1,2\}, n\in\{1,\dots,p\}$), that obey
\begin{equation}\label{e: qIK1}
     I_n I_m = I_m I_n\,, \quad I_n I_n = I_n\,, \quad I_n q_\alpha^n = q_\alpha^n I_n = q_\alpha^n\,, \quad I_m q_\alpha^n = q_\alpha^n I_m\,
\end{equation}
with no summation over repeated Latin indices and
\begin{alignat}{2}\label{e: qIK2}
    \hat{K}_v q_\alpha^n &= R_v{}^n{}_m q_\alpha^m \hat{K}_v\,, &\hat{K}_v \hat{K}_u &= \hat{K}_u \hat{K}_{R_u(v)} = \hat{K}_{R_v(u)}\hat{K}_v\,,
    \\ 
    \label{e: qIK3}
    \hat{K}_v \hat{K}_v &= \prod I_{i_1(v)}\dots I_{i_k(v)}\,, &\hat{K}_v &= \hat{K}_{-v}\,,
    \\
\label{e:cherednik comm}
    [q_\alpha^n,q_\beta^m] &= -i \varepsilon_{\alpha\beta}\left(2\delta^{nm}I_n + \sum_{v \in \mathcal{R}} \nu(v) \frac{v^n v^m}{(v,v)} \hat{K}_v\right)\,,
\end{alignat}
where $\mathcal{R}$ is a set of conjugacy classes of root vectors under the action of  $\mathcal{C}$, $\nu(v)$ is a function of conjugacy classes and the labels $i_1(v), . . . , i_k(v)$ enumerate those idempotents $I_n$ that carry labels affected by the reflection $R_v$. In case of $B_2$ there are two types of Klein operators: $\hat K_{12}$ corresponding to the root vector $v^{12}$ and $\hat K_i$ corresponding to the vector $v_+^i$.
Klein operators $\hat{K}_v$ obey
\begin{align}\label{e:KI comm}
    I_n \hat{K}_v &= \hat{K}_v I_n\,, \forall n  \in \{1,\dots,p\}\,,
    \\
\label{e:KI absorb}
    I_{n} \hat{K}_v &= \hat{K}_v I_n= \hat{K}_v \,, \forall n  \in \{i_1(v), . . . , i_k(v)\}\,.
\end{align}
The double commutator of $q_\alpha^n$ satisfies Jacobi identity (see \cite{Vasiliev:2018zer} for details).

For any Coxeter root system the generators
\begin{equation}
    t_{\alpha\beta} = \frac{i}{4}\sum_{n=1}^{p}\{q^n_\alpha, q^n_\beta\}I_n\,
\end{equation}
obey the $\mathfrak{sp}(2)$ commutation relations
\begin{equation}
\left[t_{\alpha \beta}, t_{\gamma \delta}\right]=\epsilon_{\beta \gamma} t_{\alpha \delta}+\epsilon_{\beta \delta} t_{\alpha \gamma}+\epsilon_{\alpha \gamma} t_{\beta \delta}+\epsilon_{\alpha \delta} t_{\beta \gamma}\,,
\end{equation}
properly rotating all Greek indices,
\begin{equation}\label{e: sl2 invariance}
    [t_{\alpha\beta}, q^n_\gamma] = \epsilon_{\beta\gamma} q^n_\alpha + \epsilon_{\alpha\gamma} q^n_\beta\,.
\end{equation}
These properties hold for any $\nu(v)$ constants chosen in the commutation relations (\ref{e:cherednik comm}). The fact that the framed Cherednik algebra possesses the $\mathfrak{sp}(2)$ automorphism algebra implies that nonlinear CHS equations are also $\mathfrak{sp}(2)$ covariant and, therefore, implies usual Lorentz covariance \cite{Vasiliev:1999ba}.

\subsection{General CHS theory}\label{sec: general chs}

Consider the fields $W$, $S$ and $B$ depending on the space-time variable $x$ as well as on $p$ sets of variables enumerated by the label $n \in \{1,\dots,p\}$, that include $Y^n_\mathcal{A} , Z^n_\mathcal{A}$ $(\mathcal{A} \in \{1,\dots,4\})$, idempotents $I_n$, anticommuting differentials $dZ^\mathcal{A}_n$ and Klein operators $\hat{K}_v$ associated with all root vectors of a chosen Coxeter group $\mathcal{C}$ (at the convention $\hat K_{-v} = \hat K_v)$. The field $W(Y,Z,I;\hat{K}|x)$ is a $dx$ one-form,  $S(Y,Z,I;\hat{K}|x)$ is a $dZ$ one-form and  $B(Y,Z,I;\hat{K}|x)$ is a zero-form. The field equations associated with the framed Cherednik algebra (\ref{e:cherednik comm}) are formulated in terms of the star product

\begin{align}
\label{e:star product}
    (f*g)(Y,Z,I) &= \frac{1}{(2\pi)^{4 p}}\int d^{4p}S d^{4p}T \exp\bigg(iS^\mathcal{A}_n T^\mathcal{B}_m C_{\mathcal{A}\mathcal{B}}\delta^{nm}\bigg) \times \nonumber
    \\
    &\times f(Y_i+I_i S_i,Z_i+I_i S_i,I)g(Y+T,Z-T,I)\,,
\end{align}

where
\begin{equation}
    C_{AB} = \begin{pmatrix}
\varepsilon_{\alpha\beta} & 0 \\
0 & \bar\varepsilon_{\dot \alpha \dot \beta}
\end{pmatrix}\,.
\end{equation}
The spinor indices are raised and lowered by the Lorentz invariant antisymmetric tensors $\varepsilon^{\alpha\beta}$ and $\bar\varepsilon^{\dot\alpha\dot\beta}$ according to the rules
\begin{equation}
    A^\alpha = \varepsilon^{\alpha\beta}A_\beta\,, \quad A_\beta = \varepsilon_{\alpha\beta}A^\alpha\,, \quad A^{\dot\alpha} = \bar\varepsilon^{\dot\alpha\dot\beta}A_{\dot\beta}\,, \quad A_{\dot\beta} = \bar\varepsilon_{\dot\alpha\dot\beta}A^{\dot\alpha}\,.
\end{equation}

Central elements $I_n$ obey (no summation over repeated indices)
\begin{alignat}{3}
    Y^m_\mathcal{A} * I_n &= I_n * Y^m_\mathcal{A}\,, &\quad Y^n_\mathcal{A} * I_n &= I_n * Y^n_\mathcal{A} = Y^n_\mathcal{A}\,, &\quad Z^m_\mathcal{A} * I_n &= I_n * Z^m_\mathcal{A}\,, \\
    Z^n_\mathcal{A} * I_n &= I_n * Z^n_\mathcal{A} = Z^n_\mathcal{A}\,, &\quad I_n*I_n &= I_n\,, &\quad I_n * I_m &= I_m * I_n\,.
\end{alignat}

Analogously to the standard HS construction, CHS star product admits inner Klein operators $\varkappa_v$, $\bar{\varkappa}_v$ associated with
the root vectors $v$
\begin{equation}
    \varkappa_v = \exp\bigg(i \frac{v^n v^m}{(v,v)} z_{\alpha n}y^\alpha_m \bigg)\,, \quad \bar\varkappa_v = \exp\bigg(i \frac{v^n v^m}{(v,v)} \bar z_{\dot\alpha n}\bar y^{\dot\alpha}_m \bigg)\,.
\end{equation}

The inner Klein operators $\varkappa_v$ generate the star product realization of $\mathcal{C}$ via
\begin{equation}
    \varkappa_v * q_\alpha^n = R_v{}^n{}_m q_\alpha^m * \varkappa_v\,, \quad q_\alpha^n = y_\alpha^n, z_\alpha^n\,,
\end{equation}
and analogously for $\bar{q}_{\dot\alpha}$.

Nonlinear equations of general CHS theory are \cite{Vasiliev:2018zer}
\begin{align}
    \d_x W + W*W &= 0\,, \label{e:nonlinear system 1}
    \\
    \d_x B + W*B - B*W &= 0\,, \label{e:nonlinear system 2}
    \\
    \d_x S + W*S + W*S &= 0\,, \label{e:nonlinear system 3}
    \\
    S*B - B*S &= 0\,,  \label{e:nonlinear system 4}
    \\
    S*S &= i \bigg(dZ^{\mathcal{A}n}dZ_{\mathcal{A}n} + \sum_i \sum_{v \in \mathcal{R}_i} \bigg[\eta_{i} B \frac{v^n v^m}{(v,v)}dz^\alpha_n dz_{\alpha m} * \varkappa_v \hat{k}_v  + \nonumber\\
    &+\bar{\eta}_{i}B \frac{v^n v^m}{(v,v)}d\bar z^{\dot\alpha}_n d\bar z_{\dot\alpha m} * \bar\varkappa_v \hat{\bar k}_v\bigg]\bigg)\,,  \label{e:nonlinear system 5}
\end{align}

where $\varkappa_v \hat{k}_v$ acts on $dz^\alpha_n$ as
\begin{equation}
    \varkappa_v\hat{k}_v * dz^\alpha_n = R_v{}_n{}^m dz^\alpha_m *  \varkappa_v\hat{k}_v\,.
\end{equation}
It is important to note that 
\begin{equation}
    \hat{\gamma}_i = \sum_{v \in \mathcal{R}_i} \frac{v^n v^m}{(v,v)}dz^\alpha_n dz_{\alpha m} * \varkappa_v \hat{k}_v
\end{equation}
and its conjugated counterpart $\hat{\bar{\gamma}}_i$ are central with respect to the star product (\ref{e:star product}).

As shown in \cite{Tarusov:2025sre}, the vacuum solution of the nonlinear system (\ref{e:nonlinear system 1})--(\ref{e:nonlinear system 5}) for any $\mathcal{C}$ that describes $AdS_4$ is
\begin{align}
    B_0 &= 0\,, \quad S_0 = dZ^{\mathcal{A}n}Z_{\mathcal{A}n}\,, \\
    W_0 &= \Omega_{AdS}(Y|x) = -\frac{i}{4}\delta^{nm}\bigg(\omega_{\alpha\beta}(x)y^\alpha_n y^\beta_m + \bar \omega_{\dot \alpha \dot \beta}(x) \bar y^{\dot \alpha}_n \bar y^{\dot \beta}_m + 2 e_{\alpha \dot \alpha}(x) y^\alpha_n \bar y^{\dot \alpha}_m\bigg)\,,
\end{align}
where the $AdS_4$ spin-connections $\omega_{\alpha\beta}, \bar\omega_{\dot\alpha\dot\beta}$ and vierbein $e_{\alpha \dot \alpha}$ obey
\begin{align}
    \d_x \omega_{\alpha\beta} + \omega_{\alpha\gamma}\wedge \omega^{\gamma}{}_{\beta}  +  e_{\alpha\dot \gamma}\wedge e_{\beta}{}^{\dot \gamma} &= 0\,, \\
    \d_x \bar\omega_{\dot\alpha\dot\beta} + \bar\omega_{\dot\alpha\dot\gamma}\wedge\bar{\omega}^{\dot\gamma}{}_{\dot\beta} + e_{\gamma\dot \alpha}\wedge e^{\gamma}{}_{\dot \beta} &= 0\,, \\
    \d_x e_{\alpha \dot \alpha} +   \omega_{\alpha\gamma} \wedge e^{\gamma}{}_{\dot \alpha} + \bar\omega_{\dot\alpha\dot\gamma} \wedge e_{\alpha}{}^{\dot \gamma}  &= 0\,.
\end{align}

Decomposition of the nonlinear CHS equations over this vacuum solution leads to the following set of linear equations
\begin{align}\label{e: zero-form equation}
    \mathcal{D}C(Y,I;\hat{K}|x) &= 0\,,
    \\
    \label{e: one-form equation}
    \mathcal{D} \omega(Y,I;\hat{K}|x) &= \Upsilon^{\eta}(\Omega_{AdS},\Omega_{AdS},C) + \Upsilon^{\bar{\eta}}(\Omega_{AdS},\Omega_{AdS},C)\,.
\end{align}

Here covariant derivative
\begin{equation}\label{e: covariant derivative}
    \mathcal{D} = D_L + e^{\alpha \dot\alpha}[P_+^{kl}(\hat{k},\hat{\bar k})(y_{\alpha k} \hat{\bar\dd}_{\dot\alpha l} + \bar y_{\dot\alpha l} \hat{\dd}_{\alpha k}) - i P_-^{kl}(\hat{k},\hat{\bar k})(y_{\alpha k}\bar y_{\dot\alpha l} -  \hat\dd_{\alpha k}\hat{\bar\dd}_{\dot\alpha l})]
\end{equation}
is composed of Lorenz covariant derivative
\begin{equation}\label{e: DL}
    D_L  =  \d_x  + \delta^{nm}\bigg(\omega^{\alpha \beta} y_{\alpha n}\dd_{\beta m} + \bar \omega^{\dot\alpha \dot\beta} \bar y_{\dot\alpha n}\bar\dd_{\dot\beta m}\bigg)\,
\end{equation}
and the vierbein-dependent terms. The latter involve matrix operator, which acts differently on the component fields specified by $\hat{k}$ and $\hat{\bar{k}}$, 
\begin{equation}\label{e: projector formula}
    P_{\pm}^{kl}(\hat{k},\hat{\bar k})=\frac{1}{2}\delta^{nm}\bigg(\mathds{1}^k_n \bar{\mathds{1}}^l_m \pm R(k){}^k_n \bar{R}(\bar{k}){}^l_m \bigg)\,,
\end{equation}
where $\mathds{1}^k_n$ and $\bar{\mathds{1}}^l_m$ are identity matrices, $\hat{k}$ and $\hat{\bar k}$ are products of elementary Klein operators $\hat{k}_v$ and $\hat{\bar k}_v$, matrices $R(k){}^k_n$ and $\bar{R}(\bar{k}){}^l_m$ are reflections in the root space which correspond to the products of elementary Klein operators $\hat{k}$ and $\hat{\bar k}$ (identity matrix in case of $\hat{k} = 1$) that are the arguments of the fields $C(Y,I;\hat{K}|x)$ and $\omega(Y,I;\hat{K}|x)$ on which the derivative $\mathcal{D}$ acts. In this paper, only the $\hat{K}$ implies a power series of elementary Klein operators $\hat{k}$ and $\hat{\bar{k}}$, while explicitly written $\hat{k}$ or $\hat{\bar{k}}$ are used to denote Klein dependence in component fields. We also introduce a notation for derivatives accompanied by idempotent $I_n$
\begin{equation}
    \hat{\dd}_{\alpha n} := I_n \dd_{\alpha n}\,.
\end{equation}

For the explicit form of vertices $\Upsilon^{\eta}(\Omega_{AdS},\Omega_{AdS},C)$ we refer the reader to \cite{Tarusov:2025sre}.

\subsection{$B_2$ CHS}\label{B_2CHS}

The root system of $B_2$ is represented in (\ref{e: B2 roots}). A generating set of $B_2$ is $\{R_{e^i}, R_{e^1-e^2}\}$ reflections of two-dimensional root space. Holomorphic Klein operators associated with the generating reflections are $\hat{k}_i$ and $\hat{k}_{12}$. After identification of unity $1$ with idempotents $I_n$ the holomorphic group $B_2$ is generated by
\begin{equation}
   \Bigl\{ \hat{k}_i, \hat{k}_{12}| \hat{k}_i^2 = I_i, \hat{k}_{12}^2 = I_1 I_2, \hat{k}_1 \hat{k}_{12} = \hat{k}_{12} \hat{k}_2, \hat{k}_2 \hat{k}_{12} = \hat{k}_{12} \hat{k}_1, i \in \{1,2\} \Bigl\} \,.
\end{equation}

It is useful to use notation
\begin{equation}
\hat{k}^+_{12} := \hat{k}_1 \hat{k}_2 \hat{k}_{12}\,
\end{equation}
and view $\hat{k}^+_{12}$ as an additional redundant generator that corresponds to the root vector $e^1+e^2 \in \mathcal{R}_2$ (reflection with respect to $e^1+e^2$ is equivalent to the composition of reflections with respect to $e^1-e^2$ and basis vectors $e^i$).
The reflection matrices $R(k)$ in (\ref{e: projector formula}) are
\begin{alignat}{3}
R(1) &= \begin{pmatrix}
1 & 0 \\
0 & 1
\end{pmatrix}\,, &\quad
R(k_1) &= \begin{pmatrix}
-1 & 0 \\
0 & 1
\end{pmatrix}\,, &
R(k_2) &= \begin{pmatrix}
1 & 0 \\
0 & -1
\end{pmatrix}\,, \label{e: B2 R first}\\
R(k_{12}) &= \begin{pmatrix}
0 & 1 \\
1 & 0
\end{pmatrix}\,, &
R(k_1 k_2) &= \begin{pmatrix}
-1 & 0 \\
0 & -1
\end{pmatrix}\,, &\quad
R(k_1 k_{12}) &= \begin{pmatrix}
0 & -1 \\
1 & 0
\end{pmatrix}\,,\\
R(k_2 k_{12}) &= \begin{pmatrix}
0 & 1 \\
-1 & 0
\end{pmatrix}\,, &
R(k^+_{12}) &= \begin{pmatrix}
0 & -1 \\
-1 & 0
\end{pmatrix}\,.
\label{e: B2 R last}
\end{alignat}

Analogous matching of the reflection matrices $\bar R(\bar k)$ takes place for the anti-holomorphic Klein operators $\bar k_i, \bar k_{12}$.

As was explained in \cite{Tarusov:2025sre}, idempotents $I_n$ induce filtration and decompose the CHS system (\ref{e:nonlinear system 1})--(\ref{e:nonlinear system 5}) into the sectors $I_{1,2}$ and $I_1 I_2$ in a triangular way. Sectors $I_{i}$ reproduce a standard HS theory and contribute to the mixed $I_1 I_2$ sector, where new fields and corresponding interaction vertices appear. As idempotents have no inverse elements, these new objects do not reappear in the $I_i$ sectors, leaving them identical to the standard theory. 

The expansion in powers of Klein operators for a general field in the mixed idempotent sector $C(Y_1, Y_2; \hat{K}\vert x)*I_1 I_2$ contains $2^6=64$ component fields
\begin{equation}
    C(Y_1, Y_2; \hat{K}\vert x)*I_1 I_2 = \sum_{a,b,c,\bar{a},\bar{b},\bar{c} = 0}^1 C_{abc\bar{a}\bar{b}\bar{c}} (Y_1, Y_2\vert x) *I_1 I_2 * \hat k_1^a \hat k_2^b \hat k_{12}^c \hat{\bar k}{}_1^{\bar a}\hat{\bar k}{}_2^{\bar b}\hat{\bar k}{}_{12}^{\bar c}\,.
\end{equation}

However, these $64$ component fields are divided into $8$ groups of $8$ fields as the form of covariant derivative (\ref{e: covariant derivative}) depends not on Klein operators directly, but on matrix operators (\ref{e: projector formula}). We list all possible products of reflection matrices $R(k)$ and $\bar{R}(\bar{k})$. 
\begin{gather}
   \delta^{nm}R(k){}^k_n \bar{R}(\bar{k}){}^l_m = \Biggl\{ \begin{pmatrix}
1 & 0 \\
0 & 1
\end{pmatrix}\,,
\begin{pmatrix}
-1 & 0 \\
0 & 1
\end{pmatrix}\,,
\begin{pmatrix}
1 & 0 \\
0 & -1
\end{pmatrix}\,,
\begin{pmatrix}
-1 & 0 \\
0 & -1
\end{pmatrix}\,,
\begin{pmatrix}
0 & 1 \\
1 & 0
\end{pmatrix}\,, \label{e: RbarR1} \\
\begin{pmatrix}
0 & -1 \\
-1 & 0
\end{pmatrix}\,,
\begin{pmatrix}
0 & -1 \\
1 & 0
\end{pmatrix}\,,
\begin{pmatrix}
0 & 1 \\
-1 & 0
\end{pmatrix}\, \label{e: RbarR2}\Biggr\}.
\end{gather}

According to (\ref{e: covariant derivative}), this leaves us with 8 (note that $\text{ord} (B_2) = 8$) different covariant constancy equations whose solution spaces comprise the field modules for the $B_2$ CHS theory
\begin{align}
    \bigg(D_L + e^{\alpha \dot\alpha}\sum_{i = 1}^2 (y_{\alpha i}\bar\dd_{\dot \alpha i} + \bar y_{\dot\alpha i}\dd_{\alpha i})\bigg)C(Y_1,Y_2,I;\hat{k},\hat{\bar k}|x) &= 0\,, \label{e:cov1}
\end{align}

\medmuskip=3mu

\begin{align}
    \bigg(D_L - i e^{\alpha \dot\alpha}(y_{\alpha 1}\bar y_{\dot \alpha 1} - \dd_{\alpha 1} \bar \dd_{\dot\alpha 1}) + e^{\alpha \dot\alpha}(y_{\alpha 2}\bar\dd_{\dot \alpha 2} + \bar y_{\dot\alpha 2}\dd_{\alpha 2})\bigg)C(Y_1,Y_2,I;\hat{k},\hat{\bar k}|x) &= 0\,, \label{e:cov2}\\
    \bigg(D_L + e^{\alpha \dot\alpha}(y_{\alpha 1}\bar\dd_{\dot \alpha 1} + \bar y_{\dot\alpha 1}\dd_{\alpha 1}) - i e^{\alpha \dot\alpha}(y_{\alpha 2}\bar y_{\dot \alpha 2} - \dd_{\alpha 2} \bar \dd_{\dot\alpha 2}) \bigg)C(Y_1,Y_2,I;\hat{k},\hat{\bar k}|x) &= 0\,, \label{e:cov3}\\
    \bigg(D_L - i e^{\alpha \dot\alpha}\sum_{i = 1}^2 (y_{\alpha i}\bar y_{\dot \alpha i} - \dd_{\alpha i} \bar \dd_{\dot\alpha i})\bigg)C(Y_1,Y_2,I;\hat{k},\hat{\bar k}|x) &= 0\,, \label{e:cov4}\\
    \bigg(D_L + \frac{1}{2}e^{\alpha \dot \alpha}\bigg[(y_{\alpha 1} + y_{\alpha 2})(\bar\dd_{\dot \alpha 1} + \bar\dd_{\dot \alpha 2}) + (\bar y_{\dot \alpha 1} + \bar y_{\dot \alpha 2})(\dd_{\alpha 1} + \dd_{\alpha 2})\bigg] -\label{e:cov5} \\
    - \frac{i}{2}e^{\alpha \dot \alpha}\bigg[(y_{\alpha 1} - y_{\alpha 2})(\bar y_{\dot\alpha 1} - \bar y_{\dot\alpha 2}) - (\dd_{\alpha 1} - \dd_{\alpha 2})(\bar\dd_{\dot \alpha 1} - \bar\dd_{\dot \alpha 2})\bigg]\bigg)C(Y_1,Y_2,I;\hat{k},\hat{\bar k}|x) &= 0\,, \nonumber
\\
    \bigg(D_L + \frac{1}{2}e^{\alpha \dot \alpha}\bigg[(y_{\alpha 1} - y_{\alpha 2})(\bar\dd_{\dot \alpha 1} - \bar\dd_{\dot \alpha 2}) + (\bar y_{\dot \alpha 1} - \bar y_{\dot \alpha 2})(\dd_{\alpha 1} - \dd_{\alpha 2})\bigg] - \label{e:cov6}\\
    - \frac{i}{2}e^{\alpha \dot \alpha}\bigg[(y_{\alpha 1} + y_{\alpha 2})(\bar y_{\dot\alpha 1} + \bar y_{\dot\alpha 2}) - (\dd_{\alpha 1} + \dd_{\alpha 2})(\bar\dd_{\dot \alpha 1} + \bar\dd_{\dot \alpha 2})\bigg]\bigg)C(Y_1,Y_2,I;\hat{k},\hat{\bar k}|x) &= 0\,,  \nonumber
\\
    \bigg(D_L + \frac{1}{2}e^{\alpha \dot \alpha}\bigg[y_{\alpha 1}(\bar\dd_{\dot\alpha 1} - \bar\dd_{\dot\alpha 2}) + y_{\alpha 2}(\bar\dd_{\dot\alpha 1} + \bar\dd_{\dot\alpha 2}) + \bar y_{\dot\alpha 1}(\dd_{\alpha 1} + \dd_{\alpha 2}) - \bar y_{\dot \alpha 2}(\dd_{\alpha 1} - \dd_{\alpha 2})\bigg] - \label{e:cov7}\\
    - \frac{i}{2}e^{\alpha \dot \alpha}\bigg[y_{\alpha 1}(\bar y_{\dot\alpha 1} + \bar y_{\dot\alpha 2}) - y_{\alpha 2}(\bar y_{\dot\alpha 1} - \bar y_{\dot\alpha 2}) - \dd_{\alpha 1}(\bar\dd_{\dot\alpha 1} + \bar\dd_{\dot\alpha 2}) + \dd_{\alpha 2}(\bar\dd_{\dot\alpha 1} - \bar\dd_{\dot\alpha 2})\bigg]\bigg)\times\nonumber\\
    \times C(Y_1,Y_2,I;\hat{k},\hat{\bar k}|x) &= 0\,, \nonumber
\\
    \bigg(D_L + \frac{1}{2}e^{\alpha \dot \alpha}\bigg[y_{\alpha 1}(\bar\dd_{\dot\alpha 1} + \bar\dd_{\dot\alpha 2}) - y_{\alpha 2}(\bar\dd_{\dot\alpha 1} - \bar\dd_{\dot\alpha 2}) + \bar y_{\dot\alpha 1}(\dd_{\alpha 1} - \dd_{\alpha 2}) + \bar y_{\dot \alpha 2}(\dd_{\alpha 1} + \dd_{\alpha 2})\bigg] - \label{e:cov8}\\
    - \frac{i}{2}e^{\alpha \dot \alpha}\bigg[y_{\alpha 1}(\bar y_{\dot\alpha 1} - \bar y_{\dot\alpha 2}) + y_{\alpha 2}(\bar y_{\dot\alpha 1} + \bar y_{\dot\alpha 2}) - \dd_{\alpha 1}(\bar\dd_{\dot\alpha 1} - \bar\dd_{\dot\alpha 2}) - \dd_{\alpha 2}(\bar\dd_{\dot\alpha 1} + \bar\dd_{\dot\alpha 2})\bigg]\bigg) \times\nonumber\\
    \times C(Y_1,Y_2,I;\hat{k},\hat{\bar k}|x) &= 0\,, \nonumber
\end{align}

\medmuskip=4mu

corresponding to the matrices (\ref{e: RbarR1}), (\ref{e: RbarR2}) reading from left to right, from top to bottom. Each of the equations above has 8 component fields corresponding to it. It is worth noting that unhatted derivatives $\dd_{\alpha i}$ appear in the covariant constancy equations due to the properties (\ref{e: qIK1}) and (\ref{e:KI absorb}).

As discussed in \cite{Tarusov:2025sre}, the covariant derivatives (\ref{e:cov1})--(\ref{e:cov8}) give rise to four families of modules. The first three are tensor products of two adjoint modules, products of twisted-adjoint and adjoint modules, and products of two twisted-adjoint modules from the standard $4d$ HS theory that will be denoted as $(adj\otimes adj), (tw\otimes adj), (adj\otimes tw)$ and $(tw\otimes tw)$, respectively. These families are associated with the equations (\ref{e:cov1})--(\ref{e:cov6}). The fourth set of modules (\ref{e:cov7}) and (\ref{e:cov8}) cannot be represented as a tensor product of standard HS modules, and therefore we refer to them as entangled ones. The entangled modules admit an exponential ansatzes 
\begin{align}
    C(Y_1,Y_2,I;\hat{k},\hat{\bar k}|x) &= \exp\bigg(-i y_{1\alpha}y_2^{\alpha} + i \bar{y}_{1 \dot\alpha}\bar{y}_2^{\dot\alpha}\bigg) \tilde{C}(Y_1,Y_2,I;\hat{k},\hat{\bar k}|x)\,, \label{def:expAnsatz} \\
    C(Y_1,Y_2,I;\hat{k},\hat{\bar k}|x) &= \exp\bigg(i y_{1\alpha}y_2^{\alpha} - i \bar{y}_{1 \dot\alpha}\bar{y}_2^{\dot\alpha}\bigg) \tilde{C}(Y_1,Y_2,I;\hat{k},\hat{\bar k}|x)\,
\end{align}
which transform entangled covariant constancy equations for the fields $C$ into a covariant constancy equation for the fields $\tilde{C}$ that resemble the equation for the product of two twisted-adjoint modules. However, entangled modules are not isomorphic to $(tw\otimes tw)$ (see \cite{Tarusov:2025sre}).

The different characteristics of various modules, in particular their unitarizability, lead to the necessity of truncating the $B_2$ system. A consistent way to achieve this truncation in the full nonlinear CHS system is to leave only the invariant space of an automorphism $\hat{K}_v \rightarrow - \hat{K}_v$. This truncation restricts zero-form fields $C$ to $(adj\otimes tw)$ and $(tw\otimes adj)$ and one-form fields $\omega$ to $(adj\otimes adj)$, $(tw\otimes tw)$ and the entangled modules. As will be explained in section \ref{sec: Interpretation of cohomology}, covariant derivative (\ref{e: covariant derivative}) uniquely defines field theoretical pattern of the corresponding unfolded system, i.e., the spectrum of (primary) fields, differential gauge parameters, differential field equations and Bianchi identities. However, the question of whether primary fields that reside in one-forms $\omega$ are on-shell or off-shell depends not only on the content of corresponding one-form modules given by the covariant constancy equations, but also on the structure of vertices $\Upsilon(\Omega_{AdS},\Omega_{AdS},C)$ that serve as a gluing between one-form modules $\omega$ and zero-form modules $C$.

In \cite{Tarusov:2025sre} we have considered a physical content of $B_2$ CHS in case of zero-forms $C$ valued in a quotient by non-unitarizable adjoint factor of $(adj\otimes tw)$ and $(tw\otimes adj)$. Such a restriction leaves us with unitarizable zero-form modules that correspond to one-particle states described by the generalized Weyl tensors for symmetric massless fields. Brief analysis of $\Upsilon(\Omega_{AdS},\Omega_{AdS},C)$ has shown that several copies of Fronsdal fields are present in a spectrum of reduced system. The complete spectrum of fields was not known  and will be obtained in this paper via a $H^\bullet(\sigma_-)$ cohomology analysis. Moreover, knowledge of all encoded fields, differential gauge parameters and differential field equations will allow us to study the physical interpretation of the $B_2$ model with zero-forms valued in a complete, not unitary reduced $(adj\otimes tw)$ and $(tw\otimes adj)$ modules. 

\subsection{Interpretation of unfolded systems through $\sigma_-$ cohomology} \label{sec: Interpretation of cohomology}

In this section we recall an analysis of a general unfolded equation of the form
\begin{equation}\label{e: general unfolded}
    \mathcal{D} \mathcal{W} = 0
\end{equation}
via a cohomology of nilpotent operator included into the definition of $\mathcal{D}$.

We start with the introduction of an $\mathbb{N}$-graded vector space $V = \oplus_{q}V^q$, where $V^q$ is a subspace of grade $q$ (note that grading of $V$ is bounded from below). Suppose $\mathcal{W}$ is an element of $\Lambda^p(\mathcal{M}^d)\otimes V$ over some smooth $d$-dimensional manifold $\mathcal{M}^d$. Let $\sigma_{\pm}$ be operators that act \enquote{vertically}, i.e., do not affect the space-time coordinates,
and shift grading by $\pm 1$, $D_L$ be the Grassmann-odd operator that has zero grading and is allowed to act non-trivially on the space-time coordinates.
Consider the covariant constancy condition of a general form $\mathcal{W}$ along with
the zero-curvature condition
\begin{equation}\label{e:covariantconsteq}
\mathcal{D} \mathcal{W}  = (D_L + \sigma_- + \sigma_+) \mathcal{W} = 0, \quad \mathcal{D}^2 = 0.
\end{equation}
Notice that eq. \eqref{e:covariantconsteq} is invariant under the gauge transformations
\begin{equation}\label{e:gaugetransformcov}
\delta \mathcal{W} = \mathcal{D} \varepsilon ,
\end{equation}
where  $\varepsilon \in \Lambda^{p-1}(\mathcal{M}^d) \otimes V$. These gauge transformations contain both differential gauge transformations (e.g., linearized diffeomorphisms) and Stueckelberg gauge symmetries (e.g. linearized local Lorentz transformations).

Zero-curvature condition combined with the definite grading of each term in the definition of $\mathcal{D}$ results in
\begin{gather}
    (\sigma_\pm)^2 = 0\,, \quad \{D_L\,,  \sigma_\pm\} = 0\,, \quad    D_L^2 + \{\sigma_-\,,\sigma_+\}= 0\,.
\end{gather}
Therefore, one can see that $\sigma_-$ is a nilpotent differential on the $\Lambda^\bullet(\mathcal{M}^d)\otimes V$ complex.

We will use the following terminology customary in the HS literature \cite{Shaynkman:2000ts, Vasiliev:2001zy}. By dynamical field, we mean a field that is not expressed via derivatives of something else by field equations (e.g. the frame field in gravity or a frame-like HS one-form field $\omega_{\alpha(s)\,,\dot{\alpha}(s)}$ in the standard HS theory). The fields that are expressed by virtue of the field equations as derivatives of the dynamical fields modulo Stueckelberg gauge symmetries, i.e., that are not annihilated by $\sigma_-$, are referred to as auxiliary fields (e.g., the Lorentz connection in gravity or its HS analogues $\omega_{\alpha(s+k)\,,\dot{\alpha}(s-k)}\,, \, k\in \{-s,\dots,-1,1,\dots,s\}$, where $\alpha(n)$ denotes $n$ symmetrized indices). By Stueckelberg fields we mean $\sigma_-$-exact fields, i.e., fields of the form $\sigma_-\chi$, as they can be eliminated by an appropriate $\sigma_-$-exact term in the gauge transformation. A field that is neither auxiliary nor pure gauge by Stueckelberg gauge symmetries is said to be a nontrivial dynamical (primary) field.

The classification for the gauge parameters is analogous. The parameters, that are not
annihilated by $\sigma_-$, describe algebraic Stueckelberg shifts. The leftover symmetries are described by the parameters  in $\ker (\sigma_-)$. The $\sigma_-$-exact parameters correspond to the gauge for gauge transformations. Parameters, which are $\sigma_-$-closed and not $\sigma_-$-exact, are referred to as differential gauge parameters. Note that since in the CHS theory the gauge parameters are zero-forms there is no room for gauge for gauge symmetries.

Introducing generalized curvatures as
\begin{equation}\label{e:def of R}
    R = \mathcal{D} \mathcal{W} = 0\,,
\end{equation}
we can interpret a zero-curvature condition $\mathcal{D}^2=0$ as Bianchi identities
\begin{equation}\label{e:covariantconsteq for R}
    \mathcal{D} R = (D_L + \sigma_- + \sigma_+) R=0\,.
\end{equation}
As we can decompose equation (\ref{e:def of R}) with respect to the $\mathbb{N}$-grading and study the constraints imposed by the Bianchi identities starting with the lowest grading, one can see that components of $R$ belong to $\ker\sigma_-$. The $\sigma$-exact components of the field strength $R$ at the level $n$ can be set to zero by an algebraic shift of the field $\mathcal{W}$ at the level $n+1$, and the cohomological components of $R$ can be expressed via (\ref{e:def of R}) in terms of the derivatives of primary fields. As the r.h.s. of (\ref{e:covariantconsteq}) is set to zero ($R$ = 0), the differential equations are imposed on primary fields. Therefore, the representatives of $H^{p+1}(\sigma_-)$ correspond to independent equations for the nontrivial dynamical fields. The Bianchi identities themselves may also not be independent, so the higher cohomology groups may turn out to be non-empty.

To sum up all that has been said above, one can prove the following proposition \cite{Shaynkman:2000ts} (see also \cite{Gelfond:2003vh, Bekaert:2004qos, Vasiliev:2009ck, Bychkov:2021zvd}):
\begin{theorem}\label{Theorem_cohomology}
The following is true for a $p$-form $\mathcal{W}$:

1) Differential gauge symmetry parameters $\varepsilon$ span  $H^{p-1}(\sigma_-)$

2) Nontrivial dynamical fields $\mathcal{W}$ span $H^p(\sigma_-)$

3) Gauge-invariant differential operators on the nontrivial dynamical fields,
contained in $\mathcal{D} \mathcal{W}  = 0$, span  $H^{p+1}(\sigma_-)$
\end{theorem}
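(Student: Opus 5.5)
The approach is to decompose everything with respect to the $\mathbb{N}$-grading of $V$ and analyse the equations and gauge transformations level by level, starting from the lowest grade. We use $\sigma_-^2=0$ and $\{D_L,\sigma_-\}=0$ from the zero-curvature condition. Throughout, work in the formal setting: fields are formal series in the grading, and $\Lambda^\bullet(\mathcal{M}^d)\otimes V$ carries the differential $\sigma_-$, which is $C^\infty(\mathcal{M}^d)$-linear. Choose at each grade a decomposition $V^q\otimes\Lambda^k = \mathrm{Im}\,\sigma_-\oplus \mathcal{H}\oplus \mathcal{C}$, where $\mathcal{H}$ is a complement representing $H^k(\sigma_-)$ inside $\ker\sigma_-$ and $\sigma_-$ maps $\mathcal{C}$ isomorphically onto the image in the next degree. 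For finite-dimensional grade components this can be done with a homotopy (e.g.\ $\sigma_+$-like adjoint or a contracting operator $\sigma_-^*$).

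\textbf{Step 1: auxiliary fields and Stueckelberg symmetries.} Write the gauge parameter as $\varepsilon=\varepsilon_{\mathrm{ex}}+\varepsilon_{H}+\varepsilon_{\mathcal{C}}$. The component $\varepsilon_{\mathcal{C}}$ enters $\delta\mathcal{W}$ through $\sigma_-\varepsilon_{\mathcal{C}}$, which is an algebraic shift. Use it, grade by grade, to set the $\sigma_-$-exact components of $\mathcal{W}$ to zero. Next, project $\mathcal{D}\mathcal{W}=0$ at grade $q$ onto the part not annihilated by $\sigma_-$ in $(p+1)$-forms, i.e.\ the $\sigma_-$-image of $\mathcal{C}$-components of $\mathcal{W}$ at grade $q+1$. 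This determines $\mathcal{W}_{\mathcal{C}}^{q+1}$ algebraically via $-\sigma_-^{-1}(D_L\mathcal{W}^q+\sigma_+\mathcal{W}^{q-1})$. Induction on $q$ shows that every component of $\mathcal{W}$ is either Stueckelberg (exact), auxiliary (in $\mathcal{C}$, expressed through derivatives of lower ones), or in $\mathcal{H}^p$. This proves 2): primary fields are spanned by $H^p(\sigma_-)$.

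\textbf{Step 2: differential gauge symmetries.} The parameters that remain after fixing the Stueckelberg gauge are those preserving the gauge condition. Their $\mathcal{C}$-part is then determined by compensating $D_L\varepsilon+\sigma_+\varepsilon$ of lower grades. The exact part $\varepsilon_{\mathrm{ex}}=\sigma_-\chi$ can be removed by a gauge-for-gauge transformation $\varepsilon\to\varepsilon+\mathcal{D}\chi$, using $\mathcal{D}^2=0$. So the independent differential parameters are $\varepsilon_H\in H^{p-1}(\sigma_-)$, which proves 1).

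\textbf{Step 3: equations, the main obstacle.} After Step 1 all remaining content of $R=\mathcal{D}\mathcal{W}$ lies in components that are not in $\mathrm{Im}\,\sigma_-$. The delicate point is showing that the $\mathcal{C}$-components of $R$ (non-closed, degree $p+1$) vanish automatically once the lower equations hold. For this use the Bianchi identity $\mathcal{D}R=0$ by induction on grading. If $R$ vanishes below grade $q$, then the lowest-grade part of the Bianchi identity gives $\sigma_-R^q=0$. Hence $R^q\in\ker\sigma_-$; its exact part has already been removed by the auxiliary-field choice, and its $\mathcal{C}$-part is zero. The only remaining conditions are the $\mathcal{H}^{p+1}$-projections. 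These are gauge-invariant, since $R$ is invariant under $\delta\mathcal{W}=\mathcal{D}\varepsilon$, and are differential operators on the primary fields because the auxiliary fields are expressed through derivatives. Independence holds because distinct cohomology classes are not related by the Bianchi identity at the lowest level. This gives 3), namely that the operators are spanned by $H^{p+1}(\sigma_-)$.

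The main care is needed in this induction, specifically in ensuring that the gauge-fixing and auxiliary elimination are compatible with boundedness of the grading from below. The boundedness is what makes the recursion well-founded.
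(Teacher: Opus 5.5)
Your proposal is correct and follows the same route as the paper's sketch of the Shaynkman--Vasiliev argument. Both proceed grade by grade: $\sigma_-$-exact (Stueckelberg) and non-closed (auxiliary) components are eliminated, and the Bianchi identity $\mathcal{D}R=0$ forces the lowest surviving component of $R$ into $\ker\sigma_-$, so only its cohomology part gives genuine equations; you just make the splitting $\mathrm{Im}\,\sigma_-\oplus\mathcal{H}\oplus\mathcal{C}$ explicit. One slip to fix: in Step 1, the projection of $\mathcal{D}\mathcal{W}=0$ that you solve for the auxiliary fields is the projection onto $\mathrm{Im}\,\sigma_-$, which is annihilated by $\sigma_-$, not onto ``the part not annihilated by $\sigma_-$''---your Step 3 already uses the correct version.
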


More generally, higher cohomology $H^{k}(\sigma_-)$ with $k>p+1$ describes Bianchi identities for dynamical equations at $k=p+2$ and Bianchi for Bianchi identities at $k>p+2$ \cite{Vasiliev:2009ck}. Similarly, the lower cohomology $H^{k}(\sigma_-)$ with $k<p-1$ describes gauge for gauge differential symmetries.

Note that if $H^{p+1}(\sigma_-) = 0$, the equation (\ref{e: general unfolded}) contains only constraints which express auxiliary fields via derivatives of the dynamical fields, imposing no restrictions on the latter. For non-empty $H^{p+1}(\sigma_-)$, the equation (\ref{e: general unfolded}) implies that the cohomology vanish, which imposes some differential equations on the primary fields. The addition of some cohomology representative on the r.h.s. of (\ref{e: general unfolded}) lifts an equation and relaxes the dynamics of the system. If $D_L$ is a first order differential operator (which is true in HS applications) then, if $H^{p+1}(\sigma_-)$ is nonzero in the grade $k$ sector, the associated differential equations on a grade $l$ dynamical field are of order $k + 1 - l$. The same rule determines the number of derivatives in a genuine gauge transformation.

In the context of a HS theory $\sigma_-$-cohomology analysis reduces to the analysis of the one-form $\omega$ sector and zero-form $C$ sector. In the first case $H^{2}(\sigma_-)$ consists of two types of 2-cocycles: equation cocycles and the Weyl cocycle. Gauge-invariant differential operators corresponding to the equation cocycles are genuine Fronsdal or partially massless differential field equations. To derive these equations explicitly one has to eliminate all auxiliary fields expressing them as derivatives of primary fields. The Weyl cocycle is a Chevalley-Eilenberg cohomology that glues a one-form and zero-form modules. The gluing is performed via a Weyl-like tensor component of $C$ (for spin 2 it is indeed the linearized Weyl-tensor). If the Chevalley-Eilenberg cocycle is set to zero, the corresponding HS gauge one-forms are pure gauge (it is equivalent to setting the generalized Weyl tensor to zero, preventing nontrivial dynamic). For a deeper discussion of Chevalley-Eilenberg cohomology in the HS context see \cite{Vasiliev:2007yc}.

As a vector space $V$ is usually a module of some Lie algebra $\mathfrak{g}$, it is important to note that a correct physical interpretation of an unfolded system (\ref{e: general unfolded}) can be achieved only if $\sigma_-$ is defined on $\mathfrak{g}$ indecomposable modules. If that is not the case, while the analysis above still holds true, a mixing of independent fields occurs that significantly complicates their identification and physical interpretation of the system. Therefore, in the rest of the paper we will have to decompose all modules under consideration into $\mathfrak{so}(3,2)$ indecomposable (irreducible if possible) modules with a help of dual algebras.

\section{One-form module  $(adj \otimes adj)$ }
\label{sec: one-form adj-adj}

We start the analysis of $B_2$ CHS theory spectrum with the case of one-forms $\omega$ along $I_1 I_2$ valued in a $(adj \otimes adj)$ as this sector contains copies of symmetric massless fields. Of the component fields contained in $\omega(Y_1,Y_2,I_1 I_2;\hat{K}|x)$ only 8 correspond to the $(adj\otimes adj)$-sector due to the multiplicity in the product of reflection matrices discussed in section \ref{B_2CHS}. We will proceed with the equation on $\omega(Y_1,Y_2|x)*I_1 I_2$ as the equations on the other 7 $(adj \otimes adj)$ one-forms have the same form:
\medmuskip=2mu
\begin{align} \label{e: adj-adj omega}
    &\mathcal{D}\omega(y_1, y_2, \bar{y}_1, \bar{y}_2,I_1 I_2 | x) \equiv\left[D_L + e^{\alpha \dot\alpha} \sum_{i=1}^{2}(\bar{y}_{\dot\alpha i} \dd_{\alpha i} + y_{\alpha i} \bar{\dd}_{\dot\alpha i})\right] \omega(y_1, y_2, \bar{y}_1, \bar{y}_2,I_1 I_2 | x) = \nonumber\\
    =&-\frac{i \eta_1}{2}\bar{H}_{\dot\alpha \dot\beta}\bar{\dd}^{\dot\alpha}_1\bar{\dd}^{\dot\beta}_1 C(0,y_2,\bar{y}_1,\bar{y}_2,I_1 I_2;\hat{k}_1 | x) * \hat{k}_1 -\frac{i \eta_1}{2}\bar{H}_{\dot\alpha \dot\beta}\bar{\dd}^{\dot\alpha}_2\bar{\dd}^{\dot\beta}_2 C(y_1,0,\bar{y}_1,\bar{y}_2,I_1 I_2; \hat{k}_2 |x)*\hat{k}_2 - \nonumber\\ 
    &-\frac{i \eta_2}{2}\bar{H}_{\dot\alpha \dot\beta} \bar{\dd}_-^{\dot\alpha} \bar{\dd}_-^{\dot\beta} C(y_+,0,\bar{y}_+,\bar{y}_-,I_1 I_2;\hat{k}_{12}|x)*\hat{k}_{12} -\nonumber\\
    &- \frac{i \eta_2}{2}\bar{H}_{\dot\alpha \dot\beta} \bar{\dd}_+^{\dot\alpha} \bar{\dd}_+^{\dot\beta} C(0,y_-,\bar{y}_+,\bar{y}_-,I_1 I_2;\hat{k}^+_{12}|x)*\hat{k}^+_{12} + \text{c.c.}\,,
\end{align}
\medmuskip=4mu
where zero-forms $C$ take values in one of the tensor products of the standard twisted-adjoint and adjoint modules, and variables $Y_{\pm} = \frac{1}{\sqrt{2}}(Y_1 \pm Y_2)$.

As discussed in previous sections, extraction of the dynamical content of (\ref{e: adj-adj omega}) amounts to the analysis of $H^\bullet(\sigma_-)$. Consider the ring of differential forms $\mathfrak{R} = \Lambda^\bullet(M)\otimes \mathbb{C}[[y_1,y_2,\bar y_1, \bar y_2]]$ valued in $Y_i$ series. We leave aside the question of the function class that preserves the star-product algebra and locality in this linear analysis, though it is obviously a highly important topic for moving to higher orders. The homogeneous elements of this ring are
\begin{equation}\label{e: hom one-form}
    \omega_{n,\bar{n}; \,m,\bar{m}}(Y_1,Y_2\,|x,dx) = \omega^{\alpha(n),\dot\alpha(\bar n)| \beta(m), \dot\beta(\bar m)}(x,dx)\, y_{1\alpha(n)}\,\bar y_{1\dot\alpha(\bar{n})}y_{2\beta(m)}\,\bar y_{2\dot\beta(\bar{m})}\,,
\end{equation}
where indices $(\alpha,\dot{\alpha})$ and $(\beta,\dot{\beta})$ are not related by any symmetry, i.e., coefficients of the series belong to $\operatorname{Irrep}_{\mathfrak{so}(3,1)}\otimes \operatorname{Irrep}_{\mathfrak{so}(3,1)}$. From the covariant derivative in (\ref{e: adj-adj omega}) it is obvious that $\mathfrak{so}(3,2)$ acts on $\mathfrak{R}$ as
\begin{align}
P_{\alpha \dot{\alpha}} &= \delta^{ij}(y_{\alpha i}\bar{\dd}_{\dot{\alpha}j} + \dd_{\alpha i}\bar{y}_{\dot{\alpha}j})\,,\label{e: so32 adj-adj 1}\\
L_{\alpha\alpha} &= \delta^{ij}y_{i\alpha}\dd_{j\alpha}\,, \quad\bar{L}_{\dot{\alpha}\dot{\alpha}} = \delta^{ij}\bar{y}_{i\dot{\alpha}}\bar{\dd}_{j\dot{\alpha}}\,. \label{e: so32 adj-adj 2}
\end{align}
This action preserves the total degree of any monomial, thus as a $\mathfrak{so}(3,2)$ module the ring of $Y_i$ series decomposes into the direct sum of differential forms valued in homogeneous polynomials $\operatorname{PolyHom}(Y_1,Y_2; n)$ of total degree $n$
\begin{equation}\label{RingPolyHom}
    \mathfrak{R} \simeq \oplus_{n = 0}^\infty\,  \Lambda^\bullet(M)\otimes \operatorname{PolyHom}(Y_1,Y_2; n)\,.
\end{equation}

The $(adj\otimes adj)$ $\mathfrak{so}(3,2)$-module is reducible and, therefore, each $\operatorname{PolyHom}(Y_1,Y_2; n)$ should be decomposed into the sum of finite dimensional $\mathfrak{so}(3,2)$-irreps. There is an isomorphism $\mathfrak{so}(3,2) \simeq \mathfrak{sp}(4)$ that can be observed directly in our case if  spinor two-component indices ($\alpha\,, \dot{\alpha}$) are unified into a single $\mathfrak{sp}(4)$ index $\mathcal{A}\in \{1\,,\dots\,,4\}$. All finite-dimensional $\mathfrak{sp}(4)$-irreps are exhausted by Young diagrams of at most two rows. Decomposition of $\operatorname{PolyHom}(Y_1,Y_2; n)$ into the direct sum of $\mathfrak{sp}(4)$ Young diagrams can be performed with the help of dual algebra.

One can check that the following operators bilinear in $(Y^\mathcal{A}_i\,, \dd_{i\mathcal{A}})$ commute with $\mathfrak{so}(3,2)\simeq \mathfrak{sp}(4)$ generated by (\ref{e: so32 adj-adj 1})--(\ref{e: so32 adj-adj 2})
\begin{align}
E_h &= y_1^\alpha \dd_{2\alpha} + \bar{y}_1^{\dot{\alpha}} \bar{\dd}_{2\dot{\alpha}} = Y_1^\mathcal{A} \dd_{2\mathcal{A}}\,, \quad F_h = y_2^\alpha \dd_{1\alpha} + \bar{y}_2^{\dot{\alpha}} \bar{\dd}_{1\dot{\alpha}} = Y_2^\mathcal{A} \dd_{1\mathcal{A}}\,, \label{e: sl2h raise-lower}\\
H_h &= N_1 - N_2 + \bar{N}_1 - \bar{N}_2 = Y_1^\mathcal{A} \dd_{1\mathcal{A}}- Y_2^\mathcal{A} \dd_{2\mathcal{A}}\,,\\
N_i &= y_i^\alpha \dd_{i\alpha}\,,\quad \bar{N}_i = \bar{y}_i^{\dot{\alpha}} \bar{\dd}_{i\dot{\alpha}}
\end{align}
and
\begin{align}
E_v &= y_{1\alpha} y_2^\alpha + \bar{y}_{1\dot{\alpha}} \bar{y}_2^{\dot{\alpha}} = Y_{1\mathcal{A}} Y_{2}^{\mathcal{A}}\,, \quad F_v = \dd_{2\alpha} \dd_1^\alpha + \bar{\dd}_{2\dot{\alpha}} \bar{\dd}_1^{\dot{\alpha}} = \dd_{2\mathcal{A}} \dd_1^\mathcal{A}\,, \\
H_v &= N_1 + N_2 + \bar{N_1} + \bar{N_2} + 4 = Y_i^\mathcal{A} \dd_{j\mathcal{A}} \delta^{ij} + 4\,.
\end{align}
Operators $\{E_h\,, F_h\,, H_h\}$ and $\{E_v\,, F_v\,, H_v\}$ commute with each other and form $\mathfrak{sl}^h(2)\oplus \mathfrak{sl}^v(2)$ with following non-trivial commutation relations:
\begin{equation}
\begin{aligned}[c]
[E_h, F_h] &= H_h\,, \\
[H_h, E_h] &= 2E_h\,, \\
[H_h, F_h] &= -2F_h\,,
\end{aligned}
\qquad 
\begin{aligned}[c]
[E_v, F_v] &= H_v\,, \\
[H_v, E_v] &= 2E_v\,, \\
[H_v, F_v] &= -2F_v\,.
\end{aligned}
\end{equation}
  
The correspondence between (highest-weight, lowest-weight) vectors (from here on out HW, LW vectors) of $\mathfrak{sl}^h(2)\oplus \mathfrak{sl}^v(2)$ and $\mathfrak{sp}(4)$ two-row Young diagram is quite transparent. Consider homogeneous polynomials
\begin{equation}
    \psi_{n+m}(Y_1\,,Y_2) = \psi_{\mathcal{A}(n)|\mathcal{B}(m)} Y^{\mathcal{A}(n)}_1 Y^{\mathcal{B}(m)}_2\,.
\end{equation}
Then
\begin{align}
    E_h \psi_{n+m}(Y_1,Y_2) &= Y_1^\mathcal{A} \dd_{2\mathcal{A}}\psi_{n+m}(Y_1,Y_2) = 0 \text{: Young condition on $\psi_{\mathcal{A}(n)|\mathcal{B}(m)}$\,,}\\
    F_v \psi_{n+m}(Y_1,Y_2) &= \dd_{2\mathcal{A}} \dd_1^\mathcal{A} \psi_{n+m}(Y_1,Y_2) = 0  \text{: $\mathfrak{sp}(4)$ tracelessness of $\psi_{\mathcal{A}(n)|\mathcal{B}(m)}$\,,}\\
    H_h \psi_{n+m}(Y_1,Y_2) &= (n - m)\psi_{n+m}(Y_1,Y_2) = \lambda_h \psi_{n+m}(Y_1,Y_2)  \text{: row length}\,,\\
    H_v \psi_{n+m}(Y_1,Y_2) &= (n + m + 4)\psi_{n+m}(Y_1,Y_2) = \lambda_v \psi_{n+m}(Y_1,Y_2)   \text{: number of boxes}+4\,,
\end{align}
\begin{equation}
\psi_{n+m}(Y_1,Y_2) \Longleftrightarrow
\begin{picture}(10,18)(0,7)
{
\put(05,00){\line(1,0){35}}%
\put(05,10){\line(1,0){35}}%
\put(05,0){\line(0,1){10}}%
\put(40,0){\line(0,1){10}}%
\put(17,3){\scriptsize  ${m}$}%
}
\end{picture}\begin{picture}(60,18)(10,7)
{\put(35,13){\scriptsize  ${n}$}
 \put(05,20){\line(1,0){60}}%
\put(05,10){\line(1,0){60}}%
\put(05,10){\line(0,1){10}}%
\put(65,10){\line(0,1){10}}%
}
\end{picture}
\bigg\vert_{\mathfrak{sp}(4)} \Longleftrightarrow 
\begin{picture}(10,18)(0,7)
{
\put(05,00){\line(1,0){65}}%
\put(05,10){\line(1,0){65}}%
\put(05,0){\line(0,1){10}}%
\put(70,0){\line(0,1){10}}%
\put(17,3){\scriptsize  ${(n-m)/2}$}%
}
\end{picture}\begin{picture}(75,18)(10,7)
{\put(35,13){\scriptsize  ${(n+m)/2}$}
 \put(05,20){\line(1,0){90}}%
\put(05,10){\line(1,0){90}}%
\put(05,10){\line(0,1){10}}%
\put(95,10){\line(0,1){10}}%
}
\end{picture}\quad \bigg\vert_{\mathfrak{so}(3,2)}\,.
\end{equation}

Therefore, there is a one-to-one correspondence between the set of (HW, LW) vectors of $\mathfrak{sl}^h(2)\oplus \mathfrak{sl}^v(2)$ in $\mathfrak{R}$ and $\mathfrak{sp}(4)$-irreps. Any element of $\mathfrak{R}$ can be represented as a linear combination of $\mathfrak{sl}^h(2)\oplus \mathfrak{sl}^v(2)$ (HW, LW) vectors and their descendants as per the Howe duality \cite{Howe:1989}, i.e., $\operatorname{Irrep}^{\lambda_h, \lambda_v}_{\mathfrak{sl}(2)\oplus \mathfrak{sl}(2)}$ constitute a multiplicity space of $\operatorname{Irrep}^{\lambda_h, \lambda_v}_{\mathfrak{sp}(4)}$. In other words, the ring of differential forms $\mathfrak{R}$ decomposes into the sum of tensor products of $\mathfrak{sl}^h(2)\oplus \mathfrak{sl}^v(2)$ and $\mathfrak{sp}(4)$ irreps defined by two weights $(\lambda_h,\lambda_v)$
\begin{equation}\label{e: Howe duality adj-adj}
    \mathfrak{R} \simeq \bigoplus_{\substack{\lambda_h,\lambda_v}} \operatorname{Irrep}^{\lambda_h, \lambda_v}_{\mathfrak{sl}(2)\oplus \mathfrak{sl}(2)} \otimes \operatorname{Irrep}^{\lambda_h, \lambda_v}_{\mathfrak{sp}(4)}\,.
\end{equation}

One can solve the $\mathfrak{sl}^h(2)\oplus \mathfrak{sl}^v(2)$ (HW, LW) conditions explicitly. The Young condition is satisfied by polynomials of the form
\begin{equation}\label{e: e1 HW}
    \psi(Y_1,Y_2) = \sum_{\lambda_h,r,p,q}\sum_{n+m=\lambda_h}\psi^{\lambda_h}{}_{\alpha(n+r),\dot{\alpha}(m+r)}y_1^{\alpha(n)}\bar{y}_1^{\dot{\alpha}(m)}X^{\alpha\dot{\alpha}(r)}\zeta^p \bar{\zeta}{}^q,
\end{equation}
where
\begin{gather}
    X^{\alpha\dot{\alpha}} = (y_1 \bar{y}_2 - y_2 \bar{y}_1)^{\alpha\dot{\alpha}}\,, \label{e: X bispinor}\\
    \zeta = y_{1\alpha} y_2^\alpha\,, \quad \bar{\zeta}= \bar{y}_{1\dot{\alpha}} \bar{y}_2^{\dot{\alpha}}\,. \label{e: traces}
\end{gather}

The choice of (HW, LW) makes the solutions biased towards the $Y_1$ variable as opposed to (LW, LW), which would be skewed towards $Y_2$. The whole $\mathfrak{sl}^h(2)\oplus \mathfrak{sl}^v(2)$ module describes all possible polynomials of $Y_{1,2}$, as should be obvious from the form of raising/lowering operators (\ref{e: sl2h raise-lower}).

Let us note here that 
\medmuskip=3mu
\begin{align}
    F_h \bigg(y_1^{\alpha(n)}\bar{y}_1^{\dot{\alpha}(m)}X^{\alpha\dot{\alpha}(r)}\zeta^p \bar{\zeta}{}^q\bigg) &= n y_1^{\alpha(n-1)}y_2^\alpha\bar{y}_1^{\dot{\alpha}(m)}X^{\alpha\dot{\alpha}(r)}\zeta^p \bar{\zeta}{}^q + m y_1^{\alpha(n)}\bar{y}_1^{\dot{\alpha}(m-1)}\bar{y}_2^{\dot{\alpha}}X^{\alpha\dot{\alpha}(r)}\zeta^p \bar{\zeta}{}^q \,,\\
    H_h \bigg(y_1^{\alpha(n)}\bar{y}_1^{\dot{\alpha}(m)}X^{\alpha\dot{\alpha}(r)}\zeta^p \bar{\zeta}{}^q\bigg) &= (n+m) y_1^{\alpha(n)}\bar{y}_1^{\dot{\alpha}(m)}X^{\alpha\dot{\alpha}(r)}\zeta^p \bar{\zeta}{}^q\,,\\
    H_v\bigg(y_1^{\alpha(n)}\bar{y}_1^{\dot{\alpha}(m)}X^{\alpha\dot{\alpha}(r)}\zeta^p \bar{\zeta}{}^q \bigg) &=(n+m+2r+2p+2q+4)y_1^{\alpha(n)}\bar{y}_1^{\dot{\alpha}(m)}X^{\alpha\dot{\alpha}(r)}\zeta^p \bar{\zeta}{}^q\,,
\end{align}
\medmuskip=4mu
i.e., $\mathfrak{sl}^h(2)$ do not act on variables $\{X, \zeta, \bar{\zeta}\}$.
Since $\lambda_h = n+m$ and the constant term $4$ are explicitly present in $\lambda_v$ let us redefine $\lambda_v$ such that $\lambda_v = 2(p+q+r)$ for further convenience. It does not affect the subsequent analysis as redefined $(\lambda_h,\lambda_v)$ encode the same information as genuine $\mathfrak{sl}^h(2)\oplus \mathfrak{sl}^v(2)$ weights.

The tracelessness condition is then solved within the (\ref{e: e1 HW}) class:
\begin{equation}\label{e: e1 f2 HW LW}
    \Psi^{\vec{\lambda}}{}_{n,m,r}^{l}(Y_1,Y_2) = \left( \sum_{p+q=l}\Gamma^{p,q}_{n,m,r}\zeta^p \bar{\zeta}{}^q\right)\psi^{\vec{\lambda}}{}_{\alpha(n+r),\dot{\alpha}(m+r)}y_1^{\alpha(n)}\bar{y}_1^{\dot{\alpha}(m)}X^{\alpha\dot{\alpha}(r)}\,,
\end{equation}
where
\begin{align}
    \Gamma^{p,q}_{n,m,r} &= (-1)^q \frac{l!}{p! q! (n+r+p+1)!(m+r+q+1)!}\,,\\
    \vec{\lambda} &= (\lambda_h, \lambda_v): \quad \lambda_h = n+m\,, \quad \lambda_v = 2(l+r)\,.
\end{align}

Polynomials (\ref{e: e1 f2 HW LW}) exhaust the full set of $\mathfrak{sp}(4)$-irreps present in the decomposition of $\mathfrak{R}$. Now we can restrict one-forms $\omega$ to this class of polynomials and consider an action of $\mathcal{D}$ on each $\mathfrak{sp}(4)$-irrep to determine the physical content of the $B_2$ system in the $(adj \otimes adj)$ sector following section \ref{sec: Interpretation of cohomology}.

A naive approach of defining a grading and thus the $\sigma_-$ operator using Euler operators $y_i^\mathcal{A} \dd_{i\mathcal{A}}$ proves to be fruitless, as $\sigma_-$ defined in such a way does not commute with $\mathfrak{sl}^h(2)\oplus \mathfrak{sl}^v(2)$ operators, and thus does not act within a single irreducible module. This can be understood by examining the diagonal operators of $\mathfrak{sl}^h(2)\oplus \mathfrak{sl}^v(2)$, which depend on either the sum or the difference in the numbers of $Y_i$ oscillators. Still, we can faithfully determine the action of $\sigma_-$ on the two-row Young diagrams as they are completely determined by a set of spin-tensors. Indeed, given $\mathfrak{sp}(4)$-irrep can be decomposed into the direct sum of $\mathfrak{so}(3,1)$-irreps as $\mathfrak{so}(3,1)\simeq \mathfrak{sl}(2, \mathbb{C})$ is a subgroup of $\mathfrak{so}(3,2)\simeq \mathfrak{sp}(4)$
\begin{align}\label{e: restriction to subgroup}
    \operatorname{Irrep}^{\vec{\lambda}}_{\mathfrak{so}(3,2)} &= \bigoplus \operatorname{Irrep}^{\vec{\lambda}}_{\mathfrak{so}(3,2)\downarrow \mathfrak{so}(3,1)}\,,
\\
\begin{picture}(10,18)(0,7)
{
\put(05,00){\line(1,0){35}}%
\put(05,10){\line(1,0){35}}%
\put(05,0){\line(0,1){10}}%
\put(40,0){\line(0,1){10}}%
\put(17,3){\scriptsize  ${M}$}%
}
\end{picture}\begin{picture}(50,18)(10,7)
{\put(35,13){\scriptsize  ${N}$}
 \put(05,20){\line(1,0){60}}%
\put(05,10){\line(1,0){60}}%
\put(05,10){\line(0,1){10}}%
\put(65,10){\line(0,1){10}}%
}
\end{picture}\quad \bigg\vert_{\mathfrak{so}(3,2)} &= \bigoplus_{\substack{i = 0}}^{N-M}\bigoplus_{\substack{j = 0}}^{M}
\begin{picture}(10,18)(0,7)
{
\put(05,00){\line(1,0){35}}%
\put(05,10){\line(1,0){35}}%
\put(05,0){\line(0,1){10}}%
\put(40,0){\line(0,1){10}}%
\put(20,3){\scriptsize  ${j}$}%
}
\end{picture}\begin{picture}(50,18)(10,7)
{\put(25,13){\scriptsize  ${M + i}$}
 \put(05,20){\line(1,0){60}}%
\put(05,10){\line(1,0){60}}%
\put(05,10){\line(0,1){10}}%
\put(65,10){\line(0,1){10}}%
}
\end{picture}\quad \bigg\vert_{\mathfrak{so}(3,1)}
\,,
\end{align}
where $N = (\lambda_h + \lambda_v)/2$ and $M = \lambda_h/2$.

To determine how covariant derivative $\mathcal{D}$ acts on spin-tensors $\omega^{\vec{\lambda}}{}_{\alpha(n+r),\dot{\alpha}(m+r)}$, let us introduce two auxiliary spinors $u^\alpha, \bar u^{\dot \alpha}$ with corresponding derivatives $\dd_\alpha\,, \bar{\dd}_{\dot{\alpha}}$ and use a generating function
\begin{equation}\label{e: rank-one adjadj}
\omega^{\vec{\lambda}}{}_{n,m,r}^{l} = \frac{1}{(n+r)!(m+r)!}\left( \sum_{p+q=l}\Gamma^{p,q}_{n,m,r}\zeta^p \bar{\zeta}{}^q\right)\left(\dd_\alpha \bar{\dd}_{\dot{\alpha}}X^{\alpha\dot{\alpha}}\right)^r \left(\dd_\alpha y_1^\alpha\right)^n \left( \bar{\dd}_{\dot{\alpha}}\bar{y}_1^{\dot{\alpha}}\right)^m \omega^{\vec{\lambda}}{}^l_{n,m,r}(u,\bar{u})
\end{equation}
with a spin-tensor $\omega^{\vec{\lambda}}{}_{\alpha(n+r),\dot{\alpha}(m+r)}$ encoded in a polynomial of $(u,\bar{u})$
\begin{equation}\label{e: u spin tensor}
    \omega^{\vec{\lambda}}{}^l_{n,m,r}(u,\bar{u}) = \omega^{\vec{\lambda}}{}_{\alpha(n+r),\dot{\alpha}(m+r)}u^{\alpha(n+r)}\bar{u}^{\dot{\alpha}(m+r)}\,.
\end{equation}
Since $\omega^{\vec{\lambda}}{}^l_{n,m,r}(u,\bar{u})$ is a polynomial of degrees $(n+r)$ with respect to $u$ and $(m+r)$ with respect to $\bar{u}$, there is no need to set $u=\bar{u} = 0$ in (\ref{e: rank-one adjadj}).
This trick allows us to transform an equation on rank-two field (\ref{e: adj-adj omega}) into the equation on rank-one field with a much simpler definition of the $\sigma_-$ operator.

While the Lorentz derivative $D_L$ (\ref{e: DL}) \enquote{commutes} with the differential operator prefactor in (\ref{e: rank-one adjadj}) and acts on $\omega^{\vec{\lambda}}{}^l_{n,m,r}(u,\bar{u})$ in a standard way
\begin{align} 
    D_L \omega^{\vec{\lambda}}{}^l_{n,m,r}(u,\bar{u}) &= \left(\d_x  + \omega(u\,,\dd) + \bar \omega(\bar{u}\,,\bar{\dd})\right)\omega^{\vec{\lambda}}{}^l_{n,m,r}(u,\bar{u})\,,
\\
    \omega (u\,,\dd) &= \omega^{\alpha \beta} u_{\alpha }\dd_{\beta}\,, \quad \bar{\omega}(\bar{u}\,,\bar{\dd}) = \bar \omega^{\dot\alpha \dot\beta} \bar u_{\dot\alpha }\bar\dd_{\dot\beta }\,,
\end{align}
that is not so for the momentum operator $P_{\alpha\dot{\alpha}}$
\thickmuskip=1mu 
\medmuskip=1mu
\thinmuskip=1mu
\begin{align}
    &e^{\alpha\dot{\alpha}}P_{\alpha\dot{\alpha}}\omega^{\vec{\lambda}}{}_{n,m,r}^{l}(Y_1, Y_2) =\nonumber
    \\
    &=\left[\left( \sum_{p+q=l}\Gamma^{p,q}_{n-1,m+1,r}\zeta^p \bar{\zeta}{}^q\right)\left(\dd_\alpha \bar{\dd}_{\dot{\alpha}}X^{\alpha\dot{\alpha}}\right)^r \left(\dd_\alpha y_1^\alpha\right)^{n-1} \left( \bar{\dd}_{\dot{\alpha}}\bar{y}_1^{\dot{\alpha}}\right)^{m+1}n(m+r+l+2)e^{\alpha\dot{\alpha}}\bar{u}_{\dot{\alpha}}\dd_\alpha + \nonumber\right.
    \\
    &\left.+\left( \sum_{p+q=l}\Gamma^{p,q}_{n+1,m-1,r}\zeta^p \bar{\zeta}{}^q\right)\left(\dd_\alpha \bar{\dd}_{\dot{\alpha}}X^{\alpha\dot{\alpha}}\right)^r \left(\dd_\alpha y_1^\alpha\right)^{n+1} \left( \bar{\dd}_{\dot{\alpha}}\bar{y}_1^{\dot{\alpha}}\right)^{m-1}m(n+r+l+2)e^{\alpha\dot{\alpha}}\bar{\dd}_{\dot{\alpha}}u_\alpha -\nonumber\right.
    \\
    &\left.-\left( \sum_{p+q=l+1}\Gamma^{p,q}_{n,m,r-1}\zeta^p \bar{\zeta}{}^q\right)\left(\dd_\alpha \bar{\dd}_{\dot{\alpha}}X^{\alpha\dot{\alpha}}\right)^{r-1} \left(\dd_\alpha y_1^\alpha\right)^{n} \left( \bar{\dd}_{\dot{\alpha}}\bar{y}_1^{\dot{\alpha}}\right)^{m}r(n+m+r+1)e^{\alpha\dot{\alpha}}\bar{\dd}_{\dot{\alpha}}\dd_\alpha + \nonumber\right.
\end{align}
\begin{align}\label{e: P adjadj}
    &\left.+\left( \sum_{p+q=l-1}\Gamma^{p,q}_{n,m,r+1}\zeta^p \bar{\zeta}{}^q\right)\left(\dd_\alpha \bar{\dd}_{\dot{\alpha}}X^{\alpha\dot{\alpha}}\right)^{r+1} \left(\dd_\alpha y_1^\alpha\right)^{n} \left( \bar{\dd}_{\dot{\alpha}}\bar{y}_1^{\dot{\alpha}}\right)^{m}l(n+m+2r+l+3)e^{\alpha\dot{\alpha}}\bar{u}_{\dot{\alpha}}u_\alpha\right]\times\nonumber
    \\
    &\times \frac{1}{(n+r+1)(m+r+1)} \omega^{\vec{\lambda}}{}^l_{n,m,r}(u,\bar{u})\,,
\end{align}
\thickmuskip=5mu 
\medmuskip=4mu
\thinmuskip=3mu
where the factorial prefactor in (\ref{e: rank-one adjadj}) has been absorbed into the $\omega^{\vec{\lambda}}{}_{\alpha(n+r),\dot{\alpha}(m+r)}$ for the future convenience of the analysis of the gluing between one-forms $\omega$ and zero-forms $C$ in the vertices, but is irrelevant to the $H^\bullet(\sigma_-)$ cohomology computation.

One can see that the momentum operator $P$ preserves the function class (\ref{e: e1 f2 HW LW}) and properly acts on the rank-one fields $\omega^{\vec{\lambda}}{}^l_{n,m,r}(u,\bar{u})$ (\ref{e: u spin tensor}). Therefore, $\mathcal{D}\omega(Y_1, Y_2)$ in (\ref{e: adj-adj omega}) is expressed via the rank-one fields in the form of
\begin{align}
&\sum_{\lambda_h\,,\lambda_v}\sum_{\substack{n+m=\lambda_h\\2(r+l)=\lambda_v}}\left(\sum_{p+q=l} \Gamma_{n, m,r}^{p, q} \zeta^p \bar{\zeta}{}^q\right)(\dd \bar{\dd} X)^r\left(\dd y_1\right)^n\left(\bar{\dd} \bar{y}_1\right)^m\Bigg\{D_L \omega^{\vec{\lambda}}{}^l_{n, m, r}(u, \bar{u})+\nonumber\\
&+\frac{(n+1)(m+r+l+1)}{(n+r+2)(m+r)} e^{\alpha \dot{\alpha}} \bar{u}_{\dot{\alpha}} \dd_\alpha \omega^{\vec{\lambda}}{}^l_{n+1, m-1, r}(u, \bar{u})+\nonumber\\
&+\frac{(m+1)(n+r+l+1)}{(n+r)(m+r+2)} e^{\alpha \dot{\alpha}} u_\alpha \bar{\dd}_{\dot{\alpha}} \omega^{\vec{\lambda}}{}^l_{n-1, m+1, r}(u, \bar{u})- \nonumber\\
&-\frac{(r+1)(n+m+r+2)}{(n+r+2)(m+r+2)} e^{\alpha \dot{\alpha}} \dd_\alpha \bar{\dd}_{\dot{\alpha}} \omega^{\vec{\lambda}}{}^{l-1}_{n, m, r+1}(u, \bar{u})+\nonumber\\
&+\frac{(l+1)(n+m+2 r+l+2)}{(n+r)(m+r)} e^{\alpha \dot{\alpha}} u_\alpha \bar{u}_{\dot{\alpha}} \omega^{\vec{\lambda}}{}^{l+1}_{n, m, r-1}(u, \bar{u})\Bigg\}\,,\label{e: lhs adj-adj pre-strip}
\end{align}
where useful notations are introduced
\begin{equation}\label{e: u notation}
    \dd\, \bar{\dd} \,X := \dd_\alpha \bar{\dd}_{\dot\alpha} X^{\alpha \dot{\alpha}}\,, \quad \dd\, y_1 := \dd_\alpha y^\alpha_1\,, \quad \bar{\dd}\, \bar{y}_1 := \bar{\dd}_{\dot\alpha} \bar{y}^{\dot\alpha}_1\,.
\end{equation}

Expression (\ref{e: lhs adj-adj pre-strip}) can be made even more concise by introducing a generating function with additional auxiliary variables $\chi,t$ used to encode the degree of $X^{\alpha \dot \alpha}$ and the total degree of traces $\zeta^p\bar{\zeta}{}^q$
\begin{equation}
    \omega(u,\bar{u},\chi,t) = \sum_{\lambda_h, \lambda_v}\sum_{\substack{n+m=\lambda_h\\ 2(r+l)=\lambda_v}}\omega^{\vec{\lambda}}{}^l_{n,m,r}(u\,, \bar{u}) \chi^r t^l\,.
\end{equation}

The action of the covariant derivative on this generating function is
\begin{align}
&D_L \omega(u, \bar{u}, \chi, t)+\left[\frac{\left(\hat{N}_u-\hat{N}_\chi+1\right)\left(\hat{\bar{N}}_u+\hat{N}_t+1\right)}{\left(\hat{N}_u+2\right) \hat{\bar{N}}_u} e^{\alpha \dot{\alpha}} \bar{u}_{\dot{\alpha}} \dd_\alpha+\nonumber\right.\\
&\left.+\frac{\left(\hat{\bar{N}}_u-\hat{N}_\chi+1\right)\left(\hat{N}_u+\hat{N}_t+1\right)}{\left(\hat{\bar{N}}_u+2\right) \hat{N}_u} e^{\alpha \dot{\alpha}} u_\alpha \bar{\dd}_{\dot{\alpha}} -\frac{\left(\hat{N}_u+\hat{\bar{N}}_u-\hat{N}_\chi+2\right)}{\left(\hat{N}_u+2\right)\left(\hat{\bar{N}}_u+2\right)} e^{\alpha \dot{\alpha}} \dd_\alpha\bar{\dd}_{\dot{\alpha}}\left(t \dd_\chi\right)+\nonumber\right.\\
&\left.+\frac{\left(\hat{N}_u+\hat{\bar{N}}_u+\hat{N}_t+2\right)}{\hat{N}_u \hat{\bar{N}}_u} e^{\alpha \dot{\alpha}} u_\alpha \bar{u}_{\dot{\alpha}}\left(\chi \dd_t\right)\right]\omega(u, \bar{u}, \chi, t)\,,
\end{align}
where
\begin{alignat}{2}
& \hat{N}_u=u^\alpha \frac{\dd}{\dd u^\alpha}\,, &\quad \hat{\bar{N}}_u&=\bar{u}^{\dot{\alpha}} \frac{\dd}{\dd \bar{u}^{\dot{\alpha}}}\,, \\
& \hat{N}_\chi=\chi \dd_\chi\,, &\quad \hat{N}_t&=t \dd_t\,.
\end{alignat}

Defining grading as $G=|\hat{N}_u-\hat{\bar{N}}_u|+2\hat{N}_\chi$ we obtain correct nilpotent $\sigma_-$ operator, that acts on $\mathfrak{sp}(4)$-irreps:
\begin{align}
&\sigma_-=\frac{\left(\hat{N}_u-\hat{N}_\chi+1\right)\left(\hat{\bar{N}}_u+\hat{N}_t+1\right)}{\left(\hat{N}_u+2\right) \hat{\bar{N}}_u} e^{\alpha \dot{\alpha}} \bar{u}_{\dot{\alpha}}\dd_\alpha\, \theta\left(\hat{N}_u-\hat{\bar{N}}_u\right)+\nonumber\\
&+\frac{\left(\hat{\bar{N}}_u-\hat{N}_\chi+1\right)\left(\hat{N}_u+\hat{N}_t+1\right)}{\left(\hat{\bar{N}}_u+2\right) \hat{N}_u} e^{\alpha \dot{\alpha}} u_\alpha \bar{\dd}_{\dot{\alpha}}\, \theta\left(\hat{\bar{N}}_u-\hat{N}_u\right)-\nonumber\\
&-\frac{\left(\hat{N}_u+\hat{\bar{N}}_u-\hat{N}_\chi+2\right)}{\left(\hat{N}_u+2\right)\left(\hat{\bar{N}}_u+2\right)} e^{\alpha \dot{\alpha}} \dd_\alpha \bar{\dd}_{\dot{\alpha}}\left(t \dd_\chi\right)\,,\label{e: sigma adj adj}
\end{align}

where Heaviside step function is defined as $\theta(0)=0$.

Note that first two terms reduce the difference in the number of dotted and undotted indices and the last term convert a pair of $(\alpha,\dot{\alpha})$ and $\chi$ into the \enquote{trace} $t$, i.e., the underlying spin-tensor shortens by one dotted and undotted indices.

After transferring the action of the covariant derivative $\mathcal{D}$ to the spin-tensors, this definition of $\sigma_-$ is an expected one, as it acts on the r.h.s. of (\ref{e: restriction to subgroup}) by cutting off boxes from $\mathfrak{so}(3,1)$ Young diagrams.

\subsection{Computation of $H^\bullet(\sigma_-)$}
\label{sec: cohomology computation adj-adj}

In this section we derive lower cohomology groups $H^{0,1,2}(\sigma_-)$ in the $(adj \otimes adj)$ sector, with $\sigma_-$ (\ref{e: sigma adj adj}). Instead of a generalized Hodge like approach to the cohomology calculation often used in a HS context \cite{Vasiliev:2009ck, Bychkov:2021zvd, Gelfond:2013lba}, we perform a straightforward extraction of $\ker \sigma_-$ and discard $\sigma_-$-exact forms. This is possible due to special properties of two-component spinors like the Schouten identity.

We stick to the following procedure:
\begin{itemize}
    \item Decompose a general $p$-form into Lorentz-irreducible $p$-forms;
    \item Act by $\sigma_-$ on a general $p$-form, decompose the result into Lorentz-irreducible $(p+1)$-forms and equate to zero;
    \item Extract a system of linear equations on the $p$-form coefficients. Solution provides a $\ker \sigma_-$;
    \item Go through each $\sigma_-$-closed form and either prove its non-exactness or represent it as a $\sigma_-$-exact form. 
\end{itemize}

Introduce a notation that will significantly simplify future formulas
\begin{equation}
    f_{a,b}(u,\bar{u}) := f_{\alpha(a),\dot{\alpha}(b)}u^{\alpha(a)}\bar{u}^{\dot{\alpha}(b)}\,.
\end{equation}

\subsubsection{$H^0(\sigma_-)$}
A general zero-form can be written down via auxiliary spinor variables $u^\alpha, \bar{u}^{\dot{\alpha}}$ and variables $t,\chi$ that preserve information about traces and bispinors $X^{\alpha \dot\alpha}$
\begin{equation}
    \mathcal{E}^{\vec{\lambda}} (u,\bar u, \chi,t) = \sum_{2(r+l) = \lambda_v} \;\sum_{n+m=\lambda_h} \mathcal{E}^{\vec{\lambda}}_{n+r, m+r} (u,\bar{u}) \chi^r t^l \,.
\end{equation}

Direct computation yields
\begin{align}
&\sigma_-\mathcal{E}^{\vec{\lambda}} (u,\bar u, \chi,t) = \sum_{\substack{2(r+l) = \lambda_v\\n+m=\lambda_h}}\Bigg\{\theta(n-m) \frac{n(m+r+l+2)}{(n+r+1)(m+r+1)}e(\dd,\bar{u}) + \nonumber\\
&+\theta(m-n) \frac{m(n+r+l+2)}{(n+r+1)(m+r+1)}e(u,\bar{\dd}) - \nonumber\\
&-\frac{(n+m+r+1)}{(n+r+1)(m+r+1)}e(\dd,\bar{\dd})\left(t \dd_\chi\right)\Bigg\}\mathcal{E}^{\vec{\lambda}}_{n+r, m+r} (u,\bar{u}) \chi^r t^l\,,\label{e: sigma zero-form}
\end{align}

where Lorenz-irreducible one-forms have been introduced
\begin{equation} \label{e: irreducible one-forms}
    e(u,\bar\dd) = e^{\beta \dot\beta} u_\beta \bar{\dd}_{\dot\beta}, \quad e(\dd,\bar u) = e^{\beta \dot\beta} \dd_\beta \bar{u}_{\dot\beta}, \quad e(u,\bar u) = e^{\beta \dot\beta} u_\beta \bar{u}_{\dot\beta}, \quad e(\dd,\bar\dd) = e^{\beta \dot\beta} \dd_\beta \bar{\dd}_{\dot\beta}\,.
\end{equation}
Since irreducible one-forms are linearly independent, the kernel can be easily found to be on the diagonal $n=m$, where both step functions vanish, supplemented by the condition $r = 0$
\begin{equation}
    \ker(\sigma_-) \vert_{\deg \Lambda = 0} = \left\{\mathcal{E}^{\vec{\lambda}}_{\alpha\left(\frac{\lambda_h}{2}\right), \dot{\alpha}\left(\frac{\lambda_h}{2}\right)} u^{\alpha\left(\frac{\lambda_h}{2}\right)} \bar{u}^{\dot{\alpha}\left(\frac{\lambda_h}{2}\right)} t^{\frac{\lambda_v}{2}}\right\}\,.
\end{equation}

As $\text{Im}(\sigma_-) \vert_{\deg \Lambda = 0} = \{\emptyset\}$ this yields
\begin{equation}
H^0(\sigma_-)=\left\{\mathcal{E}^{\vec{\lambda}}_{\frac{\lambda_h}{2}\,, \frac{\lambda_h}{2}}(u\,, \bar{u}) t^{\frac{\lambda_v}{2}} ; \quad \lambda_h, \lambda_v \in 2 \mathbb{Z}_{\geqslant 0}\right\}\,.
\end{equation}

\subsubsection{$H^1(\sigma_-)$}

Decomposition of a general one-form with respect to Lorenz-irreducible one-forms is
\begin{align}
&\omega^{\vec{\lambda}}(u,\bar{u},\chi,t)=\sum_{\substack{n+m=\lambda_h \\ 2(r+l)=\lambda_v}}\bigg\{\frac{1}{(n+r+1)(m+r+1)} e(\dd, \bar{\dd}) \omega_{A}^{\vec{\lambda}}{}_{n+r+1, m+r+1}(u, \bar{u}) \chi^r t^l + 
\nonumber\\
& + e(u, \bar{u}) \omega_{B}^{\vec{\lambda}}{}_{n+r-1,m+r-1}(u, \bar{u}) \chi^r t^l+\frac{1}{n+r+1} e(\dd, \bar{u}) \omega^{\vec{\lambda}}_{C}{}_{n+r+1, m+r-1}(u, \bar{u}) \chi^r t^l + \nonumber\\
& + \frac{1}{m+r+1} e(u, \bar{\dd}) \omega^{\vec{\lambda}}_{D}{}_{n+r-1, m+r+1}(u, \bar{u}) \chi^r t^l\bigg\}\,.
\end{align}

Here the Latin labels $(A,B,C,D)$ refer to different Lorenz-irreducible one-form components. Consequently, they have different numbers of dotted and undotted indices and act as independent terms for the purpose of finding a null solution.

Introduce basis two-forms
\begin{equation}
e^{\nu \dot{\nu}} \wedge e^{\lambda \dot{\lambda}}=\frac{1}{2} H^{\nu \lambda} \bar{\varepsilon}^{\dot{\nu} \dot{\lambda}}+\frac{1}{2} \bar{H}^{\dot{\nu} \dot{\lambda}} \varepsilon^{\nu \lambda}\,,
\end{equation}
where 
\begin{equation}
H^{\nu \lambda}=H^{(\nu \lambda)}:=e^\nu{}_{\dot{\gamma}} \wedge e^{\lambda \dot{\gamma}}, \quad \bar{H}^{\dot{\nu} \dot{\lambda}}=\bar{H}^{(\dot{\nu} \dot{\lambda})}:=e_\gamma{}^{\dot{\nu}} \wedge e^{\gamma \dot{\lambda}}\,.
\end{equation}

With the help of basis two-forms $\sigma_- \omega^{\vec{\lambda}}$ turns into
\begin{align}
&\sigma_-\omega^{\vec{\lambda}} (u,\bar u, \chi,t) = \sum_{\substack{2(r+l) = \lambda_v\\ n+m = \lambda_h}} \Bigg\{ \nonumber\\
& -\theta(n-m) \frac{n(m+r+l+2)}{2(n+r+1)^2(m+r+1)} H(\dd, \dd) \,
   \omega_{A}^{\vec{\lambda}}{}_{n+r+1, m+r+1}(u, \bar{u}) \chi^r t^l -\nonumber\\
& -\theta(m-n) \frac{m(n+r+l+2)}{2(n+r+1)(m+r+1)^2} \bar{H}(\bar{\dd}, \bar{\dd}) \,
   \omega_{A}^{\vec{\lambda}}{}_{n+r+1, m+r+1}(u, \bar{u}) \chi^r t^l +\nonumber\\
& +\theta(n-m) \frac{n(m+r+l+2)}{2(m+r+1)} \bar{H}(\bar{u}, \bar{u}) \,
   \omega_{B}^{\vec{\lambda}}{}_{n+r-1, m+r-1}(u, \bar{u}) \chi^r t^l +\nonumber \\
& +\theta(m-n) \frac{m(n+r+l+2)}{2(n+r+1)} H(u, u) \,
   \omega_{B}^{\vec{\lambda}}{}_{n+r-1, m+r-1}(u, \bar{u}) \chi^r t^l -\nonumber \\
& -\left[
     \frac{r(n+m+r+1)}{2(m+r+1)} \bar{H}(\bar{u}, \bar{\dd}) +
     \frac{r(n+m+r+1)}{2(n+r+1)} H(u, \dd)
   \right] 
   \omega_{B}^{\vec{\lambda}}{}_{n+r-1, m+r-1}(u, \bar{u}) \chi^{r-1} t^{l+1}+ \nonumber\\
& +\theta(m-n) \left[
     \frac{m(n+r+l+2)}{2(n+r+1)^2} H(u, \dd) - \right. \nonumber\\
&-\left. \frac{m(n+r+l+2)}{2(n+r+1)(m+r+1)} \bar{H}(\bar{u}, \bar{\dd})\right] 
   \omega_{C}^{\vec{\lambda}}{}_{n+r+1, m+r-1}(u, \bar{u}) \chi^r t^l -\nonumber\\
& -\frac{r(n+m+r+1)}{2(n+r+1)^2} H(\dd, \dd) \,
   \omega_{C}^{\vec{\lambda}}{}_{n+r+1, m+r-1}(u, \bar{u}) \chi^{r-1} t^{l+1} + \nonumber\\
& +\theta(n-m) \left[\frac{n(m+r+l+2)}{2(m+r+1)^2} \bar{H}(\bar{u}, \bar{\dd}) -\right. \nonumber&&\\
&-\left.\frac{n(m+r+l+2)}{2(n+r+1)(m+r+1)} H(u, \dd)\right] 
   \omega_{D}^{\vec{\lambda}}{}_{n+r-1, m+r+1}(u, \bar{u}) \chi^r t^l - \nonumber\\
& -\frac{r(n+m+r+1)}{2(m+r+1)^2} \bar{H}(\bar{\dd}, \bar{\dd}) \,
   \omega_{D}^{\vec{\lambda}}{}_{n+r-1, m+r+1}(u, \bar{u}) \chi^{r-1} t^{l+1}
\Bigg\}\,,\label{e: sigma one-form}&&
\end{align}

where the linearly independent Lorenz-irreducible two-forms are
\begin{align}
    &H(\dd, \dd) = H^{\alpha \beta}\dd_\alpha\dd_\beta\,, \quad H(u, \dd) = H^{\alpha \beta}u_\alpha\dd_\beta\,, \quad H(u, u) = H^{\alpha \beta}u_\alpha u_\beta\,,\nonumber\\
    &\bar{H}(\bar{\dd}, \bar{\dd}) = \bar{H}^{\dot{\alpha}\dot{\beta}}\bar{\dd}_{\dot{\alpha}} \bar{\dd}_{\dot{\beta}}\,, \quad \bar{H}(\bar{u}, \bar{\dd}) = \bar{H}^{\dot{\alpha}\dot{\beta}}\bar{u}_{\dot{\alpha}} \bar{\dd}_{\dot{\beta}}\,, \quad \bar{H}(\bar{u}, \bar{u}) = \bar{H}^{\dot{\alpha}\dot{\beta}}\bar{u}_{\dot{\alpha}} \bar{u}_{\dot{\beta}}\,.\label{basis 2 forms}
\end{align}

Linear independency of the irreducible two-forms allows us to split the equation $\sigma_- \omega^{\vec{\lambda}} = 0$ (\ref{e: sigma one-form}) into a system of equations that intertwine spin-tensors $\{\omega^{\vec{\lambda}}_A\,, \omega^{\vec{\lambda}}_B\,, \omega^{\vec{\lambda}}_C\,, \omega^{\vec{\lambda}}_D\}$ with different number of dotted and undotted indices. The presence of $\theta$-functions results in non-overlapping support with respect to parameters $n,m$ for most of their values, which significantly limits the pool of non-zero spin-tensors. Thus, the intertwining equations can be solved directly and the kernel of $\sigma_-$ can be fully determined.
\begin{align}
&\ker(\sigma_-)\vert_{\deg \Lambda = 1} = \Bigg\{ \lambda_h, \lambda_v \in 2 \mathbb{Z}_{\geqslant 0}\,, \quad n+m=\lambda_h\,, \quad 2(r+l)=\lambda_v:\nonumber\\
&\bullet \quad e(\dd, \bar{\dd}) \omega^{\vec{\lambda}}_{\frac{\lambda_h}{2}+r+1\,, \frac{\lambda_h}{2}+r+1}(u\,, \bar{u}) \chi^{r} t^l\,; \quad \bullet \quad e(u, \bar{u}) \omega^{\vec{\lambda}}_{\frac{\lambda_h}{2}-1\,, \frac{\lambda_h}{2}-1}(u\,, \bar{u}) t^{\frac{\lambda v}{2}}\,; \nonumber\\
&\bullet \quad \Bigg(e(u, \bar{u}) \chi t^{\frac{\lambda_v}{2}-1}+\frac{\lambda_h}{\left(\lambda_h+\lambda_v+2\right)}\{e(\dd, \bar{u})+e(u, \bar{\dd})\}t^{\frac{\lambda v}{2}}\Bigg) \omega^{\vec{\lambda}}_{\frac{\lambda_h}{2}\,, \frac{\lambda_h}{2}}(u\,,\bar{u})\,,\quad \lambda_v \geqslant 2\,;\nonumber\\
&\bullet \quad e(\dd,\bar{u})\omega^{\vec{\lambda}}_{n+1\,, m-1}(u\,,\bar{u}) t^{\frac{\lambda_v}{2}} = \frac{2m(n+2)}{(n+1)(2m+2+\lambda_v)}\sigma_-\left(\omega^{\vec{\lambda}}_{n+1\,, m-1}(u\,,\bar{u}) t^{\frac{\lambda_v}{2}}\right)\,, \; n \geq m\,; \nonumber\\
&\bullet \quad e(u,\bar{\dd})\omega^{\vec{\lambda}}_{n-1\,, m+1}(u\,,\bar{u}) t^{\frac{\lambda_v}{2}} = \frac{2n(m+2)}{(m+1)(2n+2+\lambda_v)} \sigma_-\left(\omega^{\vec{\lambda}}_{n-1\,, m+1}(u\,,\bar{u}) t^{\frac{\lambda_v}{2}}\right)\,, \;  m \geq n\,;\nonumber\\
& \bullet \quad \sigma_-\left(\omega^{\vec{\lambda}}_{\frac{\lambda_h}{2}+k+r+2,\frac{\lambda_h}{2}-k+r}\chi^{r+1}t^{l-1}\right) \,, \quad k \in \left\{1,\dots, \frac{\lambda_h}{2} - 1\right\}\,;\nonumber\\
& \bullet \quad \sigma_-\left(\omega^{\vec{\lambda}}_{\frac{\lambda_h}{2}-k+r,\frac{\lambda_h}{2}+k+r+2}\chi^{r+1}t^{l-1}\right) \,, \quad k \in \left\{1,\dots, \frac{\lambda_h}{2} - 1\right\}
\Bigg\}\,.
\end{align}

As the one-form $e(u,\bar{u})$ does not appear in the $\text{Im}(\sigma_-)\vert_{\deg\Lambda = 1}$ (\ref{e: sigma zero-form}), any element of $\ker(\sigma_-)\vert_{\deg \Lambda = 1}$ that includes $e(u,\bar{u})$ is not $\sigma_-$-exact. Notice that $r \in \{0,\dots, \lambda_v/2\}$ for any $\mathfrak{sp}(4)$ irreducible module and if the spin-tensor has equal amount of dotted and undotted indices then first two terms of $\sigma_-$ (\ref{e: sigma adj adj}) are absent and the third one reduces the $r$ parameter. These observations lead us to the conclusion that symmetric spin-tensors with $r = \lambda_v/2$ are not $\sigma_-$-exact. For the rest of the kernel elements we can show a $\sigma_-$-exact representation. Thus, $H^1(\sigma_-)$ is spanned by three 1-cocycles:

\begin{flalign}
    &H^1\left(\sigma_{-}\right)=\Bigg\{\bullet \quad e(\dd, \bar{\dd}) \omega^{\vec{\lambda}}_{\frac{\lambda_h+\lambda_v}{2}+1\,, \frac{\lambda_h+\lambda_v}{2}+1}(u\,, \bar{u}) \chi^{\frac{\lambda v}{2}}\,; \quad\bullet \quad e(u, \bar{u}) \omega^{\vec{\lambda}}_{\frac{\lambda_h}{2}-1\,, \frac{\lambda_h}{2}-1}(u\,, \bar{u}) t^{\frac{\lambda v}{2}}\,; \nonumber&&\\
    &\bullet \; \Bigg(e(u, \bar{u}) \chi t^{\frac{\lambda_v}{2}-1}+\frac{\lambda_h}{\left(\lambda_h+\lambda_v+2\right)}\{e(\dd, \bar{u})+e(u, \bar{\dd})\}t^{\frac{\lambda v}{2}}\Bigg) \omega^{\vec{\lambda}}_{\frac{\lambda_h}{2}\,, \frac{\lambda_h}{2}}(u\,,\bar{u})\,,\; \lambda_v \geqslant 2\,; \nonumber&&\\
    &\lambda_h, \lambda_v \in 2 \mathbb{Z}_{\geqslant 0}
\Bigg\}\,.&&
\end{flalign}

\subsubsection{$H^2(\sigma_-)$}

As shown above, there are six Lorenz-irreducible two-forms (\ref{basis 2 forms}), so the decomposition of a general two-form is
\begin{align}
& R^{\vec{\lambda}}(u,\bar{u},\chi,t)=\sum_{\substack{n+m=\lambda_h \\ 2(r+l)=\lambda_v}}\bigg\{H(\dd, \dd) R^{\vec{\lambda}}_{A}{}_{n+r+2, m+r}(u, \bar{u}) \chi^r t^l + \bar{H}(\bar{\dd}, \bar{\dd}) R^{\vec{\lambda}}_{\bar{A}}{}_{n+r, m+r+2}(u, \bar{u}) \chi^r t^l + \nonumber\\
& + H(u, u) R^{\vec{\lambda}}_{B}{}_{n+r-2, m+r}(u, \bar{u}) \chi^r t^l + \bar{H}(\bar{u}, \bar{u}) R^{\vec{\lambda}}_{\bar{B}}{}_{n+r, m+r-2}(u, \bar{u}) \chi^r t^l + \nonumber \\
&+H(u, \dd) R^{\vec{\lambda}}_{C}{}_{n+r, m+r}(u, \bar{u}) \chi^r t^l + \bar{H}(\bar{u}, \bar{\dd}) R^{\vec{\lambda}}_{\bar{C}}{}_{n+r, m+r}(u, \bar{u}) \chi^r t^l\bigg\} \,,
\end{align}
where once again Latin labels refer to different irreducible components of two-forms.

Since $\sigma_- R^{\vec{\lambda}}(u,\bar{u},\chi,t) \in \Lambda^3(M)\otimes \operatorname{Poly}(u,\bar{u},\chi,t)$ we need basis three-forms for the further decomposition, which are defined as
\begin{equation}
\mathcal{H}_{\alpha \dot{\alpha}}:=-e_{\alpha}{}^{\dot{\beta}} \bar{H}_{\dot{\beta} \dot{\alpha}}=-e_\alpha{}^{\dot{\beta}}\wedge e_{\gamma \dot{\beta}} \wedge e^\gamma{}_{\dot{\alpha}} = -e_\alpha{}^{\dot{\gamma}} \wedge e_{\beta \dot{\gamma}} \wedge e^\beta{}_{\dot{\alpha}} = e^\beta{}_{\dot{\alpha}} \wedge e_{\beta \dot{\gamma}} \wedge e_\alpha{}^{\dot{\gamma}}=e^\beta{}_{\dot{\alpha}} H_{\beta \alpha}\,.
\end{equation}
One can check that the following properties hold true
\begin{equation}
    e^{\alpha \dot{\alpha}}\wedge H^{\mu\nu} = -\frac{1}{3}\left(\varepsilon^{\alpha \mu} \mathcal{H}^{\nu \dot{\alpha}} + \varepsilon^{\alpha \nu} \mathcal{H}^{\mu \dot{\alpha}}\right)\,, \quad e^{\alpha \dot{\alpha}}\wedge \bar{H}^{\dot{\mu}\dot{\nu}} = \frac{1}{3}\left(\bar{\varepsilon}^{\dot{\alpha} \dot{\mu}} \mathcal{H}^{\alpha \dot{\nu}} + \bar{\varepsilon}^{\dot{\alpha} \dot{\nu}} \mathcal{H}^{\alpha \dot{\mu}}\right)\,.
\end{equation}

With these properties the image of $\sigma_-$ decomposes into 
\begin{flalign}
&\sigma_- R^{\vec{\lambda}}(u,\bar{u},\chi,t) = 
\sum_{\substack{n+m = \lambda_h \\ 2(r+l) = \lambda_v}} \Bigg\{ \nonumber \theta(m-n) \frac{2m(n+r+l+2)}{3(m+r+1)} \,
   \mathcal{H}(\dd, \bar{\dd}) \,
   R_{A}^{\vec{\lambda}}{}_{n+r+2,\, m+r} \, \chi^r t^l - \nonumber&&\\
&\quad -\theta(n-m) \frac{2n(m+r+l+2)}{3(n+r+1)} \,
   \mathcal{H}(\dd, \bar{\dd}) \,
   R_{\bar{A}}^{\vec{\lambda}}{}_{n+r,\, m+r+2} \, \chi^r t^l -\nonumber&&\\
&\quad -\theta(n-m) \frac{2n(m+r+l+2)}{3(m+r+1)} \,
   \mathcal{H}(u, \bar{u}) \,
   R_{B}^{\vec{\lambda}}{}_{n+r-2,\, m+r} \, \chi^r t^l +\nonumber&&\\
&\quad +\theta(m-n) \frac{2m(n+r+l+2)}{3(n+r+1)} \,
   \mathcal{H}(u, \bar{u}) \,
   R_{\bar{B}}^{\vec{\lambda}}{}_{n+r,\, m+r-2} \, \chi^r t^l +\nonumber&&\\
&\quad +\frac{2r(n+m+r+1)}{3(m+r+1)} \,
   \mathcal{H}(u, \bar{\dd}) \,
   R_{B}^{\vec{\lambda}}{}_{n+r-2,\, m+r} \, \chi^{r-1} t^{l+1} -\nonumber&&\\
&\quad -\frac{2r(n+m+r+1)}{3(n+r+1)} \,
   \mathcal{H}(\dd, \bar{u}) \,
   R_{\bar{B}}^{\vec{\lambda}}{}_{n+r,\, m+r-2} \, \chi^{r-1} t^{l+1} -\nonumber&&\\
   &\quad -\theta(n-m) \frac{n(m+r+l+2)(n+r+2)}{3(n+r+1)(m+r+1)} \,
   \mathcal{H}(\dd, \bar{u}) \,
   R_{C}^{\vec{\lambda}}{}_{n+r,\, m+r} \, \chi^r t^l +\nonumber&&\\
&\quad +\theta(m-n) \frac{m(n+r+l+2)(n+r)}{3(n+r+1)(m+r+1)} \,
   \mathcal{H}(u, \bar{\dd}) \,
   R_{C}^{\vec{\lambda}}{}_{n+r,\, m+r} \, \chi^r t^l +\nonumber&&
\end{flalign}
\begin{flalign}
&\quad +\frac{r(n+m+r+1)(n+r+2)}{3(n+r+1)(m+r+1)} \,
   \mathcal{H}(\dd, \bar{\dd}) \,
   R_{C}^{\vec{\lambda}}{}_{n+r,\, m+r} \, \chi^{r-1} t^{l+1} -\nonumber&&\\
&\quad -\theta(n-m) \frac{n(m+r+l+2)(m+r)}{3(n+r+1)(m+r+1)} \,
   \mathcal{H}(\dd, \bar{u}) \,
   R_{\bar{C}}^{\vec{\lambda}}{}_{n+r,\, m+r} \, \chi^r t^l +\nonumber&&\\
&\quad +\theta(m-n) \frac{m(n+r+l+2)(m+r+2)}{3(n+r+1)(m+r+1)} \,
   \mathcal{H}(u, \bar{\dd}) \,
   R_{\bar{C}}^{\vec{\lambda}}{}_{n+r,\, m+r} \, \chi^r t^l -\nonumber&&\\
&\quad -\frac{r(n+m+r+1)(m+r+2)}{3(n+r+1)(m+r+1)} \,
   \mathcal{H}(\dd, \bar{\dd}) \,
   R_{\bar{C}}^{\vec{\lambda}}{}_{n+r,\, m+r} \, \chi^{r-1} t^{l+1}
\Bigg\}\,.&&
\end{flalign}

There are four independent irreducible three-forms $\mathcal{H}(\cdot,\cdot)$ and, therefore, four equations on the spin-tensors $\{R_{A}^{\vec{\lambda}}\,,R_{B}^{\vec{\lambda}}\,,R_{C}^{\vec{\lambda}}\,,R_{\bar{A}}^{\vec{\lambda}}\,,R_{\bar{B}}^{\vec{\lambda}}\,,R_{\bar{C}}^{\vec{\lambda}}\}$. As in the case of one-forms, $\theta$-functions restrict the space of non-trivial solutions, and the system of equation on the spin-tensors can be solved directly. Thus, the kernel of $\sigma_-$ in the sector of two-forms is
\begin{flalign}
&\ker(\sigma_-)\vert_{\deg \Lambda = 2} = \Bigg\{ \lambda_h, \lambda_v \in 2 \mathbb{Z}_{\geqslant 0}\,, \quad n+m=\lambda_h\,, \quad 2(r+l)=\lambda_v: \nonumber\\
&\bullet \quad H(\dd, \dd) R^{\vec{\lambda}}_{n + r + 2\,, m + r}(u\,,\bar{u}) \chi^{r} t^{l}\,, \, n \geq m\, \quad \oplus \quad \bar{H}(\bar{\dd}, \bar{\dd}) R^{\vec{\lambda}}_{n + r\,, m + r + 2}(u\,,\bar{u}) \chi^{r} t^{l}\,,\, m \geq n\,;\nonumber\\
&\bullet \quad H(u, u) R^{\vec{\lambda}}_{n-2, m}(u\,,\bar{u}) t^{\frac{\lambda v}{2}}\,,\quad m \geqslant n \quad \oplus \quad \bar{H}(\bar{u}, \bar{u}) R^{\vec{\lambda}}_{n, m-2} (u\,,\bar{u})t^{\frac{\lambda v}{2}}\,,\quad n \geqslant m\,;\nonumber\\
&\bullet \quad H(u, \dd) R^{\vec{\lambda}}_{\frac{\lambda_h}{2}, \frac{\lambda_h}{2}} (u, \bar{u})t^{\frac{\lambda}{2}} \quad \oplus \quad \bar{H}(\bar{u}, \bar{\dd}) R^{\vec{\lambda}}_{\frac{\lambda_h}{2}, \frac{\lambda_h}{2}} (u, \bar{u})t^{\frac{\lambda}{2}}\,; \nonumber\\
&\bullet \quad \bigg(H(u, u)+\bar{H}(\bar{u},\bar{u})\bigg) R^{\vec{\lambda}}_{\frac{\lambda_h}{2}-1\,, \frac{\lambda_h}{2}-1} (u\,,\bar{u})t^{\frac{\lambda v}{2}}\,,\quad \lambda_h \geqslant 2\,;\nonumber\\
&\bullet \quad H(u, \dd) R_{C}^{\vec{\lambda}}{}_{n, m}(u,\bar{u}) t^{\frac{\lambda v}{2}}+\bar{H}(\bar{u}, \bar{\dd}) R_{\bar{C}}^{\vec{\lambda}}{}_{n, m}(u,\bar{u}) t^{\frac{\lambda v}{2}}-\frac{1}{4\left(\lambda_n+2\right)} \frac{m\left(\lambda_v+2 n+4\right)}{n+1}\times\nonumber\\
&\times \bigg[n H(u, u) R_{C}^{\vec{\lambda}}{}_{n,m}(u,\bar{u})+(m+2) H(u, u) R_{\bar{C}}^{\vec{\lambda}}{}_{n,m}(u,\bar{u}) \bigg]\chi t^{\frac{\lambda v}{2}-1}\,, \quad  m > n \quad \oplus \quad \text{c.c.}\,;\nonumber\\
&\bullet \quad \bigg(H(\dd, \dd)+\bar{H}(\bar{\dd},\bar{\dd})\bigg) R^{\vec{\lambda}}_{\frac{\lambda_h}{2}+r+1\,, \frac{\lambda_h}{2}+r+1}(u\,,\bar{u}) \chi^{r} t^l\,;\nonumber\\
&\bullet \quad \bigg(H(u, \dd)+\bar{H}(\bar{u},\bar{\dd})\bigg) R^{\vec{\lambda}}_{\frac{\lambda_h}{2}+r\,, \frac{\lambda_h}{2}+r} (u\,,\bar{u})\chi^{r} t^l \,;\nonumber\\
&\bullet \quad \left(H(u, \dd) \chi^{r+1} t^{l-1} -\frac{2(r+1)\left(\lambda_h+r+2\right)\left(\lambda_h+2r+6\right)}{\left(\lambda_h+2r+4\right)\left(\lambda_h+2\right)\left(\lambda_h+\lambda_v+2\right)} H(\dd, \dd) \chi^r t^l\right)\times\nonumber\\
&\times \left[R_{C}^{\vec{\lambda}}{}_{\frac{\lambda_h}{2} + r + 1, \frac{\lambda_h}{2} + r + 1}(u\,,\bar{u}) - R_{\bar{C}}^{\vec{\lambda}}{}_{\frac{\lambda_h}{2} + r + 1, \frac{\lambda_h}{2} + r + 1}(u\,,\bar{u}) \right]\,, \quad r \neq \frac{\lambda_v}{2} \quad \oplus \quad \text{c.c.}\,;\nonumber\\
&\bullet \quad H(u, \dd) R_{C}^{\vec{\lambda}}{}_{n+r, m+r}(u\,,\bar{u}) \chi^r t^l + \bar{H}(\bar{u}, \bar{\dd}) R_{\bar{C}}^{\vec{\lambda}}{}_{n+r, m+r}(u\,,\bar{u}) \chi^r t^l -\nonumber\\
&- \frac{m(n+r+l+2)}{2(r+1)(\lambda_h+r+2)(n+r+1)} \bigg[(n+r) H(u, u) R_{C}^{\vec{\lambda}}{}_{n+r,m+r}(u\,,\bar{u}) +\nonumber\\
&+ (m+r+2) H(u, u) R_{\bar{C}}^{\vec{\lambda}}{}_{n+r,m+r}(u\,,\bar{u})\bigg] \chi^{r+1} t^{l-1} +\nonumber
\end{flalign}
\begin{flalign}
& +\frac{r(\lambda_h+r+1)}{(m+1)\left(\lambda_v + 2n + 2\right)(n+r+1)}\bigg[(m+r+2) H(\dd, \dd) R_{\bar{C}}^{\vec{\lambda}}{}_{n+r,m+r}(u\,,\bar{u}) - \nonumber \\
& -(n+r+2) H(\dd, \dd) R_{C}^{\vec{\lambda}}{}_{n+r,m+r}\bigg] \chi^{r-1} t^{l+1} \,, \quad m > n\,, r \neq \frac{\lambda_v}{2} \quad \oplus \quad \text{c.c.}\,;\nonumber\\
&\bullet \quad \bigg(H(u, \dd)\chi^{\frac{\lambda_v}{2}}-\frac{\lambda_v+2n}{\lambda_v+2m+4} \bar{H}(\bar{u}, \bar{\dd}) \chi^{\frac{\lambda_v}{2}}-\frac{\lambda_v\left(\lambda_v+2\lambda_h+2\right)}{2(m+1)\left(\lambda_v+2 n+2\right)} H(\dd, \dd) \chi^{\frac{\lambda_v}{2}-1} t\bigg)\times \nonumber\\
&\times R_C^{\vec{\lambda}}{}_{n + \frac{\lambda_v}{2}, m + \frac{\lambda_v}{2}}(u\,,\bar{u})\,, \quad m > n \quad \oplus \quad \text{c.c.}\Bigg\}\,.\label{e: two-form kernel adj-adj}
\end{flalign}

Now we have to discard those two-forms that are $\sigma_-$-exact. Despite the apparent complexity of filtering out $\sigma_-$-exact forms, the process turns out to be no more difficult than that for one-forms.

Once again we should recall that for fixed weights $(\lambda_h, \lambda_v)$ parameters $(n,m,r,l)$ cover the finite set of values. Thus, the grading $G=|\hat{N}_u-\hat{\bar{N}}_u|+2\hat{N}_\chi$ is bounded from  above and below, so we expect cohomology on the boundaries of $G$ and potentially at the points where $|\hat{N}_u-\hat{\bar{N}}_u|$ is zero. It turns out that our expectations are valid as most of the elements in $\ker(\sigma_-)$ are $\sigma_-$-exact which can be shown directly with the help of (\ref{e: sigma one-form}). For example,
\thickmuskip=1.5mu 
\medmuskip=1mu
\thinmuskip=1.5mu
\begin{align}
&H(u, \dd) R_{C}^{\vec{\lambda}}{}_{n+r, m+r}(u\,,\bar{u}) \chi^r t^l + \bar{H}(\bar{u}, \bar{\dd}) R_{\bar{C}}^{\vec{\lambda}}{}_{n+r, m+r}(u\,,\bar{u}) \chi^r t^l - \frac{m(n+r+l+2)}{2(r+1)(\lambda_h+r+2)(n+r+1)}\times \nonumber\\
&\times\bigg[(n+r) H(u, u) R_{C}^{\vec{\lambda}}{}_{n+r,m+r}(u\,,\bar{u}) + (m+r+2) H(u, u) R_{\bar{C}}^{\vec{\lambda}}{}_{n+r,m+r}(u\,,\bar{u})\bigg] \chi^{r+1} t^{l-1} +\nonumber\\
& +\frac{r(\lambda_h+r+1)}{(m+1)\left(\lambda_v + 2n + 2\right)(n+r+1)}\bigg[(m+r+2) H(\dd, \dd) R_{\bar{C}}^{\vec{\lambda}}{}_{n+r,m+r}(u\,,\bar{u}) -\nonumber\\
& -(n+r+2) H(\dd, \dd) R_{C}^{\vec{\lambda}}{}_{n+r,m+r}\bigg] \chi^{r-1} t^{l+1} = \frac{2(n+r)(n+r+2)}{(m+1)(\lambda_v +2n+2)(n+r+1)} \times\nonumber\\
&\times\sigma_-\bigg(e(\dd, \bar{u}) R_{C}^{\vec{\lambda}}{}_{n+r,m+r}(u\,,\bar{u}) \chi^r t^l-\frac{(m+1)(\lambda_v+2n+2)}{2(r+1)(\lambda_h+r+2)} e(u, \bar{u}) R_{C}^{\vec{\lambda}}{}_{n+r,m+r}(u\,,\bar{u}) \chi^{r+1} t^{l-1}\bigg)-\nonumber\\
&-\frac{2(m+r+2)(n+r)}{(m+1)(\lambda_v+2n+2)(n+r+1)}\sigma_-\bigg(e(\dd, \bar{u}) R_{\bar{C}}^{\vec{\lambda}}{}_{n+r,m+r}(u\,,\bar{u}) \chi^r t^l + \nonumber\\
&+ \frac{(m+1)(\lambda_v+2 n+2)(n+r+2)}{2(r+1)(n+r)\left(\lambda_h + r +2\right)} e(u, \bar{u}) R_{\bar{C}}^{\vec{\lambda}}{}_{n+r, m+r} (u\,,\bar{u})\chi^{r+1} t^{l-1}\bigg)\,,
\end{align}
\begin{align}
&\bigg(H(u, \dd)\chi^{\frac{\lambda_v}{2}}-\frac{\lambda_v+2n}{\lambda_v+2m+4} \bar{H}(\bar{u}, \bar{\dd}) \chi^{\frac{\lambda_v}{2}}-\frac{\lambda_v\left(\lambda_v+2\lambda_h+2\right)}{2(m+1)\left(\lambda_v+2 n+2\right)} H(\dd, \dd) \chi^{\frac{\lambda_v}{2}-1} t\bigg)\times\nonumber \\
&\times R_C^{\vec{\lambda}}{}_{n + \frac{\lambda_v}{2}, m + \frac{\lambda_v}{2}}(u\,,\bar{u}) = \frac{2(\lambda_v+2n)}{(m+1)(\lambda_v+2n+2)}\sigma_-\left(e(\dd, \bar{u})R_C^{\vec{\lambda}}{}_{n + \frac{\lambda_v}{2}, m + \frac{\lambda_v}{2}}(u\,,\bar{u}) \chi^{\frac{\lambda_v}{2}}\right)\,.
\end{align}
\thickmuskip=5mu 
\medmuskip=4mu
\thinmuskip=3mu

It is important to note that some elements in (\ref{e: two-form kernel adj-adj}) are actually representatives of the same cohomology class. A direct check shows that
\thickmuskip=1mu 
\medmuskip=1mu
\thinmuskip=1mu
\begin{align}
&\bigg(H(\dd, \dd)+\bar{H}(\bar{\dd},\bar{\dd})\bigg) R^{\vec{\lambda}}_{\frac{\lambda_v+\lambda_h}{2}\,, \frac{\lambda_v+\lambda_h}{2}}(u\,,\bar{u}) \chi^{\frac{\lambda v}{2}-1}t \simeq \bigg(H(u, \dd)+\bar{H}(\bar{u},\bar{\dd})\bigg) R^{\vec{\lambda}}_{\frac{\lambda_v+\lambda_h}{2}\,, \frac{\lambda_v+\lambda_h}{2}} (u\,,\bar{u})\chi^{\frac{\lambda v}{2}}\nonumber \\
&\simeq\bigg(H(u, \dd)\chi^{\frac{\lambda_v}{2}} - \frac{\lambda_v(\lambda_h + \lambda_v + 4)(\lambda_v + 2\lambda_h + 2)}{2(\lambda_h+2)(\lambda_h+\lambda_v+2)^2}H(\dd, \dd)\chi^{\frac{\lambda_v}{2}-1}t\bigg) R^{\vec{\lambda}}_{\frac{\lambda_v+\lambda_h}{2}\,,\frac{\lambda_v+\lambda_h}{2}}(u\,, \bar{u})\,, \quad\lambda_v \geqslant 2\,.
\end{align}
\thickmuskip=5mu 
\medmuskip=4mu
\thinmuskip=3mu

Therefore, the second cohomology group is
\begin{equation}
H^2\left(\sigma_{-}\right)=\left\{\begin{array}{ll}
\bullet \quad H(\dd, \dd) R^{\vec{\lambda}}_{\frac{\lambda_v}{2}+\lambda_h+2\,, \frac{\lambda_v}{2}}(u\,,\bar{u}) \chi^{\frac{\lambda v}{2}} \quad \oplus \quad \bar{H}(\bar{\dd}, \bar{\dd}) R^{\vec{\lambda}}_{\frac{\lambda_v}{2}\,,\frac{\lambda_v}{2}+\lambda_h+2}(u\,,\bar{u}) \chi^{\frac{\lambda v}{2}}\,;\\
\bullet \quad \left(H(\dd, \dd)+\bar{H}(\bar{\dd},\bar{\dd})\right) R^{\vec{\lambda}}_{\frac{\lambda_v+\lambda_h}{2}+1\,, \frac{\lambda_v+\lambda_h}{2}+1}(u\,,\bar{u}) \chi^{\frac{\lambda v}{2}}\,;\\
\bullet \quad \left(H(u, u)+\bar{H}(\bar{u},\bar{u})\right) R^{\vec{\lambda}}_{\frac{\lambda_h}{2}-1\,, \frac{\lambda_h}{2}-1} (u\,,\bar{u})t^{\frac{\lambda v}{2}}\,, \quad \lambda_h \geqslant 2\,;\\
\bullet \quad \left(H(\dd, \dd)+\bar{H}(\bar{\dd},\bar{\dd})\right) R^{\vec{\lambda}}_{\frac{\lambda_v+\lambda_h}{2}\,, \frac{\lambda_v+\lambda_h}{2}}(u\,,\bar{u}) \chi^{\frac{\lambda v}{2}-1}t \overset{\text{Im}(\sigma_-)}{\simeq} \\
\overset{\text{Im}(\sigma_-)}{\simeq}\left(H(u, \dd)+\bar{H}(\bar{u},\bar{\dd})\right) R^{\vec{\lambda}}_{\frac{\lambda_v+\lambda_h}{2}\,, \frac{\lambda_v+\lambda_h}{2}} (u\,,\bar{u})\chi^{\frac{\lambda v}{2}} \overset{\text{Im}(\sigma_-)}{\simeq} \\
\overset{\text{Im}(\sigma_-)}{\simeq} \left(H(u, \dd)\chi^{\frac{\lambda_v}{2}} - \xi H(\dd, \dd)\chi^{\frac{\lambda_v}{2}-1}t\right) R^{\vec{\lambda}}_{\frac{\lambda_v+\lambda_h}{2}\,,\frac{\lambda_v+\lambda_h}{2}}(u\,, \bar{u})\,, \lambda_v \geqslant 2\,;\\
\lambda_h, \lambda_v \in 2 \mathbb{Z}_{\geqslant 0} \;; \quad \xi = \frac{\lambda_v(\lambda_h + \lambda_v + 4)(\lambda_v + 2\lambda_h + 2)}{2(\lambda_h+2)(\lambda_h+\lambda_v+2)^2}
\end{array}\right\}\,.
\end{equation}
The last cohomology class is absent in the case of $\lambda_v = 0$.

\subsection{Summary for $H^{0,1,2}(\sigma_-)$}

Here we collect the final results for the cocycles associated with the gauge parameters, fields and field equations in the $(adj\otimes adj)$ sector of $B_2$ CHS on the $AdS_4$ background.

Recall that fields $\omega^{\vec{\lambda}}$ are valued in $\mathfrak{so}(3,2)$-irreps parametrized by weights $(\lambda_h,\lambda_v)$ that decompose into the direct sum of Lorenz-irreps
\begin{equation} 
\begin{picture}(10,18)(0,7)
{
\put(05,00){\line(1,0){35}}%
\put(05,10){\line(1,0){35}}%
\put(05,0){\line(0,1){10}}%
\put(40,0){\line(0,1){10}}%
\put(17,3){\scriptsize  ${M}$}%
}
\end{picture}\begin{picture}(50,18)(10,7)
{\put(35,13){\scriptsize  ${N}$}
 \put(05,20){\line(1,0){60}}%
\put(05,10){\line(1,0){60}}%
\put(05,10){\line(0,1){10}}%
\put(65,10){\line(0,1){10}}%
}
\end{picture}\quad \bigg\vert_{\mathfrak{so}(3,2)} = \bigoplus_{\substack{i = 0}}^{N-M}\bigoplus_{\substack{j = 0}}^{M}
\begin{picture}(10,18)(0,7)
{
\put(05,00){\line(1,0){35}}%
\put(05,10){\line(1,0){35}}%
\put(05,0){\line(0,1){10}}%
\put(40,0){\line(0,1){10}}%
\put(20,3){\scriptsize  ${j}$}%
}
\end{picture}\begin{picture}(50,18)(10,7)
{\put(25,13){\scriptsize  ${M + i}$}
 \put(05,20){\line(1,0){60}}%
\put(05,10){\line(1,0){60}}%
\put(05,10){\line(0,1){10}}%
\put(65,10){\line(0,1){10}}%
}
\end{picture}\quad \bigg\vert_{\mathfrak{so}(3,1)}
\,,
\end{equation}
where $N = (\lambda_h + \lambda_v)/2$ and $M = \lambda_h/2$.

According to the classification theorem stated in section \ref{sec: Interpretation of cohomology} applied to the system of one-form fields, $H^0(\sigma_-)$ represents parameters of the differential gauge symmetries. It is spanned by the zero-forms
\begin{equation}
H^0(\sigma_-)=\left\{\mathcal{E}^{\vec{\lambda}}_{\frac{\lambda_h}{2}\,, \frac{\lambda_h}{2}}(u\,, \bar{u}) t^{\frac{\lambda_v}{2}} ; \quad \lambda_h, \lambda_v \in 2 \mathbb{Z}_{\geqslant 0}\right\}\,.
\end{equation}

Gauge parameters correspond to the shortest possible one-row $\mathfrak{so}(3,1)$ Young diagram that results from the restriction of the two-row $\mathfrak{so}(3,2)$ Young diagram.

This indicates that the $(adj\otimes adj)$ sector should encode symmetric massless fields or partially massless fields.

Cohomology $H^1(\sigma_-)$ represents the dynamical fields. It is spanned by the three 1-cocycles:
\begin{equation}
H^1\left(\sigma_{-}\right)=\left\{\begin{array}{ll}
\bullet \quad e(\dd, \bar{\dd}) \omega^{\vec{\lambda}}_{\frac{\lambda_h+\lambda_v}{2}+1\,, \frac{\lambda_h+\lambda_v}{2}+1}(u\,, \bar{u}) \chi^{\frac{\lambda v}{2}}\,;\\
\bullet \quad e(u, \bar{u}) \omega^{\vec{\lambda}}_{\frac{\lambda_h}{2}-1\,, \frac{\lambda_h}{2}-1}(u\,, \bar{u}) t^{\frac{\lambda v}{2}}\,; \\
\bullet \; \left(e(u, \bar{u}) \chi t^{\frac{\lambda_v}{2}-1}+\frac{\lambda_h}{\left(\lambda_h+\lambda_v+2\right)}\{e(\dd, \bar{u})+e(u, \bar{\dd})\}t^{\frac{\lambda v}{2}}\right) \omega^{\vec{\lambda}}_{\frac{\lambda_h}{2}\,, \frac{\lambda_h}{2}}(u\,,\bar{u})\,,\; \lambda_v \geqslant 2\,; \\
\lambda_h, \lambda_v \in 2 \mathbb{Z}_{\geqslant 0}\,, \quad n+m=\lambda_h\,, \quad 2(r+l)=\lambda_v
\end{array}\right\}\,.
\end{equation}
The first cocycle corresponds to the longest one-row $\mathfrak{so}(3,1)$ diagram in the decomposition of the two-row $\mathfrak{so}(3,2)$ Young diagram. The second and third ones are the shortest one-row $\mathfrak{so}(3,1)$ diagrams. 

As was shown in \cite{Skvortsov:2006at}, an unfolding of partially massless fields is implemented via one-form connections that have the symmetry properties of two-row $\mathfrak{so}(d-1,2)$ Young tableaux with the first row of length $(s-1)$ and the second row of length $(s - t)$, where $s$ is a spin and $t \in \{1,\dots, s\}$. Here $t$ defines the depth of masslessness, i.e., the equations of motion for partially massless fields are invariant under the gauge transformation with at least $t$ derivatives in a leading term.

Therefore, the 1-cocycles encode partially massless fields of depth $t = 1 + \frac{\lambda_v}{2}$, if $\lambda_v \neq 0$. The appearance of a gap in lengths of corresponding $\mathfrak{so}(3,1)$ Young diagrams is attributed to the placement of partially massless fields between massless and massive. In a description of massive spin-$s$ fields we need a set of auxiliary fields with symmetry properties of one-row Lorenz diagrams with lengths from $(s - 2)$ to $0$ in addition to the field represented by a one-row Lorenz diagram of length $s$ \cite{Fierz:1939ix, Singh:1974qz}. These auxiliary fields vanish on-shell. In case of partially massless fields \cite{Deser:2001pe, Deser:2001us, Zinoviev:2001dt, Zinoviev:2002ye} the list of auxiliary fields reduces to the $\{(s-2), \dots, (s-t-1)\}$. However, fields $(s-t)$ and $(s-t-1)$ do not vanish on-shell, so they appear in $H^1(\sigma_-)$. In the case of zero weight $\lambda_v$ depth of masslessness $t = 1$ and the last cocycle is absent. The leftover ones encode the traceless part of symmetric massless fields and their traces, i.e., Fronsdal fields of all spins $s \geq 1$.

Cohomology group $H^2(\sigma_-)$, which represents gauge invariant differential operators on the fields, is spanned by four families of 2-cocycles:
\begin{flalign}
    &H^2\left(\sigma_{-}\right)=\Bigg\{\bullet \quad H(\dd, \dd) R^{\vec{\lambda}}_{\frac{\lambda_v}{2}+\lambda_h+2\,, \frac{\lambda_v}{2}}(u\,,\bar{u}) \chi^{\frac{\lambda v}{2}} \quad \oplus \quad \bar{H}(\bar{\dd}, \bar{\dd}) R^{\vec{\lambda}}_{\frac{\lambda_v}{2}\,,\frac{\lambda_v}{2}+\lambda_h+2}(u\,,\bar{u}) \chi^{\frac{\lambda v}{2}}\,;\nonumber&&\\
    &\bullet \quad \left(H(\dd, \dd)+\bar{H}(\bar{\dd},\bar{\dd})\right) R^{\vec{\lambda}}_{\frac{\lambda_v+\lambda_h}{2}+1\,, \frac{\lambda_v+\lambda_h}{2}+1}(u\,,\bar{u}) \chi^{\frac{\lambda v}{2}}\,;\nonumber&&\\
    &\bullet \quad \left(H(u, u)+\bar{H}(\bar{u},\bar{u})\right) R^{\vec{\lambda}}_{\frac{\lambda_h}{2}-1\,, \frac{\lambda_h}{2}-1} (u\,,\bar{u})t^{\frac{\lambda v}{2}}\,, \quad \lambda_h \geqslant 2\,;\nonumber\\
     &\bullet \quad \left(H(\dd, \dd)+\bar{H}(\bar{\dd},\bar{\dd})\right) R^{\vec{\lambda}}_{\frac{\lambda_v+\lambda_h}{2}\,, \frac{\lambda_v+\lambda_h}{2}}(u\,,\bar{u}) \chi^{\frac{\lambda v}{2}-1}t \overset{\text{Im}(\sigma_-)}{\simeq} \nonumber&&\\
     &\overset{\text{Im}(\sigma_-)}{\simeq}\left(H(u, \dd)+\bar{H}(\bar{u},\bar{\dd})\right) R^{\vec{\lambda}}_{\frac{\lambda_v+\lambda_h}{2}\,, \frac{\lambda_v+\lambda_h}{2}} (u\,,\bar{u})\chi^{\frac{\lambda v}{2}} \overset{\text{Im}(\sigma_-)}{\simeq} \nonumber&&\\
     &\overset{\text{Im}(\sigma_-)}{\simeq} \left(H(u, \dd)\chi^{\frac{\lambda_v}{2}} - \xi H(\dd, \dd)\chi^{\frac{\lambda_v}{2}-1}t\right) R^{\vec{\lambda}}_{\frac{\lambda_v+\lambda_h}{2}\,,\frac{\lambda_v+\lambda_h}{2}}(u\,, \bar{u})\,, \lambda_v \geqslant 2\,;\nonumber&&
\end{flalign}
\begin{flalign}
    &\lambda_h, \lambda_v \in 2 \mathbb{Z}_{\geqslant 0} \;; \quad \xi = \frac{\lambda_v(\lambda_h + \lambda_v + 4)(\lambda_v + 2\lambda_h + 2)}{2(\lambda_h+2)(\lambda_h+\lambda_v+2)^2}\Bigg\}\,.\label{e: cohomology-2 spin-tensor}&&
\end{flalign}

The first family of 2-cocycles corresponds to the two-row $\mathfrak{so}(3,1)$ diagrams with row lengths $\left(\frac{\lambda_h + \lambda_v}{2} + 1,\frac{\lambda_h}{2} + 1\right)$. These cocycles have the highest grading and, therefore, can be associated with the Weyl cohomology (Chevalley-Eilenberg cocycle), which, when implemented via zero-form fields $C$, glues zero-form sector of the theory to the one-form sector (in a standard $4d$ HS it glues generalized Weyl tensors to the one-form fields at the highest value of grading $G$). The rest of the families describe differential field equations on primary fields. This can be verified by examining the tensor structure of the equations and seeing that the corresponding $\mathfrak{so}(3,1)$ diagrams have matching row lengths $\left(\frac{\lambda_h + \lambda_v}{2} + 1,0\right)$, $\left(\frac{\lambda_h}{2} - 1, 0\right)$ and $\left(\frac{\lambda_h + \lambda_v}{2},0\right)$. Note that the last family exists only if $\lambda_v \neq 0$, so the second and third families represent two irreducible components of the Fronsdal cocycle in case of the zero $\lambda_v$ weight.

The lower cohomology groups obtained for irreducible modules coincide with those of \cite{Skvortsov:2009nv}, where an analysis of $\sigma_-$ cohomology for an unfolded partially massless fields was performed. This means that in $(adj \otimes adj)$ one-form sector of $B_2$ CHS symmetric massless and partially massless $AdS_4$ fields are encoded, as was conjectured in \cite{Tarusov:2025sre}. In a unitary truncated system all fields are on-shell with partially massless fields being topological (the corresponding Weyl tensors are zero) and most of the copies of massless fields being topological as well. The status of fields in a non-truncated case depends on the content of the vertex on the r.h.s. of (\ref{e: adj-adj omega}).  

It is important to note that each family is defined up to $\sigma_-$-exact terms which will be important for the analysis of the gluing vertex in the r.h.s. of (\ref{e: adj-adj omega}).

Cohomology $H^{0,1,2}(\sigma_-)$ are provided in terms of auxiliary spinor variables ($u,\bar{u}$), but there is a straightforward way to rewrite it in terms of $Y_i$ oscillators, replacing each $\chi^r$ with $(\dd\, \bar{\dd} \,\bar{X})^r$, $t^l$ with $\left(\sum_{p+q=l} \Gamma_{n, m, r}^{p, q} \zeta^p \bar{\zeta}{}^q\right)$ and restoring $(\dd y_1)^n (\bar{\dd}\bar{y}_1)^m$ prefactor. Thus, the lowest cohomology groups restricted to the class of $\mathfrak{sl}^h(2)\oplus \mathfrak{sl}^v(2)$ (HW, LW) vectors are
\thickmuskip=2mu 
\medmuskip=2mu
\thinmuskip=1mu
\begin{equation}\label{e: cohomology-0}
H^0(\sigma)=\left\{\left(\sum_{p+q=\frac{\lambda_v}{2}} \Gamma_{\frac{\lambda_h}{2}, \frac{\lambda_h}{2}, 0}^{p, q} \zeta^p \bar{\zeta}{}^q\right)(\dd y_1)^{\frac{\lambda_h}{2}} (\bar{\dd}\bar{y}_1)^{\frac{\lambda_h}{2}}\mathcal{E}^{\vec{\lambda}}_{\frac{\lambda_h}{2}\,, \frac{\lambda_h}{2}}(u\,, \bar{u}); \quad \lambda_h, \lambda_v \in 2 \mathbb{Z}_{\geqslant 0}\right\}\,.
\end{equation}
\begin{flalign}
    &H^1\left(\sigma_{-}\right)=\Bigg\{\bullet \quad (\dd\, \bar{\dd} \,X)^{\frac{\lambda v}{2}}(\dd y_1)^{\frac{\lambda_h}{2}} (\bar{\dd}\bar{y}_1)^{\frac{\lambda_h}{2}}e(\dd, \bar{\dd}) \omega^{\vec{\lambda}}_{\frac{\lambda_h+\lambda_v}{2}+1\,, \frac{\lambda_h+\lambda_v}{2}+1}(u\,, \bar{u}) \,;\nonumber&&\\
    &\bullet \quad\left(\displaystyle\sum_{p+q=\frac{\lambda_v}{2}} \Gamma_{\frac{\lambda_h}{2}, \frac{\lambda_h}{2}, 0}^{p, q} \zeta^p \bar{\zeta}{}^q\right) (\dd y_1)^{\frac{\lambda_h}{2}} (\bar{\dd}\bar{y}_1)^{\frac{\lambda_h}{2}}e(u, \bar{u}) \omega^{\vec{\lambda}}_{\frac{\lambda_h}{2}-1\,, \frac{\lambda_h}{2}-1}(u\,, \bar{u}) \,; \nonumber&&\\
    &\bullet \quad \left[\left(\displaystyle\sum_{p+q=\frac{\lambda_v}{2}-1} \Gamma_{\frac{\lambda_h}{2}, \frac{\lambda_h}{2}, 1}^{p, q} \zeta^p \bar{\zeta}{}^q\right)(\dd\, \bar{\dd} \,X)(\dd y_1)^{\frac{\lambda_h}{2}} (\bar{\dd}\bar{y}_1)^{\frac{\lambda_h}{2}}e(u, \bar{u}) +\nonumber\right.&&
\end{flalign}
\begin{flalign}
    &\left.+\frac{\lambda_h}{\left(\lambda_h+\lambda_v+2\right)}\left\{\left(\displaystyle\sum_{p+q=\frac{\lambda_v}{2}} \Gamma_{\frac{\lambda_h}{2}-1, \frac{\lambda_h}{2}+1, 0}^{p, q} \zeta^p \bar{\zeta}{}^q\right)(\dd y_1)^{\frac{\lambda_h}{2}-1} (\bar{\dd}\bar{y}_1)^{\frac{\lambda_h}{2}+1}e(\dd, \bar{u})+ \nonumber\right.\right.&&\\
    &\left.\left.+\left(\displaystyle\sum_{p+q=\frac{\lambda_v}{2}} \Gamma_{\frac{\lambda_h}{2}+1, \frac{\lambda_h}{2}-1, 0}^{p, q} \zeta^p \bar{\zeta}{}^q\right)(\dd y_1)^{\frac{\lambda_h}{2}+1} (\bar{\dd}\bar{y}_1)^{\frac{\lambda_h}{2}-1}e(u, \bar{\dd})\right\}\right]\omega^{\vec{\lambda}}_{\frac{\lambda_h}{2}\,, \frac{\lambda_h}{2}}(u\,,\bar{u})\,, \lambda_v \geqslant 2\,; \nonumber&&\\
    &\lambda_h, \lambda_v \in 2 \mathbb{Z}_{\geqslant 0}\Bigg\}\,.\label{e: cohomology-1}
\end{flalign}
\begin{flalign}\label{e: cohomology-2}
&H^2\left(\sigma_{-}\right)=\Bigg\{ \bullet \quad (\dd\, \bar{\dd} \,X)^{\frac{\lambda v}{2}} (\dd y_1)^{\lambda_h} H(\dd, \dd) R^{\vec{\lambda}}_{\frac{\lambda_v}{2}+\lambda_h+2\,, \frac{\lambda_v}{2}}(u\,,\bar{u})  \quad \oplus \nonumber\\
&\oplus \quad (\dd\, \bar{\dd} \,X)^{\frac{\lambda v}{2}} (\bar{\dd}\bar{y}_1)^{\lambda_h} \bar{H}(\bar{\dd}, \bar{\dd}) R^{\vec{\lambda}}_{\frac{\lambda_v}{2}\,,\frac{\lambda_v}{2}+\lambda_h+2}(u\,,\bar{u}) \,; \nonumber\\
&\bullet \quad (\dd\, \bar{\dd} \,X)^{\frac{\lambda v}{2}}\left( (\dd y_1)^{\frac{\lambda_h}{2}-1} (\bar{\dd}\bar{y}_1)^{\frac{\lambda_h}{2}+1} H(\dd, \dd) + \right.\nonumber\\
&\left.+ (\dd y_1)^{\frac{\lambda_h}{2}+1} (\bar{\dd}\bar{y}_1)^{\frac{\lambda_h}{2}-1} \bar{H}(\bar{\dd},\bar{\dd})\right) R^{\vec{\lambda}}_{\frac{\lambda_v+\lambda_h}{2}+1\,, \frac{\lambda_v+\lambda_h}{2}+1}(u\,,\bar{u}) \,;\nonumber\\
&\bullet \quad \left[ \left(\displaystyle\sum_{p+q=\frac{\lambda_v}{2}} \Gamma_{\frac{\lambda_h}{2}+1, \frac{\lambda_h}{2}-1, 0}^{p, q} \zeta^p \bar{\zeta}{}^q\right)(\dd y_1)^{\frac{\lambda_h}{2}+1} (\bar{\dd}\bar{y}_1)^{\frac{\lambda_h}{2}-1} H(u, u) + \right.\nonumber\\
&\left.+ \left(\displaystyle\sum_{p+q=\frac{\lambda_v}{2}} \Gamma_{\frac{\lambda_h}{2}-1, \frac{\lambda_h}{2}+1, 0}^{p, q} \zeta^p \bar{\zeta}{}^q\right)(\dd y_1)^{\frac{\lambda_h}{2}-1} (\bar{\dd}\bar{y}_1)^{\frac{\lambda_h}{2}+1} \bar{H}(\bar{u},\bar{u})\right]\times\nonumber\\
&\times R^{\vec{\lambda}}_{\frac{\lambda_h}{2}-1\,, \frac{\lambda_h}{2}-1} (u\,,\bar{u})\,, \lambda_h \geqslant 2\,;\nonumber\\
&\bullet \quad (\dd\, \bar{\dd} \,X)^{\frac{\lambda v}{2}} (\dd y_1)^{\frac{\lambda_h}{2}} (\bar{\dd}\bar{y}_1)^{\frac{\lambda_h}{2}} \left(H(u, \dd)+\bar{H}(\bar{u},\bar{\dd})\right) R^{\vec{\lambda}}_{\frac{\lambda_v+\lambda_h}{2}\,, \frac{\lambda_v+\lambda_h}{2}} (u\,,\bar{u}) \,, \lambda_v \geqslant 2 \,;\nonumber\\
& \lambda_h, \lambda_v \in 2 \mathbb{Z}_{\geqslant 0} \Bigg\}\,.
\end{flalign}
\thickmuskip=5mu 
\medmuskip=4mu
\thinmuskip=3mu

Recall that to obtain $H^{0,1,2}(\sigma_-)$ we restricted the ring $\mathfrak{R}$ (\ref{e: Howe duality adj-adj}) to the set of (HW, LW) vectors of $\mathfrak{sl}^h(2)\oplus \mathfrak{sl}^v(2)$. To get a complete set of gauge parameters, fields and field equations one should apply $E_v^N F_h^M$ operator to all listed cohomology representatives for any $N,M \in \mathbb{Z}_{\geqslant 0}$. Therefore, we observe an infinite amount of copies of the massless and partially massless fields with corresponding gauge symmetries and equations. Note that for the fixed weights $\vec{\lambda}$ operator $E_v$ can be applied infinitely many times, but $F_h^{\lambda_h + 1}$ annihilates (HW, LW) vector.

After the analysis of $H^\bullet(\sigma_-)$ in the ($adj\otimes adj$) sector, the next step is to carry out the cohomological analysis of the $(adj\otimes tw)$ and $(tw\otimes adj)$ sectors, where the $C$ fields reside. Then we will be able to study the gluing of one-form and zero-form sectors governed by the vertex on the r.h.s. of (\ref{e: adj-adj omega}).

\section{Zero-form module $(tw\otimes adj)$ and $(adj\otimes tw)$}
\label{sec: zero-form tw-adj}

As stated in the introduction, after the Klein parity truncation, all zero-form fields $C$ belong to one of the possible tensor products of adjoint and twisted-adjoint modules of the standard $4d$ HS theory. For clarity we will be examining the $(tw \otimes adj)$ module given by the covariant consistency condition (\ref{e:cov2}), as the results for other modules can be obtained by a proper redefinitions of oscillators.

Once again, we consider the ring of differential forms valued in a $Y_i$ series $\mathfrak{R} = \Lambda^\bullet(M)\otimes \mathbb{C}[[y_1,y_2,\bar y_1, \bar y_2]]$ with homogeneous elements (\ref{e: hom one-form}). In a $(tw\otimes adj)$ module, $\mathfrak{so}(3,2)$ acts by generators 
\begin{gather}
P_{\alpha \dot{\alpha}} = - i e^{\alpha \dot\alpha}(y_{1\alpha }\bar y_{1\dot \alpha } - \dd_{1\alpha } \bar \dd_{1\dot\alpha }) + e^{\alpha \dot\alpha}(y_{2\alpha }\bar\dd_{2\dot \alpha } + \bar y_{2\dot\alpha }\dd_{2\alpha })\,,\label{e: so32 adj-tw 1}\\
L_{\alpha\alpha} = \delta^{ij}y_{i\alpha}\dd_{j\alpha}\,, \quad\bar{L}_{\dot{\alpha}\dot{\alpha}} = \delta^{ij}\bar{y}_{i\dot{\alpha}}\bar{\dd}_{j\dot{\alpha}}\,. \label{e: so32 adj-tw 2}
\end{gather}
If in case of $(adj\otimes adj)$ algebra $\mathfrak{so}(3,2)$ preserves the total monomial degree, in $(tw\otimes adj)$ (and $(adj\otimes tw)$) momentum generator $P_{\alpha \dot{\alpha}}$ can infinitely increase the monomial degree, so any $\mathfrak{so}(3,2)$ submodule of $\mathfrak{R}$ is infinite-dimensional. For semisimple Lie algebras over a field of characteristic zero, Weyl’s Theorem states \cite{humphreys2012introduction} that every finite-dimensional representation is semisimple, i.e., any reducible representation can be decomposed into a direct sum of irreducible ones. But this does not hold for infinite-dimensional modules. Therefore, the module $\mathfrak{R}$ decomposes into the direct sum of infinite-dimensional indecomposable modules, but some of those modules can be reducible. If a reducible indecomposable module appears, it signals that the corresponding massless or partially massless field is an off-shell one, and is thus auxiliary, not describing physical dynamics of the system (for off-shell completion of the standard theory see \cite{Misuna:2019ijn, Misuna:2020fck}). Direct analysis of the structure of $\mathfrak{R}$ as $\mathfrak{so}(3,2)$ module is complicated by the infinite dimensionality of submodules, so we tackle this problem via a commuting $\mathfrak{sl}(2)\oplus \mathfrak{sl}(2)$ algebra.

One can check that $\mathfrak{so}(3,2)$ generators (\ref{e: so32 adj-tw 1})--(\ref{e: so32 adj-tw 2}) commute with the algebra $\mathfrak{sl}(2)\oplus \mathfrak{sl}(2)$ generated by
\begin{equation}
H_1=\bar{N}_1-N_1+N_2+\bar{N}_2+2, \quad E_1=y_2^\alpha \dd_{1 \alpha}-i \bar{y}_{1 \dot{\alpha}} \bar{y}_2^{\dot{\alpha}}, \quad F_1=y_1^\alpha \dd_{2 \alpha}-i \bar{\dd}_{1 \dot{\alpha}} \bar{\dd}_2^{\dot{\alpha}}\,,
\end{equation}
\begin{equation}
H_2=N_1-\bar{N}_1+N_2+\bar{N}_2+2, \quad E_2=\bar{y}_2^{\dot{\alpha}} \bar{\dd}_{1 \dot{\alpha}}-i y_{1 \alpha} y_2^{\alpha}, \quad F_2=\bar{y}_1^{\dot{\alpha}} \bar{\dd}_{2 \dot{\alpha}}-i \dd_{1 \alpha} \dd_2^\alpha\,.
\end{equation}
We see that $\mathfrak{sl}(2)\oplus \mathfrak{sl}(2)$ algebra has only infinite-dimensional submodules in $\mathfrak{R}$ as raising operators $E_i$ add traces to any $Y_i$-monomial and the terms with differentiation do not act on these traces. Thus, with respect to $\mathfrak{sl}(2)\oplus \mathfrak{sl}(2)$ the ring $\mathfrak{R}$ also decomposes into the direct sum of indecomposable (possibly reducible) modules. Therefore, the simple Howe decomposition does not work, as in general the Howe-duality demands the decomposition: 
\begin{equation}\label{e: duality tw-adj}
    \mathfrak{R} \simeq \bigoplus_i N_i \otimes M_i\,,
\end{equation}
where $N_i$ is irreducible $\mathfrak{sl}(2)\oplus \mathfrak{sl}(2)$ module and $M_i$ is irreducible $\mathfrak{so}(3,2)$ module. The commutation of $\mathfrak{sl}(2)\oplus \mathfrak{sl}(2)$ with $\mathfrak{so}(3,2)$ only guarantees that 
\begin{equation}
    \rho(\mathfrak{so}(3,2)) \subset \text{End}_{\mathfrak{sl}(2)\oplus \mathfrak{sl}(2)}(\mathfrak{R})\,, \quad \rho(\mathfrak{sl}(2)\oplus \mathfrak{sl}(2)) \subset \text{End}_{\mathfrak{so}(3,2)}(\mathfrak{R})\,,
\end{equation}
i.e., the image of the representation map $\rho$ of one algebra acts on the multiplicity spaces of modules of another. However, the fact that the action of $\mathfrak{sl}(2)\oplus \mathfrak{sl}(2)$ does not change the type of $\mathfrak{so}(3,2)$ module will be sufficient to determine the $\sigma_-$ cohomology. Indeed, an element of $\mathfrak{R}$ can be represented as a linear combination
\thickmuskip=2.5mu 
\medmuskip=2mu
\thinmuskip=1.5mu
\begin{equation}
\Psi\left(Y_1, Y_2\right)=\sum_{\substack{a, b, c \\ \bar{a}, \bar{b}, \bar{c}}} \frac{1}{(a+b)!(\bar{a}+\bar{b})!} \zeta^c \bar{\zeta}\,{}^{\bar{c}}\left(\dd y_1\right)^a\left(\dd y_2\right)^b\left(\bar{\dd} \bar{y}_1\right)^{\bar{a}}\left(\bar{\dd} \bar{y}_2\right)^{\bar{b}} \Psi_{a, b, c}^{\bar{a}, \bar{b}, \bar{c}} \psi_{a+b, \bar{a}+\bar{b}}(u, \bar{u})\,,
\end{equation}
\thickmuskip=5mu 
\medmuskip=4mu
\thinmuskip=3mu
where we introduce auxiliary spinor variables $(u,\bar{u})$ and use notation (\ref{e: traces}), (\ref{e: u notation}). The function $\psi_{a+b, \bar{a}+\bar{b}}(u, \bar{u})$ encodes spin-tensor with $(a+b)$ undotted and $(\bar{a}+\bar{b})$ dotted indices, while the $\Psi_{a, b, c}^{\bar{a}, \bar{b}, \bar{c}}$ is a complex number. It is a standard representation of a rank-two field as a sum of rank-one $\mathfrak{so}(3,1)$ fields
\begin{equation}
\zeta^c \bar{\zeta}\,{}^{\bar{c}}\left(\dd y_1\right)^a\left(\dd y_2\right)^b\left(\bar{\dd} \bar{y}_1\right)^{\bar{a}}\left(\bar{\dd} \bar{y}_2\right)^{\bar{b}} \psi_{a+b, \bar{a}+\bar{b}}(u, \bar{u})\,,
\end{equation}
that form a countable basis in a space of $Y_i$ series. This basis of $\mathfrak{R}$ respects the $\mathfrak{sl}(2)\oplus \mathfrak{sl}(2)$ weight structure
\begin{align}
     H_1 \zeta^c \bar{\zeta}\,{}^{\bar{c}}\left(\dd y_1\right)^a\left(\dd y_2\right)^b\left(\bar{\dd} \bar{y}_1\right)^{\bar{a}}\left(\bar{\dd} \bar{y}_2\right)^{\bar{b}} &= \lambda_1 \zeta^c \bar{\zeta}\,{}^{\bar{c}}\left(\dd y_1\right)^a\left(\dd y_2\right)^b\left(\bar{\dd} \bar{y}_1\right)^{\bar{a}}\left(\bar{\dd} \bar{y}_2\right)^{\bar{b}}\,,\\
     H_2 \zeta^c \bar{\zeta}\,{}^{\bar{c}}\left(\dd y_1\right)^a\left(\dd y_2\right)^b\left(\bar{\dd} \bar{y}_1\right)^{\bar{a}}\left(\bar{\dd} \bar{y}_2\right)^{\bar{b}} &= \lambda_2\zeta^c \bar{\zeta}\,{}^{\bar{c}}\left(\dd y_1\right)^a\left(\dd y_2\right)^b\left(\bar{\dd} \bar{y}_1\right)^{\bar{a}}\left(\bar{\dd} \bar{y}_2\right)^{\bar{b}}\,,
\end{align}
where the weights are
\begin{equation}
    \lambda_1 = (2\bar{c}+\bar{a} + \bar{b} + b -a +2)\,, \quad    \lambda_2 = (2c+a+b+\bar{b}-\bar{a}+2)\,.
\end{equation}
Note that
\begin{equation}\label{e: tw-adj weights sum}
    \lambda_+=\lambda_1 + \lambda_2 = 2(b+\bar{b}+c+\bar{c}+2) > 0\,, \quad \lambda_-=\lambda_2 - \lambda_1 = 2(a + c - \bar{a} - \bar{c})\,,
\end{equation}
thus both weights cannot be simultaneously non-positive. 

One can see that $\mathfrak{R}$ is locally nilpotent with respect to $F_i$. This means that for any basis vector $v$ there are $n_i \in \mathbb{Z}_{\geq0}: F_i^{n_i}  v = 0$. If $\mathfrak{R}$ decomposes into the direct sum of $\mathfrak{sl}(2)\oplus \mathfrak{sl}(2)$ irreps, then one should be able to represent any basis vector of $\mathfrak{R}$ as a finite linear combination of $\mathfrak{sl}(2)\oplus \mathfrak{sl}(2)$ (LW, LW) vectors and their descendants. However, local nilpotency cannot guarantee semisimplicity of $\mathfrak{R}$ and we will see that there are submodules of $\mathfrak{R}$ in which any element can be mapped to the (LW, LW) vector by lowering operators $F_i$, but not all elements are descendants of the (LW, LW) vectors.

We proceed with a search of (LW, LW) functions. The lowest-weight conditions are
\begin{align}
&F_1 \Psi\left(Y_1, Y_2\right)=\sum_{\substack{a, b, c \\ \bar{a}, \bar{b}, \bar{c}}} \frac{1}{(a+b)!(\bar{a}+\bar{b})!}\left\{b \zeta^{c} \bar{\zeta}\,{}^{\bar{c}}\left(\dd y_1\right)^{a+1}\left(\dd y_2\right)^{b-1}\left(\bar{\dd} \bar{y}_1\right)^{\bar{a}}\left(\bar{\dd} \bar{y}_2\right)^{\bar{b}}- \right.\nonumber\\
&\left.-i \bar{c}(\bar{a}+\bar{b}+\bar{c}+1) \zeta^c \bar{\zeta}\,{}^{\bar{c}-1}\left(\dd y_1\right)^a\left(\dd y_2\right)^b\left(\bar{\dd} \bar{y}_1\right)^{\bar{a}}\left(\bar{\dd} \bar{y}_2\right)^{\bar{b}} \right\} \Psi_{a, b, c}^{\bar{a}, \bar{b}, \bar{c}} \psi_{a+b, \bar{a}+\bar{b}}(u, \bar{u}) = 0\,,
\end{align}
\begin{align}
&F_2 \Psi\left(y_1, y_2\right)=\sum_{\substack{a, b, c \\ \bar{a}, \bar{b}, \bar{c}}} \frac{1}{(a+b)!(\bar{a}+\bar{b})!}\left\{\bar{b} \zeta^c \bar{\zeta}\,{}^{\bar{c}}\left(\dd y_1\right)^a\left(\dd y_2\right)^b\left(\bar{\dd} \bar{y}_1\right)^{\bar{a}+1}\left(\bar{\dd} \bar{y}_2\right)^{\bar{b}-1}-\right.\nonumber\\
&\left.-i c(a+b+ c+1) \zeta^{c-1} \bar{\zeta}\,{}^{\bar{c}}\left(\dd y_1\right)^a\left(\dd y_2\right)^b \left(\dd \bar{y}_1\right)^{\bar{a}}\left(\bar{\dd} \bar{y}_2\right)^{\bar{b}}\right\} \Psi_{a, b, c}^{\bar{a}, \bar{b}, \bar{c}} \psi_{a+b, \bar{a}+\bar{b}}(u, \bar{u}) = 0\,.
\end{align}

The solution is particularly simple for $\lambda_+ = 4$, the lowest-weight vectors take the form
\begin{equation}\label{e: tw-adj HW1}
\Psi^{\vec{\lambda}}{}^{\bar{a}, 0}_{a, 0}(Y_1,Y_2)=\frac{1}{a!\bar{a}!}\left(\dd y_1\right)^a\left(\bar{\dd} \bar{y}_1\right)^{\bar{a}}\\
\Psi_{a, 0,0}^{\bar{a}, 0,0} \psi^{\vec{\lambda}}_{a, \bar{a}}(u, \bar{u})\,, \quad \lambda_+ = 4\,, \quad \lambda_- = 2(a - \bar{a})\,.
\end{equation}
For a general weights $\lambda_+ > 4$ the (LW, LW) vectors are
\begin{align}
&\Psi^{\vec{\lambda}}{}^{\bar{a}, \bar{b}}_{a, b}(Y_1,Y_2)=\sum_{\substack{k=0,\dots, b \\ l=0,\dots,\bar{b}}} \frac{b!\bar{b}!(a+b+1)(\bar{a}+\bar{b}+1)}{i^{k+l} k! l! (b-k)!(\bar{b}-l)! (a+b+l+1)!(\bar{a}+\bar{b}+k+1)!} \times \nonumber\\
&\times\Psi_{a, b, 0}^{\bar{a}, \bar{b}, 0} \zeta^l\bar{\zeta}{}^k\left(\dd y_1\right)^{a+k}(\bar{\dd} \bar{y}_1)^{\bar{a}+l}\left(\dd y_2\right)^{b-k}(\bar{\dd} \bar{y}_2)^{\bar{b}-l} \psi^{\vec{\lambda}}_{a+b, \bar{a}+\bar{b}}(u, \bar{u})\,, \label{e: tw-adj HW2} \\ 
&\lambda_+= 2(b+\bar{b}+2)\,, \; \lambda_- = 2(a  - \bar{a})\,. \nonumber
\end{align}
Each of these vectors generates a lowest-weight $\mathfrak{sl}(2)\oplus \mathfrak{sl}(2)$ Verma module $V(\lambda_1, \lambda_2)$. In case of $\mathfrak{sl}(2)$ a lowest-weight Verma module $V(\lambda)$ has a singular submodule if $\lambda \leq 0$. Therefore, there are three cases of $V(\lambda_1, \lambda_2)$ inner structure:
\begin{itemize}
    \item $\lambda_i > 0$: irreducible module;
    \item $\lambda_1 \leq 0\,, \lambda_2 > 0$ or the opposite: in the first case $V(\lambda_1, \lambda_2)$ has one singular submodule $V(-\lambda_1+2, \lambda_2)$ generated from $E_1^{-\lambda_1+1}\Psi^{\vec{\lambda}}$. The second case is obtained by interchanging algebra label $1$ with $2$;
    \item $\lambda_i \leq 0$: $V(\lambda_1, \lambda_2)$ has three singular submodules $V(-\lambda_1+2, \lambda_2)$, $V(\lambda_1, -\lambda_2+2)$ and $V(-\lambda_1+2, -\lambda_2 + 2)$. Singular submodules intersect with each other.
\end{itemize}
As we noted in (\ref{e: tw-adj weights sum}), at least one of the weights is always non-negative, so the last type of $V(\lambda_1, \lambda_2)$ do not appear in $\mathfrak{R}$.

Having found the (LW, LW) vectors we can provide an example of an element which is not representable as a sum of (LW, LW) vectors and their descendants. The element 
$\Phi = \left(\bar{\dd} \bar{y}_1\right)^n \left(\bar{\dd} \bar{y}_2\right)^m \psi_{0, n+m}(u, \bar{u})$ is weighted with respect to the $\mathfrak{sl}(2) \oplus \mathfrak{sl}(2)$ algebra 
\begin{equation}
    H_1 \Phi = (n+m+2)\Phi=h_1 \Phi\,, \quad H_2 \Phi = (m-n+2)\Phi=h_2 \Phi\,,
\end{equation}
and satisfies local nilpotency as $F_1 \Phi =  F_2^{m+1} \Phi = 0$. Suppose it could be expressed as a sum of (LW, LW) vectors and their descendants. The search is confined by the nilpotency conditions and allowed weights to 
\begin{equation}
    \Phi = \sum_{(a,b,\bar{a},\bar{b})} \Psi^{(h_1,h_2)}{}^{\bar{a},\bar{b}}_{a,b} +  \dots+ \sum_{(a,b,\bar{a},\bar{b})} E_2^k \Psi^{(h_1,h_2-2k)}{}^{\bar{a},\bar{b}}_{a,b}+\dots +\sum_{(a,b,\bar{a},\bar{b})} E_2^m \Psi^{(h_1,h_2-2m)}{}^{\bar{a},\bar{b}}_{a,b}\,.
\end{equation}

By applying powers of $F_2$ to the decomposition of $\Phi$ and using the LW condition, it is easy to arrive at a contradiction. The source of the contradiction are terms $F_2^{l}E_2^k \Psi^{(h_1,h_2-2k)}{}^{\bar{a},\bar{b}}_{a,b}$: $h_2 \leq 2k$ and $k \leq h_2 - 1$, which for $l=h_2-k-1$ result in a singular submodule growing from a vector $E_2^{2k-h_2+1} \Psi^{(h_1,h_2-2k)}{}^{\bar{a},\bar{b}}_{a,b}$ . Application of $F_2$ annihilates singular vectors creating a \enquote{gap} in a system of equations on the LW vectors and resulting in inability to decompose $\Phi$. For example, consider $\Phi = \left(\bar{\dd} \bar{y}_1\right)^2\left(\bar{\dd} \bar{y}_2\right)^2 \psi_{0, 4}(u, \bar{u})$. One can check that
\begin{align}
    H_1 \Phi &= 6 \Phi\,, &\quad H_2 \Phi &= 2\Phi\,,\\
    F_1 \Phi &= 0\,, &\quad F_2^3 \Phi &= 0\,.
\end{align}
Therefore, the decomposition is confined to 
\begin{equation}
    \Phi = \left(\bar{\dd} \bar{y}_1\right)^2\left(\bar{\dd} \bar{y}_2\right)^2 \psi_{0, 4}(u, \bar{u})=\sum_{(a,b,\bar{a},\bar{b})} \Psi^{(6,2)}{}^{\bar{a},\bar{b}}_{a,b} + \sum_{(a,b,\bar{a},\bar{b})} E_2 \Psi^{(6,0)}{}^{\bar{a},\bar{b}}_{a,b} + \sum_{(a,b,\bar{a},\bar{b})} E_2^2 \Psi^{(6,-2)}{}^{\bar{a},\bar{b}}_{a,b}\,.
\end{equation}
Application of $F_2^2$ to both sides results in
\begin{equation}
    \sum_{(a,b,\bar{a},\bar{b})} \Psi^{(6,-2)}{}^{\bar{a},\bar{b}}_{a,b} = \frac{1}{2}\left(\bar{\dd} \bar{y}_2\right)^4 \psi_{0, 4}(u, \bar{u})\,.
\end{equation}
Substituting it into the decomposition of $\Phi$ and applying $F_2$, we arrive at a contradiction
\begin{equation}
    0 = \left(2\left(\bar{\dd} \bar{y}_1\right)\left(\bar{\dd} \bar{y}_2\right)^3  + i \zeta \left(\bar{\dd} \bar{y}_2\right)^4\right) \psi_{0, 4}(u, \bar{u})\,
\end{equation}
due to the singular vector $\sum_{(a,b,\bar{a},\bar{b})} E_2 \Psi^{(6,0)}{}^{\bar{a},\bar{b}}_{a,b}$.

The presence of vectors that are not descendants of the LW ones, but are nevertheless nilpotent with respect to the lowering generators $F_i$, suggests the appearance of non-split extension modules, i.e., there are modules $E \subset\mathfrak{R}$:
\begin{equation}
0 \rightarrow V\left(\lambda_1, \lambda_2\right) \rightarrow E \rightarrow V\left(\mu_1, \mu_2\right) \rightarrow 0
\end{equation}
and $E$ does not split
\begin{equation}
E \not\simeq V\left(\lambda_1, \lambda_2\right) \oplus V\left(\mu_1, \mu_2\right)\,.
\end{equation}
In appendix \ref{sec: nonsplit extension} we show that 
\begin{align}
&E \text{ is a non-split extension of $V\left(\mu_1, \mu_2\right)$ by $V\left(\lambda_1, \lambda_2\right)$}\,,\;  \text{if}\,\nonumber\\
&\text{and only if}\;\left(\mu_1, \mu_2\right)=\left(2-\lambda_1, \lambda_2\right)\text { or } \left(\mu_1, \mu_2\right)=\left(\lambda_1,2-\lambda_2\right)\,,\label{NonSplit}
\end{align}
where $\lambda_1 \leq 0$ in the first case and $\lambda_2 \leq 0$ in the second case (if both weights $\lambda_i \leq 0$, two additional cases of non-split extensions occur, however such lowest weights do not appear in $\mathfrak{R}$).
This means that singular vectors are responsible for non-split extensions of lowest-weight Verma modules in the decomposition of $\mathfrak{R}$. Vectors that become singular after a quotient by a singular submodule are sometimes called subsingular in the literature. In the context of unfolded equations subsingular and subsubsingular modules were discussed in \cite{Shaynkman:2004vu} in the context of conformal algebra $\mathfrak{o}(M, 2)$.

For example, consider the short exact sequence
\begin{equation}
0 \rightarrow V\left(6, 0\right) \overset{i}{\rightarrow} E \overset{p}{\rightarrow} V\left(6, 2\right) \rightarrow 0\,.
\end{equation}
Module $E$ is generated from two vectors $v_0$ and $\hat{w}_2$:
\begin{align}
    v_0 &= \left[\left(\bar{\dd} \bar{y}_1\right)^3\left(\bar{\dd} \bar{y}_2\right)-\frac{i}{2} e\left(\bar{\dd} \bar{y}_1\right)^4\right]\psi^{\vec{\lambda}}_{0, 4}(u, \bar{u})\,,
\\
    \hat{w}_2 &= \left[\left(\bar{\dd} \bar{y}_1\right)^2\left(\bar{\dd} \bar{y}_2\right)^2-\frac{i}{2} e(\bar{\dd} \bar{y}_1)^3\left(\bar{\dd} \bar{y}_2\right)\right]\psi^{\vec{\lambda}}_{0, 4}(u, \bar{u})\,.
\end{align}

For which
\begin{alignat}{4}
    F_i v_0 &= 0\,, &\quad H_1 v_0 &= 6 v_0\,, &\quad H_2 v_0 &= 0\,;\\
     F_1 \hat{w}_2 &= 0\,, &\, F_2 \hat{w}_2 &= v_0\,, &\, H_1 \hat{w}_2 &= 6 \hat{w}_2\,, \quad H_2 \hat{w}_2 = 2 \hat{w}_2\,.
\end{alignat}
Vector $v_0$ and its descendants span submodule $V(6,0) \subset E$ and vector $\hat{w}_2 = p^{-1}(w_2)$ is a pre-image of a projection map $p$, where $w_2$ is a lowest-weight vector of $V\left(6, 2\right)$. Note that $\hat{w}_2 \neq E_2 v_0$ as
\begin{equation}
    E_2 v_0 = \left[\left(\bar{\dd} \bar{y}_1\right)^2\left(\bar{\dd} \bar{y}_2\right)^2-ie(\bar{\dd} \bar{y}_1)^3\left(\bar{\dd} \bar{y}_2\right) - \frac{1}{6} e^2\left(\bar{\dd} \bar{y}_1\right)^4\right]\psi^{\vec{\lambda}}_{0, 4}(u, \bar{u})\,.
\end{equation}
One can check that
\begin{equation}
    F_2  E_2^n \hat{w}_2=E_2^n v_0-n(n+1) E_2^{n-1} \hat{w}_2\,.
\end{equation}
Therefore, a quotient of $E$ by $V(6,0)$ is isomorphic to $V(6,2)$, i.e., $E$ is a genuine non-split extension of $V(6,2)$ by $V(6,0)$.

Thus, we assume that as a $\mathfrak{sl}(2)\oplus \mathfrak{sl}(2)$ module $\mathfrak{R}$ decomposes into the sum of Verma modules and non-split extension modules
\begin{equation}
    \mathfrak{R} \simeq \bigoplus_{\lambda_1, \lambda_2} V(\lambda_1\,, \lambda_2) \bigoplus_{E\in\text{non-split}} E\,.
\end{equation}

Consequently, the multiplicity spaces of $\mathfrak{so}(3,2)$ modules are indecomposable locally nilpotent $\mathfrak{sl}(2)\oplus \mathfrak{sl}(2)$ modules, so the restriction to the class of (LW, LW) $\mathfrak{sl}(2)\oplus \mathfrak{sl}(2)$ vectors (which is always possible due to the local nilpotence) covers all types of indecomposable (possibly reducible) $\mathfrak{so}(3,2)$ modules present in $\mathfrak{R}$. In the following $\sigma_-$ cohomology analysis we will limit ourselves to (LW, LW) vectors (\ref{e: tw-adj HW1}), (\ref{e: tw-adj HW2}). 

As in section \ref{sec: one-form adj-adj} we restrict the covariant constancy equation (\ref{e:cov2}) to (LW, LW) vectors (\ref{e: tw-adj HW2}), extract equations on the corresponding spin-tensors, and then introduce an auxiliary generating function for a proper definition of $\sigma_-$.

The Lorentz derivative $D_L$ has an expected action on the spin-tensor
\thickmuskip=2.5mu 
\medmuskip=2mu
\thinmuskip=1.5mu
\begin{align}
&D_L \Psi^{\vec{\lambda}}{}^{\bar{a}, \bar{b}}_{a, b}(Y_1,Y_2)=\sum_{\substack{k=0,\dots, b \\ l=0,\dots,\bar{b}}} \frac{b!\bar{b}!(a+b+1)(\bar{a}+\bar{b}+1)}{i^{k+l} k! l! (b-k)!(\bar{b}-l)! (a+b+l+1)!(\bar{a}+\bar{b}+k+1)!}\nonumber \\
&\Psi_{a, b, 0}^{\bar{a}, \bar{b}, 0}\zeta^l\bar{\zeta}{}^k\left(\dd y_1\right)^{a+k}(\bar{\dd} \bar{y}_1)^{\bar{a}+l}\left(\dd y_2\right)^{b-k}(\bar{\dd} \bar{y}_2)^{\bar{b}-l}\left\{\d_x+\omega(u, \dd)+\bar{\omega}(\bar{u}, \bar{\dd})\right\} \psi^{\vec{\lambda}}_{a+b, \bar{a}+\bar{b}}(u, \bar{u}) \,.
\end{align}
\thickmuskip=5mu 
\medmuskip=4mu
\thinmuskip=3mu
The action of the momentum operator $P$ results in a cumbersome expression presented in appendix \ref{sec: momentum action}.

All these facts together allow us to transfer the action of the covariant derivative $\mathcal{D}$ to the spin-tensor level (with the term $b!\bar{b}!(a+b+1)(\bar{a}+\bar{b}+1)$ absorbed into $\psi^{\vec{\lambda}}_{a+b, \bar{a}+\bar{b}}$)
\begin{align}
&D_L \psi^{\vec{\lambda}}_{a+b, \bar{a}+\bar{b}}(u, \bar{u})-i \frac{(a+b+\bar{b}+1)(\bar{a}+b+\bar{b}+1)}{(a+b)(\bar{a}+\bar{b})} e(u, \bar{u}) \psi^{\vec{\lambda}}_{a+b-1, \bar{a}+\bar{b}-1}(u, \bar{u}) +\nonumber\\
&+ i \frac{(a+1)(\bar{a}+1)}{(a+b+2)(\bar{a}+\bar{b}+2)} e (\dd, \bar{\dd}) \psi^{\vec{\lambda}}_{a+b+1, \bar{a}+\bar{b}+1} (u, \bar{u}) +\nonumber\\
&+ \frac{b(b+a-\bar{a})}{(a+b)(\bar{a}+\bar{b}+2)} e(u, \bar{\dd}) \psi^{\vec{\lambda}}_{a + b-1, \bar{a} + \bar{b}+1}(u, \bar{u}) + \nonumber\\
&+\frac{\bar{b}(\bar{b}+\bar{a}-a)}{(a+b+2)(\bar{a}+\bar{b})} e (\dd, \bar{u}) \psi^{\vec{\lambda}}_{a + b+1,\bar{a} + \bar{b} - 1}(u, \bar{u}) = 0\,, \label{e: tw-adj spin-tensor equation} \\
& \text{where} \quad\lambda_+= 2(b+\bar{b}+2)\,, \quad \lambda_- = 2(a  - \bar{a})\,. \nonumber
\end{align}

These different spin-tensors can be concisely written using a generating function
\begin{equation}\label{tw adj generating function}
\Psi\left(u, \bar{u}, v, \bar{v}\right)=\sum_{a, b, \bar{a}, \bar{b}} \psi^{\vec{\lambda}}_{a+b, \bar{a}+\bar{b}}(u, \bar{u}) v^a \bar{v}^{\bar{a}}\,,
\end{equation}
where variables $v\,, \bar{v}$ are introduced to keep the information about the number of $y_1$ and $\bar{y}_1$ oscillators in a traceless term of (\ref{e: tw-adj HW2}), i.e., values of $a, \bar{a}$ in the corresponding vector $\Psi^{\vec{\lambda}}{}^{\bar{a}, \bar{b}}_{a, b}$.

With the help of a generating function (\ref{tw adj generating function}) the equation for spin-tensors turns into
\begin{align}
&\left(D_L-i \frac{\left(\hat{N}_u+\hat{\bar{N}}_u-\hat{\bar{N}}_v+1\right)\left(\hat{N}_u+\hat{\bar{N}}_u-\hat{N}_v+1\right)}{\hat{N}_u \hat{\bar{N}}_u} e(u, \bar{u})\, v\, \bar{v} +\right.\nonumber\\ 
&+\frac{i}{\left(\hat{N}_u+2\right)\left(\hat{\bar{N}}_u+2\right)} e(\dd, \bar{\dd})\, \dd_v \,\bar{\dd}_v
+ \frac{\left(\hat{N}_u-\hat{N}_v\right)\left(\hat{N}_u-\hat{\bar{N}}_v\right)}{\hat{N}_u\left(\hat{\bar{N}}_u+2\right)} e(u, \bar{\dd})+\nonumber\\
& \left.+\frac{\left(\hat{\bar{N}}_u-\hat{\bar{N}}_v\right)\left(\hat{\bar{N}}_u-\hat{N}_v\right)}{\left(\hat{N}_u+2\right) \hat{\bar{N}}_u} e(\dd, \bar{u})\right) \Psi\left(u, \bar{u}, v, \bar{v}\right)=0\,,
\end{align}
where
\begin{alignat}{2}
\hat{N}_u&=u^\alpha \frac{\dd}{\dd u^\alpha}\,, &\quad \hat{\bar{N}}_u&=\bar{u}^{\dot{\alpha}} \frac{\dd}{\dd \bar{u}^{\dot{\alpha}}}\,, \\
 \hat{N}_v&=v \dd_v\,, &\quad \hat{\bar{N}}_v&=\bar{v} \bar{\dd}_v\,.
\end{alignat}

With the grading $G = |\hat{N}_u - \hat{\bar{N}}_u| + 2\min(\hat{N}_v\,,\hat{\bar{N}}_v)$, the properly defined $\sigma_-$ is
\begin{align}
&\sigma_-=\frac{i}{\left(\hat{N}_u+2\right)\left(\hat{N}_u+2\right)} e(\dd, \bar{\dd}) \,\dd_v \, \bar{\dd}_v+\nonumber\\
&+\frac{\left(\hat{N}_u-\hat{N}_v\right)\left(\hat{N}_u-\hat{\bar{N}}_v\right)}{\hat{N}_u\left(\hat{\bar{N}}_u+2\right)} e(u, \bar{\dd}) \theta\left(\hat{\bar{N}}_u-\hat{N}_u-\hat{\bar{N}}_v+\hat{N}_v\right)+\nonumber\\
&+\frac{\left(\hat{\bar{N}}_u-\hat{\bar{N}}_v\right)\left(\hat{\bar{N}}_u-\hat{N}_v\right)}{\left(\hat{N}_u+2\right) \hat{\bar{N}}_u} e(\dd, \bar{u}) \theta\left(\hat{N}_u-\hat{\bar{N}}_u-\hat{N}_v+\hat{\bar{N}}_v\right)\,.\label{e: sigma tw-adj even}
\end{align}

Similarly to the case of the $(adj\otimes adj)$ module, $\sigma_-$ acts on $\mathfrak{so}(3,1)$ two-row Young diagrams by cutting off boxes from the first and the second rows. However, in the case of the $(tw\otimes adj)$ module with fixed weights $\vec{\lambda}$ we deal with an infinite set of $\mathfrak{so}(3,1)$ Young diagrams, and not with a finite one, as in the module $(adj\otimes adj)$, since only a difference $\lambda_- = 2(a  - \bar{a})$ is fixed. The first term in $\sigma_-$ maps $\Psi^{\vec{\lambda}}{}^{\bar{a}, \bar{b}}_{a, b}$ to $\Psi^{\vec{\lambda}}{}^{\bar{a}-1, \bar{b}}_{a-1, b}$ and the remaining ones map $\Psi^{\vec{\lambda}}{}^{\bar{a}, \bar{b}}_{a, b}$ to $\Psi^{\vec{\lambda}}{}^{\bar{a}, \bar{b}\pm 1}_{a, b\mp 1}$. As a result, the grading is reduced by $2$.

It is important to note that (\ref{e: sigma tw-adj even}) in that form is defined for $|b-\bar{b}| \in 2 \mathbb{Z}$ as can be seen from $\theta$-functions.
\begin{equation}
    \left(\hat{N}_u-\hat{\bar{N}}_u-\hat{N}_v+\hat{\bar{N}}_v\right)\psi^{\vec{\lambda}}_{a+b, \bar{a}+\bar{b}}(u, \bar{u}) v^a \bar{v}^{\bar{a}} = (b-\bar{b}) \psi^{\vec{\lambda}}_{a+b, \bar{a}+\bar{b}}(u, \bar{u}) v^a \bar{v}^{\bar{a}}\,.
\end{equation}

However, zero-form fields with $|b-\bar{b}| \not\in 2 \mathbb{Z}$ also appear in the context of bosonic massless and partially massless fields of section \ref{sec: one-form adj-adj}. In that case the \enquote{bosonic} $\sigma_-$ (\ref{e: sigma tw-adj even}) and the set of $C$ fields split into the regions $b-\bar{b} \geq 1$ and $\bar{b} - b \geq 1$ like in a case of fermions in standard HS theory \cite{Bychkov:2021zvd} (as $\sigma_-$ defined in (\ref{e: sigma tw-adj even}) maps vectors with $b-\bar{b} = \pm1$ to $b-\bar{b} = \mp1$, it does not reduce grading $G$ and can not be seen as a proper $\sigma_-$ operator). This means that the correct definition would be
\thickmuskip=1.5mu
\medmuskip=1.5mu
\thinmuskip=1mu
{\small
\begin{align}
&\sigma_-\bigg|_{b-\bar{b} \geq 1}=\frac{i}{\left(\hat{N}_u+2\right)\left(\hat{N}_u+2\right)} e(\dd, \bar{\dd}) \,\dd_v \, \bar{\dd}_v+\frac{\left(\hat{\bar{N}}_u-\hat{\bar{N}}_v\right)\left(\hat{\bar{N}}_u-\hat{N}_v\right)}{\left(\hat{N}_u+2\right) \hat{\bar{N}}_u} e(\dd, \bar{u}) \theta\left(\hat{N}_u-\hat{\bar{N}}_u-\hat{N}_v+\hat{\bar{N}}_v - 1\right)\,,\nonumber\\
&\sigma_-\bigg|_{b-\bar{b} \leq 1}=\frac{i}{\left(\hat{N}_u+2\right)\left(\hat{N}_u+2\right)} e(\dd, \bar{\dd}) \,\dd_v \, \bar{\dd}_v+\frac{\left(\hat{N}_u-\hat{N}_v\right)\left(\hat{N}_u-\hat{\bar{N}}_v\right)}{\hat{N}_u\left(\hat{\bar{N}}_u+2\right)} e(u, \bar{\dd}) \theta\left(\hat{\bar{N}}_u-\hat{N}_u-\hat{\bar{N}}_v+\hat{N}_v - 1\right)\,.\label{e: sigma tw-adj odd}
\end{align}
}
\thickmuskip=5mu
\medmuskip=4mu
\thinmuskip=3mu

In regions $|b - \bar{b}| \geq 3$ the theta functions are constant and the cohomology classes of (\ref{e: sigma tw-adj even}) coincide with those of (\ref{e: sigma tw-adj odd}). Therefore, we will have to consider separately only the cases of $|b-\bar{b}| = 1$, where the definition of $\sigma_-$ is different.

\subsection{Computation of $H^\bullet(\sigma_-)$}

Now we proceed with derivation of lower cohomology groups $H^{0,1}(\sigma_-)$ in the $(tw \otimes adj)$ sector. As zero-form fields $C$ are subject to the covariant constancy $(tw \otimes adj)$ equation, $H^0(\sigma_-)$ lists the primary fields and $H^{1}(\sigma_-)$ gives invariant differential operators on the fields. We follow the calculation procedure used in section \ref{sec: cohomology computation adj-adj}. Unfortunately, due to the presence of $\mathfrak{sl}(2)\oplus \mathfrak{sl}(2)$ reducible indecomposable modules we can not lift cohomology representatives in terms of auxiliary variables $(u, \bar{u}, v, \bar{v})$ to the full list of $Y_i$-functions. The lift can be performed in a way similar to the one used in section \ref{sec: cohomology computation adj-adj} if we restrict ourself to the class of (LW, LW) vector and their descendants. The cohomology groups $H^{0,1}(\sigma_-)$ in the $(adj \otimes tw)$ sector result from the swap of parameters $(a,b)$ and $(\bar{a},\bar{b})$ in all formulas below.

\subsubsection{$H^0(\sigma_-)$}

A general zero-form is
\begin{equation}
C^{\vec{\lambda}}\left(u, \bar{u}, v, \bar{v}\right)=\sum_{\substack{\lambda_+= 2(b+\bar{b}+2)\\ \lambda_- = 2(a  - \bar{a})}} C^{\vec{\lambda}}_{a+b, \bar{a}+\bar{b}}(u, \bar{u}) \,v^a \, \bar{v}^{\bar{a}}\,.
\end{equation}

Then application of $\sigma_-$ (\ref{e: sigma tw-adj even}) yields
\begin{align}
& \sigma_{-} C^{\vec{\lambda}}\left(u, \bar{u}, v, \bar{v}\right)=\sum_{\substack{\lambda_+= 2(b+\bar{b}+2)\\ \lambda_- = 2(a  - \bar{a})}}\Bigg\{\frac{i a \bar{a}}{(a+b+1)(\bar{a}+\bar{a}+1)} e(\dd, \bar{\dd}) C^{\vec{\lambda}}_{a+b, \bar{a}+\bar{b}}(u, \bar{u}) v^{a-1} \bar{v}^{\bar{a}-1}+\nonumber\\
&+ \theta(\bar{b}-b) \frac{(b+1)(a+b-\bar{a}+1)}{(a+b+1)(\bar{a}+\bar{b}+1)} e(u, \bar{\dd}) C^{\vec{\lambda}}_{a+b, \bar{a}+\bar{b}}(u, \bar{u}) v^a  \bar{v}^{\bar{a}}+\nonumber\\
&  + \theta(b-\bar{b}) \frac{(\bar{b}+1)(\bar{a}+\bar{b}-a+1)}{(a+b+1)(\bar{a}+\bar{b}+1)} e(\dd, \bar{u})  C^{\vec{\lambda}}_{a+b, \bar{a}+\bar{b}}(u, \bar{u}) v^a \bar{v}^{\bar{a}}\Bigg\}=0 \,.\label{e: sigma tw-adj zero-form action}
\end{align}

Recall that Lorenz-irreducible one-forms (\ref{e: irreducible one-forms}) are independent, so each term in the sum should vanish independently. Since $\text{Im}(\sigma_-) \vert_{\deg \Lambda = 0} = \{\emptyset\}$, the cohomology group is
\thickmuskip=2mu
\begin{align}
&H^0(\sigma_-)=\ker(\sigma_-) \vert_{\deg \Lambda = 0} \supset \Bigg\{ \lambda_{-},\lambda_{+} \in 2\mathbb{Z}:\quad \lambda_+= 2(b+\bar{b}+2)\geq 4\,, \quad \lambda_- = 2(a  - \bar{a})\nonumber\\
&\bullet \quad C^{\vec{\lambda}}_{\frac{|\lambda_-|}{2}+\frac{\lambda_+}{4} - 1,  \;\frac{\lambda_+}{4} - 1}(u, \bar{u}) v^{\frac{|\lambda_-|}{2}}\,, \lambda_-\geq 0 \, \oplus \, C^{\vec{\lambda}}_{\frac{\lambda_+}{4} - 1, \;\frac{|\lambda_-|}{2}+\frac{\lambda_{+}}{4} - 1}(u, \bar{u}) \bar{v}^{\frac{|\lambda_-|}{2}}\,, \lambda_-\leq 0\,: \lambda_{+} \in 4\mathbb{Z} \,; \nonumber \\
&\bullet \quad C^{\vec{\lambda}}_{\frac{|\lambda_-|}{2}-1, \frac{\lambda_+}{2}-1}(u, \bar{u}) \bar{v}^{\frac{|\lambda_-|}{2}}\,, \lambda_-\leq -2 \, \oplus \, C^{\vec{\lambda}}_{\frac{\lambda_+}{2}-1, \frac{|\lambda_-|}{2}-1}(u, \bar{u}) v^{\frac{|\lambda_-|}{2}}\,, \lambda_-\geq 2\,: \lambda_+>2|\lambda_-| \,\Bigg\}\,.\label{e: H0 tw-adj}
\end{align}
\thickmuskip=5mu
Now consider the case of $|b - \bar{b}| = 1$ and $\sigma_-$ (\ref{e: sigma tw-adj odd}). We get the $\ker(\sigma_-)$ from (\ref{e: sigma tw-adj zero-form action}) by omitting last two terms. Thus, the additional cohomology
\thickmuskip=2mu
\begin{align}
&H^0(\sigma_-)=\ker(\sigma_-) \vert_{\deg \Lambda = 0} \supset \Bigg\{ \lambda_{-} \in 2\mathbb{Z}\,,\, \lambda_{+} \in 2\mathbb{Z}:\quad \lambda_+= 2(b+\bar{b}+2)> 4\,, \quad \lambda_- = 2(a  - \bar{a})\nonumber\\
&C_{\frac{|\lambda_-|}{2}+b, \bar{b}}(u, \bar{u})\, v^{\frac{|\lambda_-|}{2}}\,, \lambda_- \geq 0 \,\oplus\, C_{b, \frac{|\lambda_-|}{2}+\bar{b}}(u, \bar{u})\, \bar{v}^{\frac{|\lambda_-|}{2}}\,, \lambda_- \leq 0: |b-\bar{b}|=1 \Bigg\}\,.\label{e: H0 tw-adj additional}
\end{align}
\thickmuskip=5mu

We see that there are two distinctive \enquote{bosonic} cohomology classes. The first family in (\ref{e: H0 tw-adj}) exists for any allowed values of the $\mathfrak{sl}(2)\oplus \mathfrak{sl}(2)$ weights, i.e., for any indecomposable $\mathfrak{so}(3,2)$ module parametrized by weights $\lambda_{-} \in 2\mathbb{Z}$, $\lambda_{+} \in 4\mathbb{Z}$, while the second family of 0-cocycles appears only if $\lambda_+ > 2|\lambda_-| \geq 4$. The analogous observations holds for the \enquote{fermionic} case. Representatives of (\ref{e: H0 tw-adj additional}) always exist and the second family in (\ref{e: H0 tw-adj}) vanishes for $\lambda_+ \leq 2|\lambda_-|$. Note that (\ref{e: H0 tw-adj additional}) and the second class in (\ref{e: H0 tw-adj}) correspond to $\lambda_+ > 4$, which implies a non-trivial dependence on $Y_2$ variables responsible for the adjoint module. 

At the boundary value $\lambda_+ = 4$ the second cohomology family of (\ref{e: H0 tw-adj}) and representatives of (\ref{e: H0 tw-adj additional}) vanish. The leftover 0-cocycles reduce to
\begin{equation}
    C^{\vec{\lambda}}_{\frac{|\lambda_-|}{2},  \;0}(u, \bar{u}) v^{\frac{|\lambda_-|}{2}} \quad \oplus \quad C^{\vec{\lambda}}_{0, \;\frac{|\lambda_-|}{2}}(u, \bar{u}) \bar{v}^{\frac{|\lambda_-|}{2}}\,,
\end{equation}
i.e., rectangular $\mathfrak{so}(3,1)$ Young diagram, which can be associated with a generalized Weyl tensor for symmetric massless fields. Indeed, truncation to the unitary submodule within $(tw\otimes adj)$ involves fixing the representation in the adjoint factor to trivial, i.e., eliminating dependence on $y_2, \bar{y}_2$ oscillators from zero-form $C$ fields. At the linear level this can be accomplished via imposing boundary conditions on the fields \cite{Tarusov:2025sre}. The remaining $Y_1$ dependent field $C$ describes a standard twisted-adjoint HS module that encodes a generalized Weyl tensors for spin-$s$ massless particles.  

At $\lambda_+ > 4$ the 0-cocycles have the form of a general two-row Young diagram, so it might be possible to relate some of them with a generalized Weyl tensor for massless and partially massless fields provided the correct gluing with found 2-cocycles in the $(adj \otimes adj)$ one-form sector (\ref{e: cohomology-2}) via $\Upsilon^{\eta}(\Omega_{AdS},\Omega_{AdS},C)$ vertex.

\subsubsection{$H^1(\sigma_-)$}
A general one-form decomposes with respect to irreducible one-forms as
\begin{align}
&\mathcal{E}^{\vec{\lambda}}(u,\bar{u},v,\bar{v})=\sum_{\substack{\lambda_+= 2(b+\bar{b}+2)\\ \lambda_- = 2(a  - \bar{a})}}\bigg(e(\dd, \bar{\dd}) \varepsilon^{\vec{\lambda}}_{A}{}_{a+b+1, \bar{a}+\bar{b}+1}(u,\bar{u})\, v^a \bar{v}^{\bar{a}}+\nonumber\\
&+e(u, \bar{u}) \varepsilon^{\vec{\lambda}}_{B}{}_{a+b-1, \bar{a}+\bar{b}-1}(u,\bar{u})\, v^a \bar{v}^{\bar{a}} + e(\dd, \bar{u}) \varepsilon^{\vec{\lambda}}_{C}{}_{a+b+1, \bar{a}+\bar{b}-1}(u,\bar{u})\, v^a \bar{v}^{\bar{a}}+\nonumber\\
&+e(u, \bar{\dd}) \varepsilon^{\vec{\lambda}}_{D}{}_{a+b-1, \bar{a}-\bar{b}+1}(u,\bar{u})\, v^a \bar{v}^{\bar{a}}\bigg)\,.
\end{align}

The $\text{Im}(\sigma_-)|_{\deg \Lambda = 2}$ of (\ref{e: sigma tw-adj even}) decomposes as
\begin{flalign}
    &\sigma_-\mathcal{E}^{\vec{\lambda}}(u,\bar{u},v,\bar{v}) = \sum_{\substack{\lambda_+= 2(b+\bar{b}+2)\\\lambda_- = 2(a  - \bar{a})}}\Bigg\{\frac{i a \bar{a}}{(a+b+1)\left(\bar{a}+\bar{b}+1\right)}\times \nonumber&&\\
    &\times\left(\frac{\bar{a}+\bar{b}+1}{2} H\left(u, \dd \right) +\frac{a+b+1}{2} H(\bar{u}, \bar{\dd})\right) \varepsilon^{\vec{\lambda}}_{B}{}_{a+b-1, \bar{a}+\bar{b}-1}(u,\bar{u})\, v^{a-1} \bar{v}^{\bar{a}-1}+ \nonumber\\
    &+\frac{i a \bar{a}\left(\bar{a}+\bar{b}+1\right)}{2(a+b+1)\left(\bar{a}+\bar{b}+1\right)} H(\dd, \dd) \varepsilon^{\vec{\lambda}}_{C}{}_{a+b+1, \bar{a}+\bar{b}-1}(u, \bar{u}) v^{a-1} \bar{v}^{\bar{a}-1}+\nonumber&&
\end{flalign}
\begin{flalign}
    &+\frac{i a \bar{a}(a+b+1)}{2(a+b+1)\left(\bar{a}+\bar{b}+1\right)} \bar{H}(\bar{\dd}, \bar{\dd}) \varepsilon^{\vec{\lambda}}_{D}{}_{a+b-1, \bar{a}+\bar{b}+1}(u,\bar{u}) v^{a-1} \bar{v}^{\bar{a}-1}-\nonumber&&\\
    &-\theta(\bar{b}-b) \frac{(b+1)(a+b-\bar{a}+1)}{2(\bar{a}+\bar{b}+1)} \bar{H}(\bar{\dd}, \bar{\dd}) \varepsilon^{\vec{\lambda}}_{A}{}_{a+b+1, \bar{a}+\bar{b}+1}(u,\bar{u}) v^a \bar{v}^{\bar{a}}+\nonumber&&\\
    &+\theta(\bar{b}-b) \frac{(b+1)(a+b-\bar{a}+1)}{2(a+b+1)} H(u, u) \varepsilon^{\vec{\lambda}}_{B}{}_{a+b-1, \bar{a}+ \bar{b}-1}(u,\bar{u})  v^a \bar{v}^{\bar{a}}+\nonumber &&\\
    &+\theta(\bar{b}-b) \frac{(b+1)(a+b-\bar{a}+1)}{(a+b+1)(\bar{a}+\bar{b}+1)}\left(\frac{\bar{a}+\bar{b}+1}{2} H(u, \dd)- \right.\nonumber&&\\
    &\left.-\frac{a+b+1}{2} \bar{H}(\bar{u}, \bar{\dd})\right) \varepsilon^{\vec{\lambda}}_{C}{}_{a+b+1, \bar{a}+\bar{b}-1}(u,\bar{u}) v^a \bar{v}^{\bar{a}}-\nonumber&&\\
    &-\theta(b-\bar{b}) \frac{(\bar{b}+1)(\bar{a}+\bar{b}-a+1)}{2\left(a+b+1\right)} H(\dd, \dd) \varepsilon^{\vec{\lambda}}_{A}{}_{a+b+1, \bar{a}+\bar{b}+1}(u,\bar{u}) v^a \bar{v}^{\bar{a}}+ \nonumber&&\\
    &+\theta(b-\bar{b}) \frac{(\bar{b}+1)(\bar{a}+\bar{b}-a+1)}{2(\bar{a}+\bar{b}+1)} \bar{H}(\bar{u}, \bar{u}) \varepsilon^{\vec{\lambda}}_{B}{}_{a+b-1, \bar{a}+\bar{b}-1}(u,\bar{u}) v^a \bar{v}^{\bar{a}}+\nonumber&&\\
    &+\theta(b-\bar{b}) \frac{(\bar{b}+1)(\bar{a}+\bar{b}-a+1)}{\left(a+b+1\right)(\bar{a}+\bar{b}+1)}\left(\frac{a+b+1}{2} \bar{H}(\bar{u}, \bar{\dd})-\right.\nonumber&&\\
    &\left.-\frac{\bar{a}+\bar{b}+1}{2} H(u, \dd)\right) \varepsilon^{\vec{\lambda}}_{D}{}_{a+b-1, \bar{a}+\bar{b}+1}(u,\bar{u}) v^a \bar{v}^{\bar{a}}\Bigg\}\,.
\end{flalign}

Using linear independence of the irreducible two-forms, we can find the null solutions:
\begin{flalign}
    &\ker(\sigma_-)\vert_{\deg \Lambda = 1} = \Bigg\{ \lambda_{-} \in 2\mathbb{Z}\,, \lambda_{+} \in 4\mathbb{Z}:\quad \lambda_+= 2(b+\bar{b}+2)\geq 4\,, \quad \lambda_- = 2(a  - \bar{a}):\nonumber&&\\
    &\bullet \quad e(u, \bar{u}) \varepsilon^{\vec{\lambda}}_{0, 0} v \, \bar{v}\,;\nonumber&&\\
    &\bullet \quad e(\dd, \bar{\dd}) \varepsilon^{\vec{\lambda}}_{a+\frac{\lambda_+}{4}, \,\bar{a}+\frac{\lambda_+}{4}} v^a \bar{v}^{\bar{a}}\,,\quad \forall a\,,\bar{a} \in \mathbb{Z}_{\geq 0}\,;\nonumber&&\\
    &\bullet \quad e(\dd, \bar{\dd}) \varepsilon^{\vec{\lambda}}_{a+b+1, \bar{a}+\bar{b}+1} v^a \bar{v}^{\bar{a}}: b>\bar{b},\; a=\bar{a}+\bar{b}+1 \;\oplus \nonumber&&\\
    &\oplus \;e(\dd, \bar{\dd}) \varepsilon^{\vec{\lambda}}_{a+b+1, \bar{a}+\bar{b}+1} v^a \bar{v}^{\bar{a}}: b<\bar{b},\;  \bar{a}=a+b+1\,;\nonumber&&\\
    &\bullet \quad e(u, \bar{u}) \varepsilon^{\vec{\lambda}}_{\frac{\lambda_+}{4} - 2, \bar{a}+\frac{\lambda_+}{4} - 2} \bar{v}^{\bar{a}} \,,\quad \forall \bar{a} \in \mathbb{Z}_{\geq 0}\quad\oplus  \quad  e(u, \bar{u}) \varepsilon^{\vec{\lambda}}_{a+ \frac{\lambda_+}{4} - 2, \frac{\lambda_+}{4} - 2} v^a \,,\quad \forall a \in \mathbb{Z}_{\geq 0}\,; \nonumber&&\\
    &\bullet \quad e(u, \bar{u}) \varepsilon^{\vec{\lambda}}_{b-1, \bar{a}+\bar{b}-1} \bar{v}^{\bar{a}}:   b<\bar{b}\,,\; \bar{a}=b+1 \quad \oplus \quad e(u, \bar{u}) \varepsilon^{\vec{\lambda}}_{a+b-1, \bar{b}-1} v^a: b>\bar{b}\,,\; a=\bar{b}+1\,;\nonumber&&\\
    &\bullet \quad e(\dd, \bar{u}) \varepsilon^{\vec{\lambda}}_{b+1, \bar{a}+\bar{b}-1} \bar{v}^{\bar{a}}: b \geq \bar{b}\,, \forall \bar{a} \in \mathbb{Z}_{\geq 0} \quad \oplus \quad e(\dd, \bar{u}) \varepsilon^{\vec{\lambda}}_{a+b+1, \bar{b}-1} v^a\,: b\geq \bar{b}\,, \forall a \in \mathbb{Z}_{\geq 0}\,;\nonumber&&\\
    &\bullet \quad e(u, \bar{\dd}) \varepsilon^{\vec{\lambda}}_{b-1, \bar{a}+\bar{b}+1} \bar{v}^{\bar{a}}: b \leq \bar{b}\,, \forall \bar{a} \in \mathbb{Z}_{\geq 0} \quad \oplus \quad e(u, \bar{\dd}) \varepsilon^{\vec{\lambda}}_{a+b-1, \bar{b}+1} v^a\,: b\leq \bar{b}\,, \forall a \in \mathbb{Z}_{\geq 0}\,;\nonumber&&\\
    &\bullet \quad e(\dd, \bar{u}) \varepsilon^{\vec{\lambda}}_{b+1, \bar{a}+\bar{b}-1} \bar{v}^{\bar{a}}: b < \bar{b}\,, \bar{a} = b + 1 \quad \oplus \quad e(u, \bar{\dd}) \varepsilon^{\vec{\lambda}}_{a+b-1, \bar{b}+1} v^a\,: b>\bar{b}\,, a = \bar{b} + 1\,;\nonumber&&\\
    &\bullet \quad e(u, \bar{u}) \varepsilon^{\vec{\lambda}}_{\frac{\lambda_+}{4} - 1, \,\bar{a}+\frac{\lambda_+}{4} - 1} v\, \bar{v}^{\bar{a}+1}-\frac{2i(\bar{a}+1)\left(\lambda_+ + 4 \bar{a}\right)}{\left(\lambda_+-4 \bar{a} - 4\right)\left(\lambda_+ +2\bar{a}\right)} e(\dd, \bar{u}) \varepsilon^{\vec{\lambda}}_{\frac{\lambda_+}{4} - 1\,,\, \bar{a}+\frac{\lambda_+}{4} - 1} \bar{v}^{\bar{a}}-\nonumber&&
\end{flalign}
\begin{flalign}
&-\frac{2i(\bar{a}+1)\left(\lambda_+ +4 \bar{a} - 4\right)\lambda_+}{\left(\lambda_+ - 4\right)\left(\lambda_+ +4 \bar{a} -4\right)\left(\lambda_+ +2\bar{a}\right)} e(u, \bar{\dd}) \varepsilon^{\vec{\lambda}}_{\frac{\lambda_+}{4} - 1, \bar{a}+\frac{\lambda_+}{4} - 1} \bar{v}^{\bar{a}}:\quad \bar{a} \neq \frac{\lambda_+}{4} - 1\,, \lambda_+ \neq 4\,;\nonumber&&\\
& \bullet \quad e(u, \bar{u}) \varepsilon^{\vec{\lambda}}_{a+\frac{\lambda_+}{4} - 1, \,\frac{\lambda_+}{4} - 1} v^{a+1}\, \bar{v}-\frac{2i(a+1)\left(\lambda_+ + 4 a\right)}{\left(\lambda_+-4 a - 4\right)\left(\lambda_+ +2a\right)} e(u, \bar{\dd}) \varepsilon^{\vec{\lambda}}_{a+\frac{\lambda_+}{4} - 1\,,\, \frac{\lambda_+}{4} - 1} v^a-\nonumber&&\\
&-\frac{2i(a+1)\left(\lambda_+ +4 a - 4\right)\lambda_+}{\left(\lambda_+ - 4\right)\left(\lambda_+ +4 a -4\right)\left(\lambda_+ +2a\right)} e(\dd, \bar{u}) \varepsilon^{\vec{\lambda}}_{a + \frac{\lambda_+}{4} - 1, \frac{\lambda_+}{4} - 1} v^a:\quad a \neq \frac{\lambda_+}{4} - 1\,, \lambda_+ \neq 4\,;\nonumber&&\\
&\bullet \quad e(\dd, \bar{\dd}) \varepsilon^{\vec{\lambda}}_{a+\frac{\lambda_+}{4}+k,\, \bar{a} +\frac{\lambda_+}{4}-k} v^a \bar{v}^{\bar{a}} - \xi e(\dd, \bar{u}) \varepsilon^{\vec{\lambda}}_{a+\frac{\lambda_+}{4}+k, \bar{a}+\frac{\lambda_+}{4}-k} v^{a+1} \bar{v}^{\bar{a}+1}\,: \nonumber&&\\
&k \in\left\{2\,, \dots \,,\frac{\lambda_+}{4}-1\right\}\,,\,\xi = \frac{i\left(\lambda_+ -4 k\right)\left(\lambda_+ -4 k-4a+4 \bar{a}\right)}{16(a+1)(\bar{a}+1)}\,;\nonumber&&\\
&\bullet \quad e(\dd, \bar{\dd}) \varepsilon^{\vec{\lambda}}_{a+\frac{\lambda_+}{4}-k,\, \bar{a} +\frac{\lambda_+}{4} + k} v^a \bar{v}^{\bar{a}}- \xi e(u, \bar{\dd}) \varepsilon^{\vec{\lambda}}_{a+\frac{\lambda_+}{4}-k, \bar{a}+\frac{\lambda_+}{4}+k} v^{a+1} \bar{v}^{\bar{a}+1}\,:\nonumber&&\\
&k \in\left\{2\,, \dots \,,\frac{\lambda_+}{4}-1\right\}\,,\, \xi = \frac{i\left(\lambda_+ -4 k\right)\left(\lambda_+ -4 k-4a+4 \bar{a}\right)}{16(a+1)(\bar{a}+1)}\,\Bigg\}\,. \label{e: one-form kernel tw-adj}
\end{flalign}

Filtering the exact solutions using the same logic as in section \ref{sec: cohomology computation adj-adj}, we arrive at the cohomology
\begin{align}
& H^1(\sigma_-) \supset \Bigg\{\lambda_{-} \in 2\mathbb{Z}\,, \lambda_{+} \in 4\mathbb{Z}:\quad \lambda_+= 2(b+\bar{b}+2)\geq 4\,, \quad \lambda_- = 2(a  - \bar{a}): \nonumber\\
&\bullet \quad e(u, \bar{u}) \varepsilon^{\vec{\lambda}}_{0, 0} v \, \bar{v}\,;\nonumber\\
&\bullet \quad e(u, \bar{u}) \varepsilon^{\vec{\lambda}}_{\frac{\lambda_+}{4} - 2, \bar{a}+\frac{\lambda_+}{4} - 2} \bar{v}^{\bar{a}} \,,\quad \forall \bar{a} \in \mathbb{Z}_{\geq 0}\quad\oplus  \quad  e(u, \bar{u}) \varepsilon^{\vec{\lambda}}_{a+ \frac{\lambda_+}{4} - 2, \frac{\lambda_+}{4} - 2} v^a \,,\quad \forall a \in \mathbb{Z}_{\geq 0}\,; \nonumber\\
&\bullet \quad e(u, \bar{u}) \varepsilon^{\vec{\lambda}}_{b-1, \bar{a}+\bar{b}-1} \bar{v}^{\bar{a}}:   b<\bar{b}\,,\; \bar{a}=b+1 \quad \oplus \quad e(u, \bar{u}) \varepsilon^{\vec{\lambda}}_{a+b-1, \bar{b}-1} v^a: b>\bar{b}\,,\; a=\bar{b}+1\,;\nonumber
\end{align}
\begin{align}
&\bullet \quad e(\dd, \bar{u}) \varepsilon^{\vec{\lambda}}_{b+1, \bar{a}-1} \bar{v}^{\bar{a}}: \bar{a} \geq 1 \quad \oplus \quad e(\dd, \bar{u}) \varepsilon^{\vec{\lambda}}_{b+\bar{b}+1, \bar{b}-1} v^{a}\,: b\geq \bar{b}\,,\, a = \bar{b}\,;\nonumber\\
&\bullet \quad e(u, \bar{\dd}) \varepsilon^{\vec{\lambda}}_{b-1, b+\bar{b}+1} \bar{v}^{\bar{a}}: b \leq \bar{b}\,, \,\bar{a} = b \quad \oplus \quad e(u, \bar{\dd}) \varepsilon^{\vec{\lambda}}_{a-1, \bar{b}+1} v^a\,: a \geq 1\,;\nonumber\\
&\bullet \quad e(\dd, \bar{u}) \varepsilon^{\vec{\lambda}}_{b+1, b+\bar{b}} \bar{v}^{\bar{a}}: b < \bar{b}\,,\, \bar{a} = b+1 \quad \oplus \quad e(u, \bar{\dd}) \varepsilon^{\vec{\lambda}}_{b+\bar{b}, \bar{b}+1} v^{a}\,: b>\bar{b}\,,\, a = \bar{b} + 1\,;\nonumber\\
& \bullet \quad e(u, \bar{u}) \varepsilon^{\vec{\lambda}}_{\frac{\lambda_+}{4} - 1, \, \bar{a}+\frac{\lambda_+}{4} - 1} v\, \bar{v}^{\bar{a}+1}-\frac{2i(\bar{a}+1)\left(\lambda_+ + 4 \bar{a}\right)}{\left(\lambda_+-4 \bar{a} - 4\right)\left(\lambda_+ +2\bar{a}\right)} e(\dd, \bar{u}) \varepsilon^{\vec{\lambda}}_{\frac{\lambda_+}{4} - 1\,,\, \bar{a}+\frac{\lambda_+}{4} - 1} \bar{v}^{\bar{a}}-\nonumber\\
&-\frac{2i(\bar{a}+1)\lambda_+}{\left(\lambda_+ - 4\right)\left(\lambda_+ +2\bar{a}\right)} e(u, \bar{\dd}) \varepsilon^{\vec{\lambda}}_{\frac{\lambda_+}{4} - 1, \bar{a}+\frac{\lambda_+}{4} - 1} \bar{v}^{\bar{a}}:\quad \bar{a} \neq \frac{\lambda_+}{4} - 1\,, \lambda_+ \neq 4\,;\nonumber\\
& \bullet \quad e(u, \bar{u}) \varepsilon^{\vec{\lambda}}_{a+\frac{\lambda_+}{4} - 1, \,\frac{\lambda_+}{4} - 1} v^{a+1}\, \bar{v}-\frac{2i(a+1)\left(\lambda_+ + 4 a\right)}{\left(\lambda_+-4 a - 4\right)\left(\lambda_+ +2a\right)} e(u, \bar{\dd}) \varepsilon^{\vec{\lambda}}_{a+\frac{\lambda_+}{4} - 1\,,\, \frac{\lambda_+}{4} - 1} v^a-\nonumber\\
&-\frac{2i(a+1)\lambda_+}{\left(\lambda_+ - 4\right)\left(\lambda_+ +2a\right)} e(\dd, \bar{u}) \varepsilon^{\vec{\lambda}}_{a + \frac{\lambda_+}{4} - 1, \frac{\lambda_+}{4} - 1} v^a:\quad a \neq \frac{\lambda_+}{4} - 1\,, \lambda_+ \neq 4\,\Bigg\}\,.
\end{align}

Now consider the case of $|b - \bar{b}| = 1$ and $\sigma_-$ (\ref{e: sigma tw-adj odd}). Following the same procedure we come to the conclusion that the additional cohomology is
\begin{align}
& H^1(\sigma_-) \supset \Bigg\{|b-\bar{b}| = 1: \nonumber\\
&\bullet \quad e(u, \bar{u}) \varepsilon_{b-1, \bar{a}+\bar{b}-1} \bar{v}^{\bar{a}}\quad \oplus \quad e(u, \bar{u}) \varepsilon_{a+b-1, \bar{b}-1} v^a\,;\nonumber\\
&\bullet \quad e(u, \bar{\dd}) \varepsilon_{b-1, \bar{a}+\bar{b}+1} \bar{v}^{\bar{a}}\quad \oplus \quad e(u, \bar{\dd}) \varepsilon_{a+b-1, \bar{b}+1} v^a\,: b-\bar{b} = 1 \,;\nonumber\\
&\bullet \quad e(\dd, \bar{u}) \varepsilon_{b+1, \bar{a}+\bar{b}-1} \bar{v}^{\bar{a}}\quad \oplus \quad e(\dd, \bar{u}) \varepsilon_{a+b+1, \bar{b}-1} v^a\,: b-\bar{b} = -1 \,\Bigg\}\,.
\end{align}

One can notice that most of cohomology families exist only if $\lambda_+ > 4$, i.e., an adjoint factor in the $(tw\otimes adj)$ module is not trivial as $b\neq 0$ or $\bar{b}\neq0$. Due to the non-trivial dependence on variables $Y_2$ we see multiple new cohomology classes. Since the r.h.s. of (\ref{e:cov2}) is zero, i.e., the 1-cocycles are absent, that puts some constraints on the primary fields (\ref{e: H0 tw-adj}) and (\ref{e: H0 tw-adj additional}). As at least some of the primary fields can be attributed to the generalized Weyl tensors, a part of found 1-cocycles should encode Bianchi identities. 

In case of $\lambda_+ = 4$, for which the corresponding $\mathfrak{so}(3,2)$ module survives a unitary truncation, we are left with
\begin{gather}
     e(u, \bar{u}) \varepsilon^{\vec{\lambda}}_{0, 0} v \, \bar{v}\,, \\
     e(\dd, \bar{u}) \varepsilon^{\vec{\lambda}}_{1, \bar{a}-1} \bar{v}^{\bar{a}}: \bar{a} \geq 1 \quad \oplus \quad e(u, \bar{\dd}) \varepsilon^{\vec{\lambda}}_{a-1, 1} v^a\,: a \geq 1\,,
\end{gather}
which coincides with the analysis of \cite{Gelfond:2003vh, Gelfond:2013lba}.
The first cohomology is a massless Klein-Gordon equation in a standard HS theory, and the second family encodes equations on generalized Weyl tensors.

We leave a more involved analysis of these new cohomology families beyond $\lambda_+ = 4$ (which are absent in the unitary truncated system) as a topic for further research. The non-unitary zero-forms, including those in non-split modules can still play a role in the full system  as auxiliary fields which might, for example, generate additional coupling constants. For now we move on to the study of gluing between a $(adj\otimes adj)$ one-forms and $(tw\otimes adj)$, $(adj\otimes tw)$ zero-forms implemented via a linear vertices.

\section{Linear vertex analysis}
\label{sec: linear vertext analysis}

In this section we consider the gluing between the $(adj\otimes adj)$ one-forms and  the $(tw\otimes adj)$, $(adj\otimes tw)$ zero-form sectors using the obtained knowledge about $\sigma_-$ cohomology. After setting $\eta_2 = \bar{\eta}_2 = 0$ for simplicity the linear vertex on the r.h.s. of (\ref{e: adj-adj omega}) is
\begin{align}
&\Upsilon^{\eta_1, \bar{\eta}_1}(\Omega,\Omega,C) = \bigg[\nonumber\\
&-\frac{i \eta_1}{2}\bar{H}(\bar{\dd}_1,\bar{\dd}_1) C(0,y_2,\bar{y}_1,\bar{y}_2;\hat{k}_1 | x) * \hat{k}_1 - \frac{i \eta_1}{2}\bar{H}(\bar{\dd}_2, \bar{\dd}_2) C(y_1,0,\bar{y}_1,\bar{y}_2; \hat{k}_2 |x)*\hat{k}_2 \nonumber\\
&-\frac{i \bar{\eta}_1}{2} H(\dd_1, \dd_1) C(y_1,y_2,0,\bar{y}_2;\hat{\bar{k}}_1 | x) * \hat{\bar{k}}_1 - \frac{i \bar{\eta}_1}{2} H(\dd_2,\dd_2) C(y_1,y_2,\bar{y}_1,0; \hat{\bar{k}}_2 |x)*\hat{\bar{k}}_2\bigg] * I_1 I_2\,. \label{e: reduced vertex}   
\end{align}

In section \ref{sec: one-form adj-adj}, with the help of $\mathfrak{sl}^h(2)\oplus \mathfrak{sl}^v(2)$ algebra dual to the $\mathfrak{so}(3,2)$ we have found the decomposition of the $(adj\otimes adj)$ one-form field into the $\mathfrak{so}(3,2)$ irreps and extracted dynamical content encoded in the covariant derivative $\mathcal{D}$. Ideally, one would decompose all terms in $\Upsilon^{\eta_1, \bar{\eta}_1}(\Omega,\Omega,C)$ with respect to the $\mathfrak{so}(3,2)$ irreps basis, i.e., (\ref{e: rank-one adjadj}) and its $\mathfrak{sl}^h(2)\oplus \mathfrak{sl}^v(2)$ descendants, and compare the result with $H^2(\sigma_-)$ (\ref{e: cohomology-2}). Unfortunately, such a decomposition is hard to achieve (zeroing of only one variable in the arguments of fields $C$ in (\ref{e: reduced vertex}) does not lead to the significant reduction in the basis change formulas) and can be performed manually only for the simplest cases. The closest thing we can achieve in a general case is to represent $\Upsilon^{\eta_1, \bar{\eta}_1}(\Omega,\Omega,C)$ in a rank-one like form
\begin{equation}\label{e: simple decomposition}
    F(Y_1,Y_2|x) = \sum_{\substack{a,b,c \\ \bar{a},\bar{b},\bar{c}}} \frac{1}{(a+b)!(\bar{a}+\bar{b})!}\zeta^c \bar{\zeta}\,{}^{\bar{c}} (\dd y_1)^a (\dd y_2)^b (\bar{\dd}\bar{y}_1)^{\bar{a}}(\bar{\dd}\bar{y}_2)^{\bar{b}}F^{\bar{a},\bar{b},\bar{c}}_{a,b,c}(u,\bar{u}|x)\,,
\end{equation}
where
\begin{equation}
    F^{\bar{a},\bar{b},\bar{c}}_{a,b,c}(u,\bar{u}|x) := F^{\bar{a},\bar{b},\bar{c}}_{a,b,c}{}_{\alpha(a+b),\dot{\alpha}(\bar{a}+\bar{b})}(x)\, u^{\alpha(a+b)} \bar{u}^{\dot{\alpha}(\bar{a}+\bar{b})}\,.
\end{equation}

One can check that
\thickmuskip=1mu 
\medmuskip=1mu
\thinmuskip=1mu
\begin{align}
& \bar{H}\left(\bar{\dd}_1, \bar{\dd}_1\right) C\left(0, y_2, \bar{y}_1, \bar{y}_2; \hat{k}_1 |x\right) * \hat{k}_1=\sum_{\substack{\bar{a}, \bar{b}, \bar{c}\\b}} \frac{1}{b!(\bar{a}+\bar{b})!}\left\{\frac{\bar{a}(\bar{a}-1)(\bar{a}+\bar{b}+\bar{c})(\bar{a}+\bar{b}+\bar{c}+1)}{(\bar{a}+\bar{b})(\bar{a}+\bar{b}+1)}\times\right.\nonumber\\
& \left. \times\bar{\zeta}\,{}^{\bar{c}}\left(\bar{\dd} \bar{y}_1\right)^{\bar{a}-2}\left(\dd y_2\right)^b\left(\bar{\dd} \bar{y}_2\right)^{\bar{b}} \bar{H}(\bar{\dd}, \bar{\dd})-\frac{2 \bar{a}\, \bar{c}(\bar{a}+\bar{b}+\bar{c}+1)}{(\bar{a}+\bar{b})(\bar{a}+\bar{b}+2)} \bar{\zeta}\,{}^{\bar{c}-1}\left(\bar{\dd} \bar{y}_1\right)^{\bar{a}-1}\left(\dd y_2\right)^b\left(\bar{\dd} \bar{y}_2\right)^{\bar{b}+1} \bar{H}(\bar{u}, \bar{\dd})+ \right.\nonumber\\
&\left.+\frac{\bar{c}(\bar{c}-1)}{(\bar{a}+\bar{b}+1)(\bar{a}+\bar{b}+2)} \bar{\zeta}\,{}^{\bar{c}-2}\left(\bar{\dd} \bar{y}_1\right)^{\bar{a}}\left(\dd y_2\right)^b\left(\bar{\dd} \bar{y}_2\right)^{\bar{b}+2} \bar{H}(\bar{u}, \bar{u})\right\} C_{0, b, 0}^{\bar{a}, \bar{b}, \bar{c}}\left(u, \bar{u}; \hat{k}_1 | x\right) * \hat{k}_1\,,\label{e: vertex bar 1}
\end{align}
\thickmuskip=5mu 
\medmuskip=4mu
\thinmuskip=3mu

\thickmuskip=1mu 
\medmuskip=1mu
\thinmuskip=1mu
\begin{align}
& H\left(\dd_1, \dd_1\right) C\left(y_1, y_2, 0, \bar{y}_2; \hat{\bar{k}}_1 | x\right) * \hat{\bar{k}}_1=\sum_{\substack{a,b, c \\
\bar{b}}} \frac{1}{\bar{b}!(a+b)!} \left\{\frac{a(a-1)(a+b+c)(a+b+c+1)}{(a+b)(a+b+1)}\times \right.\nonumber\\ 
&\left. \times\zeta^c\left(\dd y_1\right)^{a-2}\left(\dd y_2\right)^b\left(\bar{\dd} \bar{y}_2\right)^{\bar{b}} H(\dd, \dd) 
 -\frac{2 a c(a+b+c+1)}{(a+b)(a+b+2)} \zeta^{c-1}\left(\dd y_1\right)^{a-1}\left(\dd y_2\right)^{b+1}\left(\bar{\dd} \bar{y}_2\right)^{\bar{b}} H(u, \dd) +\right.\nonumber\\
 &\left.+\frac{c(c-1)}{(a+b+1)(a+b+2)} \zeta^{c-2}\left(\dd y_1\right)^a\left(\dd y_2\right)^{b+2}\left(\bar{\dd} \bar{y}_2\right)^{\bar{b}} H(u, u)\right\} C_{a, b, c}^{0, \bar{b}, 0}\left(u, \bar{u} ; \hat{\bar{k}}_1 | x\right) * \hat{\bar{k}}_1\,,\label{e: vertex 1}
\end{align}
\thickmuskip=5mu 
\medmuskip=4mu
\thinmuskip=3mu

\thickmuskip=1mu 
\medmuskip=1mu
\thinmuskip=1mu
\begin{align}
& \bar{H}\left(\bar{\dd}_2, \bar{\dd}_2\right) C\left(y_1, 0, \bar{y}_1, \bar{y}_2; \hat{k}_2 |x\right) * \hat{k}_2=\sum_{\substack{\bar{a}, \bar{b}, \bar{c} \\
a}} \frac{1}{a!(\bar{a}+\bar{b})!}\left\{\frac{\bar{b}(\bar{b}-1)(\bar{a}+\bar{b}+\bar{c})(\bar{a}+\bar{b}+\bar{c}+1)}{(\bar{a}+\bar{b})(\bar{a}+\bar{b}+1)}\times\right.\nonumber\\
& \left. \times\bar{\zeta}\,{}^{\bar{c}}\left(\dd y_1\right)^a\left(\bar{\dd} \bar{y}_1\right)^{\bar{a}}\left(\bar{\dd} \bar{y}_2\right)^{\bar{b}-2} \bar{H}(\bar{\dd}, \bar{\dd})+\frac{2 \bar{b} \bar{c}(\bar{a}+\bar{b}+\bar{c}+1)}{(\bar{a}+\bar{b})(\bar{a}+\bar{b}+2)} \bar{\zeta}\,{}^{\bar{c}-1}\left(\dd y_1\right)^a\left(\bar{\dd} \bar{y}_1\right)^{\bar{a}+1}\left(\bar{\dd} \bar{y}_2\right)^{\bar{b}-1} \bar{H}(\bar{u}, \bar{\dd})+ \right.\nonumber\\
&\left.+\frac{\bar{c}(\bar{c}-1)}{(\bar{a}+\bar{b}+1)(\bar{a}+\bar{b}+2)} \bar{\zeta}\,{}^{\bar{c}-2}\left(\dd y_1\right)^a\left(\bar{\dd} \bar{y}_1\right)^{\bar{a}+2}\left(\bar{\dd} \bar{y}_2\right)^{\bar{b}} \bar{H}(\bar{u}, \bar{u})\right\} C_{a, 0, 0}^{\bar{a}, \bar{b}, \bar{c}}\left(u, \bar{u}; \hat{k}_2 | x\right) * \hat{k}_2\,,\label{e: vertex bar 2}
\end{align}
\thickmuskip=5mu 
\medmuskip=4mu
\thinmuskip=3mu

\thickmuskip=1mu 
\medmuskip=1mu
\thinmuskip=1mu
\begin{align}
& H\left(\dd_2, \dd_2\right) C\left(y_1, y_2, \bar{y}_1, 0; \hat{\bar{k}}_2 | x\right) * \hat{\bar{k}}_2=\sum_{\substack{a,b, c \\
\bar{a}}} \frac{1}{\bar{a}!(a+b)!} \left\{\frac{b(b-1)(a+b+c)(a+b+c+1)}{(a+b)(a+b+1)}\times \right.\nonumber\\ 
&\left. \times\zeta^c \left(\dd y_1\right)^a\left(\bar{\dd} \bar{y}_1\right)^{\bar{a}} \left(\dd y_2\right)^{b-2} H(\dd, \dd) 
 +\frac{2 b c(a+b+c+1)}{(a+b)(a+b+2)} \zeta^{c-1} \left(\dd y_1\right)^{a+1}\left(\bar{\dd} \bar{y}_1\right)^{\bar{a}} \left(\dd y_2\right)^{b-1} H(u, \dd) +\right.\nonumber\\
 &\left.+\frac{c(c-1)}{(a+b+1)(a+b+2)} \zeta^{c-2} \left(\dd y_1\right)^{a+2}\left(\bar{\dd} \bar{y}_1\right)^{\bar{a}} \left(\dd y_2\right)^b H(u, u)\right\} C_{a, b, c}^{\bar{a}, 0, 0}\left(u, \bar{u} ; \hat{\bar{k}}_2 | x\right) * \hat{\bar{k}}_2\,.\label{e: vertex 2}
\end{align}
\thickmuskip=5mu 
\medmuskip=4mu
\thinmuskip=3mu

Despite the complexity of changing basis in $\mathfrak{R}\simeq\Lambda^\bullet(M)\otimes \mathbb{C}[[y_1,y_2,\bar y_1, \bar y_2]]$
\thickmuskip=2.5mu 
\begin{align}
    \zeta^c \,\bar{\zeta}\,{}^{\bar{c}} \,(\dd y_1)^a (\dd y_2)^b (\bar{\dd}\bar{y}_1)^{\bar{a}}(\bar{\dd}\bar{y}_2)^{\bar{b}}&: a,b,c,\bar{a},\bar{b},\bar{c}\in 
\mathbb{Z}_{\geq0}\nonumber\\
&\updownarrow\nonumber\\
E_v^N F_h^M\left(\sum_{p+q=l} \Gamma_{n, m,r}^{p, q} \zeta^p \bar{\zeta}{}^q\right)(\dd \bar{\dd} X)^r\left(\dd y_1\right)^n\left(\bar{\dd} \bar{y}_1\right)^m &: \lambda_h = n+m\,, \lambda_v = 2(r+l)\,, \lambda_i \in 2\mathbb{Z}_{\geq 0}\,,
\end{align}
\thickmuskip=5mu 

we can make important observations on the structure of r.h.s. in (\ref{e: adj-adj omega}). As we will see, $\Upsilon^{\eta_1, \bar{\eta}_1}(\Omega,\Omega,C)$ contains equation 2-cocycles in addition to the Weyl cocycles and non-cohomological terms. Moreover, $(\omega, C)$ field variables chosen in sections \ref{sec: one-form adj-adj} and \ref{sec: zero-form tw-adj} do not lead to the one-to-one gluing between one-form and zero-form $\mathfrak{so}(3,2)$ modules. We conjecture that a proper choice of homotopy procedure (field redefinition) will eliminate non-cohomological terms in the vertex (\ref{e: adj-adj omega}) and provide a clear gluing between $\omega$ and $C$. Leaving the execution of this procedure for later study we consider the structure of $\Upsilon^{\eta_1, \bar{\eta}_1}(\Omega,\Omega,C)$ in a lowest spin examples.

\subsection{Lowest spin examples}

We proceed with the simplest cases of gluing between $\mathfrak{so}(3,2)$ submodules of $(adj\otimes adj)$ and zero-forms $(tw\otimes adj)$ and $(adj \otimes tw)$ based on the $\mathfrak{sl}^h(2)\oplus \mathfrak{sl}^v(2)$ (HW, LW) vectors and their $F_h^M$ descendants with $\lambda_v = 0$ ($\eta_2 = \bar{\eta}_2 = 0$). 

In case of $\lambda_h = 0$ we have one-dimensional $\mathfrak{sl}^h(2)$ module $\{\omega^{\vec{\lambda}}{}^{0}_{0, 0, 0}\}$ and the r.h.s. of (\ref{e: adj-adj omega}) takes a Maxwell-like form
\begin{align}
    &\Upsilon^{\eta_1, \bar{\eta}_1}(\Omega,\Omega,C)\bigg|_{\substack{\lambda_h = 0 \\
    \lambda_v = 0}} = \nonumber\\
    &= -\frac{i \eta_1}{2} \bar{H}(\bar{\dd}, \bar{\dd}) C_{0,0,0}^{2,0,0}\left(u, \bar{u} ; \hat{k}_1 |x\right) * \hat{k}_1
    -\frac{i \bar{\eta}_1}{2} H(\dd, \dd) C_{2,0,0}^{0,0,0}\left(u, \bar{u} ; \hat{\bar{k}}_1 | x\right) * \hat{\bar{k}}_1 \nonumber\\
    &-\frac{i \eta_1}{2} \bar{H}(\bar{\dd}, \bar{\dd}) C_{0, 0,0}^{0, 2, 0}\left(u, \bar{u} ; \hat{k}_2 | x\right) * \hat{k}_2 
    -\frac{i \bar{\eta}_1}{2} H(\dd, \dd) C_{0, 2,0}^{0, 0,0}\left(u, \bar{u} ; \hat{\bar{k}}_2 | x\right) * \hat{\bar{k}}_2 \,.
\end{align}
Each of the terms above belongs to the $H^2(\sigma_-)$ in the $(adj\otimes adj)$ sector. Comparing with (\ref{e: tw-adj HW1}) and (\ref{e: H0 tw-adj}), we see that components of the fields $C$ above are primary with respect to $\sigma_-$ in $(tw\otimes adj)$ and $(adj \otimes tw)$ sectors. Nevertheless, only the sum $C_{0,0,0}^{2,0,0}\left(u, \bar{u} ; \hat{k}_1 |x\right)*\hat{k}_1 + C_{0, 0,0}^{0, 2, 0}\left(u, \bar{u} ; \hat{k}_2 | x\right)*\hat{k}_2$ and its conjugated can be interpreted as the Faraday tensor. Therefore, a genuine Weyl module should reside in a mix of $(tw\otimes adj)$ and $(adj \otimes tw)$ modules. 

If $\lambda_h = 2$, we have a 3-dimensional $\mathfrak{sl}^h(2)$ module $\{\omega^{\vec{\lambda}}\,, F_h\omega^{\vec{\lambda}}\,, F_h^2\omega^{\vec{\lambda}}\}$. Vector $\omega^{\vec{\lambda}}$ corresponds to three Lorenz representations $\{\omega^{\vec{\lambda}}{}^{0}_{1, 1, 0}\,,\omega^{\vec{\lambda}}{}^{0}_{2, 0, 0}\,,\omega^{\vec{\lambda}}{}^{0}_{0, 2, 0}\}$ of grades $G=0, \pm 2$ with respect to $G=|\hat{N}_u-\hat{\bar{N}}_u|+2\hat{N}_\chi$. According to the analysis of section \ref{sec: one-form adj-adj}, primary fields reside in $G = 0$, i.e., $\{\omega^{\vec{\lambda}}{}^{0}_{1, 1, 0}\,, F_h \omega^{\vec{\lambda}}{}^{0}_{1, 1, 0}\,, F_h^2 \omega^{\vec{\lambda}}{}^{0}_{1, 1, 0}\}$, and can be associated with gravity.

The r.h.s. of (\ref{e: adj-adj omega}) projected onto the (HW, LW) sub-sector of weights $(2,0)$ is a gravity like vertex
\thickmuskip=1mu 
\medmuskip=1mu
\thinmuskip=1mu
\begin{align}
& \Upsilon^{\eta_1, \bar{\eta}_1}(\Omega,\Omega,C)\bigg|_{\substack{\lambda_h = 2 \\
    \lambda_v = 0}} = \nonumber\\
&=-\frac{i \eta_1}{4}\left(\bar\dd \bar{y}_1\right)^2 \bar{H}(\bar\dd, \bar\dd) C_{0,0,0}^{4,0,0}\left(u, \bar{u}; \hat{k}_1 | x\right) * \hat{k}_1 - \frac{i \bar{\eta}_1}{4}\left(\dd y_1\right)^2 H(\dd, \dd) C_{4,0,0}^{0,0,0}\left(u, \bar{u}; \hat{\bar{k}}_1 | x\right) * \hat{\bar{k}}_1 \nonumber\\
& -\frac{i \eta_1}{2}\left(\bar\dd \bar{y}_1\right)^2 \bar{H}(\bar{u}, \bar{u}) C_{0,0,0}^{0,0,2}\left(u, \bar{u} ; \hat{k}_2 | x\right) * \hat{k}_2-\frac{i \bar{\eta}_1}{2}\left(\dd y_1\right)^2 H(u, u) C_{0,0,2}^{0,0,0}\left(u, \bar{u} ; \hat{\bar{k}}_2 | x\right) * \hat{\bar{k}}_2 \nonumber\\
& -i \eta_1\left(\dd y_1\right)\left(\bar{\dd} \bar{y}_1\right) \bar{H}(\bar{u}, \bar{\dd}) C_{1,0,0}^{0,1,1}\left(u, \bar{u}; \hat{k}_2 | x\right) * \hat{k}_2 -i \bar{\eta}_1\left(\dd y_1\right)\left(\bar{\dd} \bar{y}_1\right) H(u, \dd) C_{0,1,1}^{1,0,0}\left(u, \bar{u} ; \hat{\bar{k}}_2 | x\right) * \hat{\bar{k}}_2\nonumber\\
& -\frac{i \eta_1}{4}\left(\bar{\dd} \bar{y}_1\right)^2 \bar{H}\left(\bar{u}, \bar{\dd}\right) C_{0,0,0}^{1,1,1}\left(u, \bar{u}; \hat{k}_2 | x\right) * \hat{k}_2 -\frac{i \bar{\eta}_1}{4}\left(\dd y_1\right)^2 H(u, \dd) C_{1,1,1}^{0,0,0}\left(u, \bar{u}; \hat{\bar{k}}_2 | x\right) * \hat{\bar{k}}_2\nonumber\\
& -\frac{i \eta_1}{6}\left(\dd y_1\right)(\bar\dd \bar{y}_1) \bar{H}(\bar{\dd}, \bar{\dd}) C_{1,0,0}^{1,2,0}\left(u, \bar{u} ; \hat{k}_2 | x\right) * \hat{k}_2-\frac{i \bar{\eta}_1}{6}\left(\dd y_1\right)\left(\bar{\dd} \bar{y}_1\right) H(\dd, \dd) C_{1,2,0}^{1,0,0}\left(u, \bar{u} ; \hat{\bar{k}}_2 | x\right) * \hat{\bar{k}}_2 \nonumber\\
& -\frac{i \eta_1}{4}\left(\dd y_1\right)^2 \bar{H}(\bar\dd, \bar{\dd}) C_{2,0,0}^{0,2,0}\left(u, \bar{u}; \hat{k}_2 | x\right) * \hat{k}_2-\frac{i \bar{\eta}_1}{4}\left(\bar{\dd} \bar{y}_1 \right)^2 H(\dd, \dd) C_{0,2,0}^{2,0,0}\left(u, \bar{u} ; \hat{\bar{k}}_2 | x\right) * \hat{\bar{k}}_2 \nonumber\\
& -\frac{i \eta_1}{24}(\bar{\dd} \bar{y}_1)^2 \bar{H}(\bar{\dd}, \bar{\dd}) C_{0,0,0}^{2,2,0}\left(u, \bar{u} ; \hat{k}_2 | x\right) * \hat{k}_2-\frac{i \bar{\eta}_1}{24}\left(\dd y_1\right)^2 H(\dd, \dd) C_{2,2,0}^{0,0,0}\left(u, \bar{u} ; \hat{\bar{k}}_2 | x\right) * \hat{\bar{k}}_2\,.
\end{align}
\thickmuskip=5mu 
\medmuskip=4mu
\thinmuskip=3mu
One can see that the components of field $C$ along $\hat{k}_1$ and $\hat{\bar{k}}_1$ produce a Weyl cohomology. The terms with Klein operators $\{\hat{k}_2\,,\hat{\bar{k}}_2\}$ are divided into the three groups: terms that can be complemented to a 2-cocycle; $\sigma_-$-exact terms; not $\sigma_-$-closed terms. The presence of the latter two groups shows that the choice of a standard shifted homotopy in \cite{Tarusov:2025sre} is not the best one for studying the gluing between one-form and zero-form sectors. At the moment, the appropriate choice of homotopy is unknown, so we obtain the redefinition of the field variables manually for a simple case of low spin. Such  redefinitions involve equations (\ref{e:cov2}) and (\ref{e:cov3}) rewritten in terms of rank-one like fields $C^{\bar{a},\bar{b},\bar{c}}_{a,b,c}(u,\bar{u};\hat{K}|x)$. The equations for $(tw\otimes adj)$ and $(adj \otimes tw)$ zero-forms are provided in appendix \ref{sec: rank-one zero-form eq}. Note that these equations do not resemble (\ref{e: tw-adj spin-tensor equation}), as (\ref{e: tw-adj spin-tensor equation}) has been derived for the indecomposable $\mathfrak{so}(3,2)$ module, but now we are dealing with modules $C$ that are a resummation of indecomposable modules. 

At the $G = 0$ the r.h.s. can be expressed in a $\sigma_-$-exact form and absorbed by the redefinition of $\{\omega^{\vec{\lambda}}{}^{0}_{2, 0, 0}(u,\bar{u}|x)\,,\omega^{\vec{\lambda}}{}^{0}_{0, 2, 0}(u,\bar{u}|x)\}$:
\begin{align}
    &\omega^{\vec{\lambda}}{}^{0}_{0, 2, 0}(u,\bar{u}|x) = \tilde{\omega}^{\vec{\lambda}}{}^{0}_{0, 2, 0}(u,\bar{u}|x) -\frac{i \eta_1}{4}  e(\dd, \bar{u}) C_{1,0,0}^{0,1,1}\left(u, \bar{u}; \hat{k}_2 | x\right) * \hat{k}_2 \nonumber\\
    &-\frac{3i \bar{\eta}_1}{4}  e(\dd, \bar{u}) C_{0,1,1}^{1,0,0}\left(u, \bar{u}; \hat{\bar{k}}_2 |x\right) * \hat{\bar{k}}_2+\frac{i \eta_1}{3} e(\partial, \bar{\partial}) C_{1,0,0}^{1,2,0}\left(u, \bar{u}; \hat{k}_2 | x\right) * \hat{k}_2\,,
\end{align}
\begin{align}
    &\omega^{\vec{\lambda}}{}^{0}_{2, 0, 0}(u,\bar{u}|x) = \tilde{\omega}^{\vec{\lambda}}{}^{0}_{2, 0, 0}(u,\bar{u}|x) -\frac{3i \eta_1}{4}  e(u, \bar{\partial}) C_{1,0,0}^{0,1,1}\left(u, \bar{u}; \hat{k}_2 | x\right) * \hat{k}_2 \nonumber\\
    &-\frac{i \bar{\eta}_1}{4}  e(u, \bar{\partial}) C_{0,1,1}^{1,0,0}\left(u, \bar{u} ; \hat{\bar{k}}_2 | x\right) * \hat{\bar{k}}_2 + \frac{i \bar{\eta}_1}{3} e(\dd, \bar{\partial}) C_{1,2,0}^{1,0,0}\left(u, \bar{u}; \hat{\bar{k}}_2 | x\right) * \hat{\bar{k}}_2\,.
\end{align}
It leads to the equations
\begin{equation}
    D_L \omega^{\vec{\lambda}}{}^{0}_{1, 1, 0}(u,\bar{u}|x) + e(u,\bar\dd)\tilde{\omega}^{\vec{\lambda}}{}^{0}_{0, 2, 0}(u,\bar{u}|x) + e(\dd,\bar u)\tilde{\omega}^{\vec{\lambda}}{}^{0}_{2, 0, 0}(u,\bar{u}|x) = 0\,,
\end{equation}
\begin{align}
    &D_L \tilde{\omega}^{\vec{\lambda}}{}^{0}_{2, 0, 0}(u,\bar{u}|x) + e(u,\bar{\dd}) \omega^{\vec{\lambda}}{}^{0}_{1, 1, 0}(u,\bar{u}|x) = - \frac{i\bar{\eta}_1}{2} H(\dd, \dd) C_{4,0,0}^{0,0,0}\left(u, \bar{u}; \hat{\bar{k}}_1 | x\right) *\hat{\bar{k}}_1+\nonumber\\
    &+\frac{\bar{\eta}_1}{4}  H(\dd, \dd) C_{1,3,0}^{0,0,1}\left(u, \bar{u}; \hat{\bar{k}}_2 | x\right) * \hat{\bar{k}}_2    -\frac{i\bar{\eta}_1}{4}  H(\dd, \dd) C_{2,2,0}^{0,0,0}\left(u, \bar{u}; \hat{\bar{k}}_2 | x\right) * \hat{\bar{k}}_2 + \nonumber\\
    &+ \frac{1}{2}\bar{H}(\bar\dd, \bar\dd)\left[\eta_1 C_{1,1,0}^{0,2,1}\left(u, \bar{u}; \hat{k}_2 | x\right) * \hat{k}_2+ \bar{\eta}_1 C_{0,2,1}^{1,1,0}\left(u, \bar{u}; \hat{\bar{k}}_2 |x\right) * \hat{\bar{k}}_2\right] \nonumber\\
    &-\frac{3i}{4} \bar{H}(\bar{\dd}, \bar{\dd})\left[\eta_1 C_{2,0,0}^{0,2,0}\left(u, \bar{u}; \hat{k}_2 | x\right) * \hat{k}_2+\bar{\eta}_1 C_{0,2,0}^{2,0,0}\left(u, \bar{u}; \hat{\bar k}_2 | x\right)*\hat{\bar k}_2\right] \nonumber\\
    &- \frac{3i}{2} H(u, u)\left[\eta_1 C_{0,0,0}^{0,0,2}\left(u, \bar{u} ; \hat{k}_2 | x\right) * \hat{k}_2+\bar{\eta}_1 C_{0,0,2}^{0,0,0}\left(u, \bar{u}; \hat{\bar{k}}_2 | x\right) * \hat{\bar{k}}_2\right]\,, \label{e: grav HW 1}
\end{align}
\begin{align}
    &D_L \tilde{\omega}^{\vec{\lambda}}{}^{0}_{0, 2, 0}(u,\bar{u}|x) + e(\dd,\bar{u}) \omega^{\vec{\lambda}}{}^{0}_{1, 1, 0}(u,\bar{u}|x) = - \frac{i\eta_1}{2} \bar{H}(\bar{\dd}, \bar{\dd}) C_{0,0,0}^{4,0,0}\left(u, \bar{u}; \hat{k}_1 | x\right) *\hat{k}_1+ \nonumber\\
    &+\frac{\eta_1}{4}  \bar{H}(\bar{\dd}, \bar{\dd}) C_{0,0,1}^{1,3,0}\left(u, \bar{u}; \hat{k}_2 | x\right) * \hat{k}_2 
    -\frac{i\eta_1}{4}  \bar{H}(\bar{\dd}, \bar{\dd}) C_{0,0,0}^{2,2,0}\left(u, \bar{u}; \hat{k}_2 | x\right) * \hat{k}_2 \nonumber\\
    &+ \frac{1}{2}H(\dd, \dd)\left[\eta_1 C_{1,1,0}^{0,2,1}\left(u, \bar{u}; \hat{k}_2 | x\right) * \hat{k}_2+ \bar{\eta}_1 C_{0,2,1}^{1,1,0}\left(u, \bar{u}; \hat{\bar{k}}_2 |x\right) * \hat{\bar{k}}_2\right] \nonumber\\
    &-\frac{3i}{4} H(\dd, \dd)\left[\eta_1 C_{2,0,0}^{0,2,0}\left(u, \bar{u}; \hat{k}_2 | x\right) * \hat{k}_2+\bar{\eta}_1 C_{0,2,0}^{2,0,0}\left(u, \bar{u}; \hat{\bar k}_2 | x\right)*\hat{\bar k}_2\right] \nonumber\\
    &- \frac{3i}{2} \bar{H}(\bar{u}, \bar{u})\left[\eta_1 C_{0,0,0}^{0,0,2}\left(u, \bar{u} ; \hat{k}_2 | x\right) * \hat{k}_2+\bar{\eta}_1 C_{0,0,2}^{0,0,0}\left(u, \bar{u}; \hat{\bar{k}}_2 | x\right) * \hat{\bar{k}}_2\right]\,. \label{e: grav HW 2}
\end{align}
Comparing with $H^2(\sigma_-)$ (\ref{e: cohomology-2 spin-tensor}) we see that all 2-cocycles are present on the r.h.s. These equations are similar to the decomposition of the Riemann tensor into the Weyl tensor, the traceless Ricci tensor and the scalar curvature. 

We can perform the same $\omega$ redefinition procedure for a set of vectors $F_h \omega^{\vec{\lambda}}$ where the relevant part of vertex $\Upsilon^{\eta_1, \bar{\eta}_1}(\Omega,\Omega,C)$ is
\thickmuskip=1mu 
\medmuskip=1mu
\thinmuskip=1mu
\begin{flalign}
    & \Upsilon^{\eta_1, \bar{\eta}_1}(\Omega,\Omega,C)=\nonumber&&\\
    &-\frac{i \eta_1}{2}\left(\bar{\dd} \bar{y}_1\right)\left(\dd y_2\right) \bar{H}(\bar{\dd}, \bar{\dd}) C_{0,1,0}^{3,0,0}\left(u, \bar{u}; \hat{k}_1 | x\right) * \hat{k}_1-\frac{i \eta_1}{2}\left(\dd y_1\right)\left(\bar\dd \bar{y}_2\right) \bar{H}(\bar{\dd}, \bar{\dd}) C_{1,0,0}^{0,3,0}\left(u, \bar{u} ; \hat{k}_2 | x\right) * \hat{k}_2\nonumber&&\\
    &-\frac{i \bar{\eta}_1}{2}\left(\dd y_1\right)\left(\bar\dd \bar{y}_2\right) H(\dd, \dd) C_{3,0,0}^{0,1,0}\left(u, \bar{u} ; \hat{\bar{k}}_1 | x\right) * \hat{\bar{k}}_1 -\frac{i \bar{\eta}_1}{2}\left(\bar{\dd} \bar{y}_1\right)\left(\dd y_2\right) H(\dd, \dd) C_{0,3,0}^{1,0,0}\left(u, \bar{u}; \hat{\bar{k}}_2 | x\right) * \hat{\bar{k}}_2\nonumber&&\\
    &-\frac{i \eta_1}{8}(\bar\dd \bar{y}_1)\left(\bar{\dd} \bar{y}_2\right) \bar{H}(\bar{\dd}, \bar{\dd}) C_{0,0,0}^{3,1,0}\left(u, \bar{u} ; \hat{k}_1 | x\right) * \hat{k}_1+\frac{i \eta_1}{2}\left(\bar\dd \bar{y}_1\right)\left(\bar{\dd} \bar{y}_2\right) \bar{H}(\bar{u}, \bar{\dd}) C_{0,0,0}^{2,0,1}\left(u, \bar{u} ; \hat{k}_1 | x\right) * \hat{k}_1 \nonumber&&\\
    & -\frac{i \eta_1}{8}\left(\bar{\dd} \bar{y}_1\right)\left(\bar{\dd} \bar{y}_2\right) \bar{H}(\bar{\dd}, \bar{\dd}) C_{0,0,0}^{1,3,0}\left(u, \bar{u} ; \hat{k}_2 | x\right) * \hat{k}_2-\frac{i \eta_1}{2}\left(\bar{\dd} \bar{y}_1\right)\left(\bar{\dd} \bar{y}_2\right) \bar{H}\left(\bar{u}, \bar{\dd}\right) C_{0,0,0}^{0,2,1}\left(u, \bar{u}; \hat{k}_2 | x\right) * \hat{k}_2\nonumber&&
\end{flalign}
\begin{flalign}
    &-\frac{i \bar{\eta}_1}{8}\left(\dd y_1\right)\left(\dd y_2\right) H(\dd, \dd) C_{3,1,0}^{0,0,0}\left(u, \bar{u} ; \hat{\bar{k}}_1 | x\right) * \hat{\bar{k}}_1 +\frac{i \bar{\eta}_1}{2}\left(\dd y_1\right)\left(\dd y_2\right) H(u, \dd) C_{2,0,1}^{0,0,0}\left(u, \bar{u} ; \hat{\bar k}_1 | x\right) * \hat{\bar k}_1\nonumber\\
    &-\frac{i \bar{\eta}_1}{8}\left(\dd y_1\right)\left(\dd y_2\right) H(\dd, \dd) C_{1,3,0}^{0,0,0}\left(u, \bar{u} ; \hat{\bar k}_2 | x\right) * \hat{\bar{k}}_2-\frac{i \bar{\eta}_1}{2}\left(\dd y_1\right)\left(\dd y_2\right) H(u, \dd) C_{0,2,1}^{0,0,0}\left(u, \bar{u} ; \hat{\bar{k}}_2 | x\right) * \hat{\bar{k}}_2\,.
\end{flalign}
\thickmuskip=5mu 
\medmuskip=4mu
\thinmuskip=3mu

After some calculations using the fact that
\begin{alignat}{2}
    \left(\dd y_1\right)\left(\dd y_2\right) &= \frac{1}{2}F_h\left(\dd y_1\right)\left(\dd y_1\right)\,, &\quad \left(\bar{\dd} \bar{y}_1\right)\left(\bar{\dd} \bar{y}_2\right) &= \frac{1}{2}F_h \left(\bar{\dd} \bar{y}_1\right)\left(\bar{\dd} \bar{y}_1\right)\,, \\
    \left(\dd y_1\right) \left(\bar{\dd} \bar{y}_2\right) &= \frac{1}{2}\left\{F_h \left(\dd y_1\right) \left(\bar{\dd} \bar{y}_1\right) + \left(\dd \bar{\dd} X\right)\right\}\,, &\quad \left(\dd y_2\right) \left(\bar{\dd} \bar{y}_1\right) &= \frac{1}{2}\left\{F_h \left(\dd y_1\right) \left(\bar{\dd} \bar{y}_1\right) - \left(\dd \bar{\dd} X\right)\right\}
\end{alignat}
we obtain 
\begin{equation}
    D_L \omega^{\vec{\lambda}}{}^{0}_{1, 1, 0}(u,\bar{u}|x) + e(u\,,\bar\dd)\tilde{\omega}^{\vec{\lambda}}{}^{0}_{0, 2, 0}(u,\bar{u}|x) + e(\dd\,,\bar u)\tilde{\omega}^{\vec{\lambda}}{}^{0}_{2, 0, 0}(u,\bar{u}|x) = 0\,,
\end{equation}
\begin{align}
    &D_L \tilde{\omega}^{\vec{\lambda}}{}^{0}_{2, 0, 0}(u,\bar{u}|x) + e(u, \bar{\dd})\omega^{\vec{\lambda}}{}^{0}_{1, 1, 0}(u,\bar{u}|x) = \nonumber\\
    &= -\frac{i \bar{\eta}_1}{4}  H(\dd, \dd)\left\{C_{3,1,0}^{0,0,0}\left(u, \bar{u} ; \hat{\bar{k}}_1 |x\right) * \hat{\bar{k}}_1 + C_{1,3,0}^{0,0,0}\left(u, \bar{u}; \hat{\bar{k}}_2 | x\right) *\hat{\bar{k}}_2 \right\}\nonumber\\
    &-\frac{\bar{\eta}_1}{2} H(\dd, \dd)\left\{ C_{4,0,0}^{0,0,1}\left(u, \bar{u} ; \hat{\bar{k}}_1 |x\right) * \hat{\bar{k}}_1 - C_{0,4,0}^{0,0,1}\left(u, \bar{u}; \hat{\bar{k}}_2 | x\right) * \hat{\bar{k}}_2\right\}\,,\label{e: gravity on the middle 1}
\end{align}
\begin{align}
    &D_L \tilde{\omega}^{\vec{\lambda}}{}^{0}_{0, 2, 0}(u,\bar{u}|x) + e(\dd, \bar{u})\omega^{\vec{\lambda}}{}^{0}_{1, 1, 0}(u,\bar{u}|x) = \nonumber\\
    &= -\frac{i \eta_1}{4}  \bar{H}(\bar{\dd}, \bar{\dd})\left\{C^{3,1,0}_{0,0,0}\left(u, \bar{u} ; \hat{k}_1 |x\right) * \hat{k}_1 + C^{1,3,0}_{0,0,0}\left(u, \bar{u}; \hat{k}_2 | x\right) *\hat{k}_2 \right\}\nonumber\\
    &-\frac{\eta_1}{2} \bar{H}(\bar{\dd}, \bar{\dd})\left\{C^{4,0,0}_{0,0,1}\left(u, \bar{u} ; \hat{k}_1 |x\right) * \hat{k}_1 - C^{0,4,0}_{0,0,1}\left(u, \bar{u}; \hat{k}_2 | x\right) * \hat{k}_2\right\}\,,\label{e: gravity on the middle 2}
\end{align}
where
\thickmuskip=3mu 
\medmuskip=2mu
\thinmuskip=2mu
\begin{align}
    &\omega^{\vec{\lambda}}{}^{0}_{0, 2, 0}(u,\bar{u}|x) = \tilde{\omega}^{\vec{\lambda}}{}^{0}_{0, 2, 0}(u,\bar{u}|x) +\frac{i \eta_1}{2}  e(\dd, \bar{\dd}) \left\{C_{0,1,0}^{3,0,0}\left(u, \bar{u}; \hat{k}_1 | x\right) * \hat{k}_1 + C_{1,0,0}^{0,3,0}\left(u, \bar{u}; \hat{k}_2 | x\right) * \hat{k}_2 \right\}\,,\\
    &\omega^{\vec{\lambda}}{}^{0}_{2, 0, 0}(u,\bar{u}|x) = \tilde{\omega}^{\vec{\lambda}}{}^{0}_{2, 0, 0}(u,\bar{u}|x) +\frac{i \eta_1}{2}  e(\dd, \bar{\dd}) \left\{C^{0,1,0}_{3,0,0}\left(u, \bar{u}; \hat{\bar{k}}_1 | x\right) * \hat{\bar{k}}_1 + C^{1,0,0}_{0,3,0}\left(u, \bar{u}; \hat{\bar{k}}_2 | x\right) * \hat{\bar{k}}_2 \right\}\,,
\end{align}
\thickmuskip=5mu 
\medmuskip=4mu
\thinmuskip=3mu

provide a necessary change of field variables. Once again comparing to $H^2(\sigma_-)$ (\ref{e: cohomology-2 spin-tensor}) we see that only the Weyl-like 2-cocycles are present. It is important to note that $F_h \omega^{\vec{\lambda}}$ vectors have zero $\mathfrak{sl}^h(2)$ weight. Combined with the spin-$1$ example, it suggests that one-form fields of zero $\mathfrak{sl}^h(2)$ weight, i.e., built out of monomials with equal number of $Y_1$ and $Y_2$ oscillators, do not have equation 2-cocycles on the r.h.s. of (\ref{e: adj-adj omega}).

Summarizing the observations made when considering examples of the simplest cases of gluing between one-forms $\omega$ and zero-forms $C$, we conclude that the r.h.s. of (\ref{e: adj-adj omega}) contains a lot of undesirable non-cohomological terms. The presence of the equation 2-cocycles on the r.h.s. of (\ref{e: adj-adj omega}) signals that some of the copies of massless and partially massless fields are off-shell (respective primary fields are not subject to any differential field equations, see section \ref{sec: Interpretation of cohomology}). In addition to that, there is no one-to-one gluing between one- and zero-forms since $C^{\bar{a},\bar{b},\bar{c}}_{a,b,c}(u,\bar{u};\hat{K}|x)$ on the r.h.s. of (\ref{e: grav HW 1}), (\ref{e: grav HW 2}) and (\ref{e: gravity on the middle 1}), (\ref{e: gravity on the middle 2}) are represented as a sum of (LW, LW) vectors, subsingular vectors and their descendants. This means that a single copy of a massless or partially massless field is attached to a sum of components of some indecomposable $\mathfrak{so}(3,2)$ zero-form submodules of $(tw\otimes adj)$ and $(adj \otimes tw)$. Therefore, if a one-to-one gluing is to be achieved (which is greatly desirable for clear interpretation of the fields in a system), a resummation of zero-form modules into some combinations, including reducible ones, should be performed. Moreover, since the Weyl cocycles in our examples are sourced by zero-form fields $C$ along different Klein operators simultaneously, after resummations within the $(tw\otimes adj)$ and $(adj \otimes tw)$ sectors an additional resummation of fields along different Klein operators is needed. We suppose that it can be performed by a suitable homotopy procedure, which could also be used to eliminate non-cohomological terms, potentially supplemented by the reduction to the invariant subsystem of an automorphism $\hat{k}_{12}\hat{\bar{k}}_{12}*f(Y,Z,I;\hat{K})*\hat{k}_{12}\hat{\bar{k}}_{12}$ in (\ref{e:nonlinear system 1})--(\ref{e:nonlinear system 5}).

It is important to note that the equation cocycles appear due to the non-trivial dependence on the adjoint factor in each of the zero-form modules. If we perform a unitary truncation, i.e., eliminate $Y_i$ oscillators associated with the adjoint module, then all equation 2-cocycles vanish, which imposes differential equations on the primary fields, reproducing the on-shell dynamics of the standard HS theory. As a subject for future study the proper $(\omega, C)$ field variables that lead to the explicit structure of the corresponding one-form and zero-form modules as well as their gluing would allow to extend this analysis for the general non-truncated case.

\section{Conclusion}\label{sec: conclusion}

In this paper we have carried out the $\sigma_-$-cohomological analysis of $B_2$ Coxeter extended $4d$ HS theory. This analysis is necessary to understand the dynamical contents of the $B_2$ CHS as lower order cohomology $H^\bullet(\sigma_-)$ provide an exhaustive description of the gauge parameters, primary fields and gauge-invariant differential operators on primary fields present in the system.

To that end we have described the structure of CHS modules that correspond to zero- and one-form fields of the Klein-parity truncated system in terms of $\mathfrak{so}(3,2)$ indecomposable modules via representations of a commuting $\mathfrak{sl}(2) \oplus \mathfrak{sl}(2)$ algebra. In the case of $(adj \otimes adj)$ module where one-form $\omega$ resides, the algebras form a reductive pair, i.e., a Howe-dual pair, therefore the field $\omega$ decomposes into the direct sum of $\mathfrak{so}(3,2)$ finite-dimensional irreps with multiplicity spaces organized into $\mathfrak{sl}(2)\oplus \mathfrak{sl}(2)$ irreducible modules. These multiplicity spaces are described by (HW, LW) representations of $\mathfrak{sl}(2) \oplus \mathfrak{sl}(2)$ on polynomials of oscillators $Y_i$. For $(adj \otimes tw)$ and $(tw \otimes adj)$ modules, in which zero-forms $C$ take value, it has been found that $\mathfrak{sl}(2) \oplus \mathfrak{sl}(2)$ representations on $Y_i$ polynomials have a much more intricate structure, featuring indecomposable non-split extension modules. The presence of reducible indecomposable $\mathfrak{sl}(2) \oplus \mathfrak{sl}(2)$ modules relaxes the decomposition of the corresponding zero-form module to direct sums of tensor product modules $N_i\otimes M_i$, where $N_i$ is indecomposable (possibly reducible) $\mathfrak{sl}(2)\oplus \mathfrak{sl}(2)$ module and $M_i$ is indecomposable (possibly reducible) $\mathfrak{so}(3,2)$ module. Despite some loss of generality, cohomological analysis can be performed on the subset of (LW, LW) $\mathfrak{sl}(2)\oplus \mathfrak{sl}(2)$ vectors as the spectrum of $(adj \otimes tw)$ and $(tw \otimes adj)$ sectors is fully determined by this subset.

In the three cases of $B_2$ CHS modules $(adj \otimes adj)$, $(adj \otimes tw)$ and $(tw \otimes adj)$ a grading and a $\sigma_-$ operator are defined on the $\mathfrak{so}(3,2)$ indecomposable representations, encoded via auxiliary variables, and the lower orders cohomology are obtained. The straightforward calculation involves finding $\ker(\sigma_-)$ and then filtering out the $\sigma_-$-exact forms. The cohomology representatives are described through functions of auxiliary variables, which makes transitioning back to $Y_i$ variables somewhat involved. In case of $(adj \otimes adj)$ module we have a full list of cohomology representatives as $Y_i$ functions, however in $(adj \otimes tw)$ and $(tw \otimes adj)$ modules the complete list of cocycles as $Y_i$ functions is unknown due to the presence of subsingular $\mathfrak{sl}(2)\oplus \mathfrak{sl}(2)$ vectors.

For the one-forms $\omega$, the sector $(adj\otimes adj)$ is of interest, as the content of $(tw\otimes tw)$ module has been studied in \cite{Gelfond:2003vh,Gelfond:2013lba} and found to contain only non-dynamical fields, i.e., $H^2(\sigma_-) = \{\emptyset\}$. The novel entangled module can be also dismissed as topological, since the non-local transformation coupled with a conjugation breaking change of $Y_i$ variables that relates it to $(tw \otimes tw)$ preserves the cohomological analysis. Therefore, the r.h.s. of equations for $(tw\otimes tw)$ and entangled modules found in \cite{Tarusov:2025sre} can be eliminated by a change of $(\omega, C)$ field variables or the proper choice of the homotopy procedure. The $(adj\otimes adj)$ fields describe massless and partially massless fields as $H^{0,1,2}(\sigma_-)$ match with previously known results for massless \cite{Bychkov:2021zvd} and partially massless \cite{Skvortsov:2009nv} fields. Thus, the conjectured appearance of partially massless fields in $B_2$ CHS has been proven, and the 2-particle field $\omega$ encodes all 1-particle $AdS_4$ fields in an infinite number of copies, i.e., $(adj\otimes adj)$ is an infinitely reducible $\mathfrak{so}(3,2)$ module. While partially massless fields are not unitary in $AdS$, they become non-dynamical after the unitary truncation in the zero-form sector, i.e., the corresponding vertices on the r.h.s. of linearized equations for these $\omega$-fields vanish. Components of zero forms $C$ lie in indecomposable modules and cohomological analysis reveals primary fields and equations not found in the standard HS theory. The interpretation of the additional fields and equations is dependent on the gluing to the known submodules of the one-form $(adj \otimes adj)$ module at the vertices. It can be said, however, that the unitary truncation yields modules equivalent to the standard twisted-adjoint ones and reproduces gluing at the linear vertices identical to that in the standard theory.

The analysis of gluing between one-forms $(adj\otimes adj)$ and zero-forms $(adj \otimes tw)$ and $(tw \otimes adj)$ shows that the linearized vertex contains a lot of undesirable non-$\sigma_-$ cohomological terms. While it demonstrates that the choice of standard homotopy in \cite{Tarusov:2025sre} has proven not to be optimal, it is possible to find the suitable $\omega$ field redefinition by hand, as was done for some spin-2 vertices. Such a redefinition results in $\Upsilon(\Omega,\Omega,C)$ built out of 2-cocycles. Comparison of 2-cocycles and linearized vertices showed that for the special case of zero $\mathfrak{sl}^h(2)$ weight vectors only the Weyl type cohomology is present. In other cases 2-cocycles associated with the equations are present. The non-trivial dependence on oscillators describing the adjoint modules in $(adj \otimes tw)$ and $(tw \otimes adj)$ leads to the appearance of these 2-cocycles. We note that in the current form the vertices admit no one-to-one gluing between one-forms $\omega$ and zero-forms $C$, for example the Weyl cohomology is represented by a combination of components of zero-form fields along different Klein operators. This suggests the necessity of a field redefinition to reorganize the zero-form sector into a set of both irreducible and potentially reducible modules that would be uniquely glued to the $\mathfrak{so}(3,2)$ irreps in one-form sector. Such a field redefinition is a good subject for future study via differential homotopy technique.

Eliminating the dependency on the oscillators responsible for the adjoint module in the $C$-fields achieves both the unitarity of the zero-form module and accomplishes clear one-to-one gluing at the vertices, allowing for interpretation of fields $\omega$ and $C$ and equations for them. While the unitary truncation proposed in \cite{Tarusov:2025sre} accomplishes that at the linear level, we see it as a high priority for further research to extend this truncation to the full nonlinear $B_2$ CHS system.

\section*{Acknowledgement}

The authors are thankful to M.A. Vasiliev for deeply insightful discussions and feedback on the draft. The authors thank Yu.A. Tatarenko, O.A. Gelfond, N.G. Misuna and P.T. Kirakosiants for helpful comments.
The work was supported by the Foundation for the Advancement of Theoretical Physics and Mathematics “BASIS”.

\newpage

\section{Appendix}

\subsection{A: Non-split extensions}\label{sec: nonsplit extension}

Consider the following problem. Let $A$ be an algebra, and let $(V;W)$ be a pair of representations of $A$. We would like to classify representations $U$ of $A$ such that $V$ is a subrepresentation of $U$ and $U/V \simeq W$, i.e., we want to list all short exact sequences
\begin{equation}
    0 \rightarrow V \rightarrow U \rightarrow W \rightarrow 0\,.
\end{equation}
There is an obvious example $U = V \oplus W$, but for a suitable choice of algebra $A$ and pair $(V;W)$ there is an $A$-module $U$ such that the representation map $\rho_U(a)$ has block triangular form for any $a\in A$. Such a module $U$ is called a non-split extension of $W$ by $V$. Solution to this problem reduces to the cohomology analysis for a particular nilpotent linear map $\mathrm{d}: \operatorname{Hom}_\mathbb{C}(W, V) \rightarrow \operatorname{Hom}_\mathbb{C}\left(A, \operatorname{Hom}_\mathbb{C}(W, V)\right)$. Then all non-split extensions are classified up to an isomorphism by the group $\operatorname{Ext}^1_A(W,V) \simeq \ker \d/\text{Im}\, \d$ (\cite{Humphreys2008, etingof2011introductionrepresentationtheory} introduce $\operatorname{Ext}$ as a extension of modules; \cite{Benson_1991, Cartan} use category theory and homological algebra to define $\operatorname{Ext}$).

We start with the case of $\mathfrak{g} = \mathfrak{sl}(2)$. Consider the short exact sequence
\begin{equation}
0 \rightarrow V\left(\lambda\right) \overset{i}{\rightarrow} E \overset{p}{\rightarrow} V\left(\mu\right) \rightarrow 0\,,
\end{equation}
where $V\left(\lambda\right)$ and $V\left(\mu\right)$ are the lowest-weight Verma modules.
A lowest-weight Verma module $V\left(\lambda\right)$ is generated by a vector $w_\lambda$ satisfying:
\begin{equation}
h  w_\lambda=\lambda w_\lambda\,, \quad f  w_\lambda=0\,.
\end{equation}
The module $V\left(\lambda\right)$ has a basis $\left\{w_\lambda, e w_\lambda, e^2 w_\lambda, \ldots\right\}$, with the $\mathfrak{g}$-action
\begin{equation}
\begin{aligned}
& h  e^n w_\lambda=(\lambda+2 n) e^n w_\lambda\,, \\
& f  e^n w_\lambda=-n(\lambda+n-1) e^{n-1} w_\lambda\,. \\
\end{aligned}
\end{equation}

Let $v \in V(\mu)$ be a lowest-weight vector. Choose a preimage $w\in E,\;p(w)=v$. Since $p$ is a $\mathfrak{g}$-homomorphism:
\begin{equation}
p(h  w)=h  p(w)=h  v=\mu v\,.
\end{equation}
Thus $h  w-\mu w \in \operatorname{ker}(p)=i(V(\lambda))$. By adjusting $w$ by an element of $i(V(\lambda))$ we may assume that $h  w = \mu w$. Note that
\begin{equation}
p(f w)=f  p(w)=f  v=0 \Rightarrow f  w \in \operatorname{ker}(p)=i(V(\lambda))\,,
\end{equation}
so $x=f w \in V(\lambda)_{\mu-2}$, i.e., lies in a subspace of $V(\lambda)$ spanned by vectors of weight $\mu-2$, as $f$ lowers weight by $2$. The weight $(\mu-2)$ occurs in $V(\lambda)$ only if $\mu = \lambda + 2k$ for $k \in \mathbb{N}$.

The extension splits if and only if there exists a homomorphism $s: V(\mu) \rightarrow E$ such that $p \circ s=\mathrm{id}_{V(\mu)}$.

\begin{lemma}
$E$ is a split extension iff $x \in \operatorname{Im}\left(f: V(\lambda)_\mu \rightarrow V(\lambda)_{\mu-2}\right)$.
\end{lemma}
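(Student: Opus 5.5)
The plan is to use the universal property of the lowest-weight Verma module $V(\mu)$. A $\mathfrak{g}$-homomorphism out of $V(\mu)$ is determined by the image of its generator $v$. Any vector of weight $\mu$ annihilated by $f$ is the image of $v$ under a unique such homomorphism, because $V(\mu)\simeq U(\mathfrak{g})\otimes_{U(\mathfrak{b}_-)}\mathbb{C}_\mu$ is freely spanned by $e^n v$. So a splitting $s$ amounts to a choice of $\tilde w\in E$ with $p(\tilde w)=v$, $h\tilde w=\mu\tilde w$ and $f\tilde w=0$. We fix the normalized preimage $w$ chosen before the lemma, with $hw=\mu w$ and $x=fw\in V(\lambda)_{\mu-2}$. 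The other weight-$\mu$ preimages are then exactly $\tilde w=w-y$ with $y\in i(V(\lambda))_\mu$, since $\ker p=i(V(\lambda))$ and $h$ acts semisimply on weight vectors.

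For the ``if'' direction, suppose $x=f y$ with $y\in V(\lambda)_\mu$. Set $\tilde w=w-i(y)$. Then $h\tilde w=\mu\tilde w$, $f\tilde w=x-fy=0$ and $p(\tilde w)=v$. The universal property gives $s:V(\mu)\to E$ with $s(v)=\tilde w$, and $p\circ s$ fixes $v$. Since $p\circ s$ is a homomorphism and $v$ generates $V(\mu)$, we get $p\circ s=\mathrm{id}$, so $E$ splits.

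For the ``only if'' direction, suppose $s$ is a splitting and put $\tilde w=s(v)$. Then $p(\tilde w)=v$, $h\tilde w=\mu\tilde w$ and $f\tilde w=s(fv)=0$. By the description above, $\tilde w=w-i(y)$ for some $y\in V(\lambda)_\mu$. Applying $f$ gives $0=x-fy$, so $x\in\operatorname{Im}(f:V(\lambda)_\mu\to V(\lambda)_{\mu-2})$.

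The only step needing care is the claim that the difference $w-\tilde w$ lies in the weight-$\mu$ subspace of $i(V(\lambda))$. This holds because both $w$ and $\tilde w$ are $h$-eigenvectors of eigenvalue $\mu$ and their difference lies in $\ker p$. One should also be explicit that the universal property of the Verma module is what lets the homomorphism $s$ be defined from a single vector killed by $f$. Everything else is routine. As a remark, when $\mu$ is not of the form $\lambda+2k$ with $k\geq1$ we get $x=0$, so such extensions always split, consistent with the weight discussion preceding the lemma.
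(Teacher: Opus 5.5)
Your proof is correct and follows the paper's argument essentially verbatim. In one direction you compare $s(v)$ with the normalized preimage $w$ and apply $f$ to their difference in $V(\lambda)_\mu$; in the other you correct $w$ by an element $u$ with $fu=x$ to obtain a lowest-weight preimage, and you state explicitly the universal property of the Verma module $V(\mu)$, which the paper uses implicitly when it extends $s(v)=w'$ by the raising generator $e$.
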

\begin{proof}

1) Assume the extension splits. Then $s(v)$ is a lowest-weight vector in $E$ of weight $\mu$, since
\begin{equation}
f s(v)=s(f  v)=0, \quad h s(v)=s(h  v)=\mu s(v)\,.
\end{equation}
Now, $w$ and $s(v)$ both project to $v$, so $w-s(v) \in \operatorname{ker}(p)=V(\lambda)$. Let $u=w-s(v) \in V(\lambda)_\mu$, then
\begin{equation}
f w=f(s(v)+u)=f s(v)+f u=0+f u=f u\,.
\end{equation}
However, $x = f w$, so
\begin{equation}
x=f u \in \operatorname{Im}\left(f: V(\lambda)_\mu \rightarrow V(\lambda)_{\mu-2}\right)\,.
\end{equation}

2) Assume $x \in \operatorname{Im}\left(f: V(\lambda)_\mu \rightarrow V(\lambda)_{\mu-2}\right)$. Then there exists $u \in V(\lambda)_{\mu}:$ $x = fu$. Define $w^\prime = w - u$. Then
\begin{align}
 p\left(w^{\prime}\right)&=p(w)-p(u)=v-0=v\,, \nonumber\\
 h  w^{\prime}&=h (w-u)=\mu w-\mu u=\mu w^{\prime}\,, \nonumber\\
 f  w^{\prime}&=f (w-u)=x-x=0\,.
\end{align}
So $w^\prime$ is a lowest-weight vector in $E$ of weight $\mu$. Define $s:V(\mu)\rightarrow E$ by $s(v) = w^\prime$ and extend to all of $V(\mu)$ by the raising generators $e$. Then $s$ is a homomorphism and $p \circ s=\mathrm{id}_{V(\mu)}$, so the extension splits.

\end{proof}

Consider subspaces $V(\lambda)_\mu=\operatorname{span}\left\{e^k w_\lambda\right\}$ and $V(\lambda)_{\mu-2}=\operatorname{span}\left\{e^{k-1} w_\lambda\right\}$. The map $f$ acts as:
\begin{equation}
f\left(e^k w_\lambda\right)=-k(\lambda+k-1) e^{k-1} w_\lambda\,.
\end{equation}
This map is zero iff:
\begin{equation}
k(\lambda+k-1)=0 \quad \Rightarrow \quad \lambda=1-k \leq 0\,.
\end{equation}
Then
\begin{equation}
\mu=\lambda+2 k=(1-k)+2 k=1+k \quad \Rightarrow \quad \mu = 2 - \lambda\,.
\end{equation}

Therefore, a non-split extension exists iff $\mu = 2 - \lambda$, $\lambda \leq 0$, i.e.,
\begin{equation}\label{e: ext for sl(2)}
    \operatorname{Ext}^1_{\mathfrak{sl}(2)}(V(\mu), V(\lambda)) \not\simeq 0 \,, \quad \text{iff} \quad \mu = 2 - \lambda\,,\,\lambda \leq 0\,.
\end{equation}

Now we can proceed with the $\mathfrak{g} = \mathfrak{sl}(2) \oplus \mathfrak{sl}(2)$. The non-split short exact sequences
\begin{equation}
0 \rightarrow V\left(\lambda_1, \lambda_2\right) \rightarrow E \rightarrow V\left(\mu_1, \mu_2\right) \rightarrow 0
\end{equation}

are classified by $\operatorname{Ext}^1_{\mathfrak{g}}(V(\mu_1, \mu_2), V(\lambda_1, \lambda_2))$.

As $V\left(\lambda_1, \lambda_2\right) \simeq V(\lambda_1) \otimes V(\lambda_2)$ the Künneth formula \cite{Benson_1991, Cartan} gives :
\begin{align}
&\operatorname{Ext}_{\mathfrak{g}}^1\left(V\left(\mu_1, \mu_2\right), V\left(\lambda_1, \lambda_2\right)\right) \simeq\left(\operatorname{Ext}_{\mathfrak{sl}(2)}^{1}(V(\mu_{1}), V( \lambda_{1})) \otimes \operatorname {Hom}_{\mathfrak{sl}(2)}\left(V\left(\mu_2\right), V\left(\lambda_2\right)\right)\right) \oplus\nonumber\\
&\oplus\left(\operatorname{Hom}_{\mathfrak{s} \mathfrak{l}(2)}\left(V\left(\mu_1\right), V\left(\lambda_1\right)\right) \otimes \operatorname{Ext}_{\mathfrak{sl}(2)}^1\left(V\left(\mu_2\right), V\left(\lambda_2\right)\right)\right)\,.    
\end{align}
We have already shown when $\operatorname{Ext}^1_{\mathfrak{sl}(2)}(V(\mu), V(\lambda))$ is non-trivial (\ref{e: ext for sl(2)}), calculation of $\operatorname{Hom}_{\mathfrak{sl}(2)}(V(\mu), V(\lambda))$ is not a complex task.

Let $\varphi \in \operatorname {Hom}_{\mathfrak{sl}(2)}\left(V\left(\mu\right), V\left(\lambda\right)\right)$. Suppose $v_\mu \in V\left(\mu\right)$ is a LW vector. Since $\varphi$ is a homomorphism, we have:
\begin{equation}
    h \varphi(v_\mu) = \varphi(hv_\mu) = \mu\varphi(v_\mu)\,, \quad f\varphi(v_\mu) = \varphi(fv_\mu) = \varphi(0) = 0\,,
\end{equation}
so $\varphi(v_\mu)$ is a singular vector of weight $\mu$ in $V\left(\lambda\right)$. There are two possibilities in a general case:
\begin{itemize}
    \item $V\left(\lambda\right)$ is an irreducible module, i.e., $\lambda \not\in \mathbb{Z}_{\leq 0}$, then $\varphi(v_\mu)$ must be a scalar multiple of $v_\lambda$ and $\mu = \lambda$\,;
    \item $V\left(\lambda\right)$ is a reducible indecomposable module, i.e., $\lambda \in \mathbb{Z}_{\leq 0}$. Then there is an additional singular vector $e^{-\lambda + 1}v_\lambda$ of weight $(-\lambda+2)$ and $\mu = 2 - \lambda$\,.
\end{itemize}

However, in the context of section \ref{sec: zero-form tw-adj} both weights of $\mathfrak{sl}(2)\oplus \mathfrak{sl}(2)$ can not be non-positive simultaneously (\ref{e: tw-adj weights sum}). As non-trivial $\operatorname{Ext}^1_{\mathfrak{sl}(2)}(V(\mu_i), V(\lambda_i))$ forces $\lambda_i \leq 0$, the second weight $\lambda_j\,, j\neq i$ has to be positive, so the extra option in $\operatorname {Hom}_{\mathfrak{sl}(2)}\left(V\left(\mu_j\right), V\left(\lambda_j\right)\right)$ is always discarded.

Therefore, we have derived the statement of formula (\ref{NonSplit})
\begin{equation}
    \operatorname{Ext}_{\mathfrak{g}}^1\left(V\left(\mu_1, \mu_2\right), V\left(\lambda_1, \lambda_2\right)\right) \not\simeq 0\,,  \text{iff}\, \left(\mu_1, \mu_2\right)=\left(2-\lambda_1, \lambda_2\right)\text { or } \left(\mu_1, \mu_2\right)=\left(\lambda_1,2-\lambda_2\right)\,,
\end{equation}
where $\lambda_1 \leq 0, \lambda_2 >0$ in the first case and $\lambda_2 \leq 0, \lambda_1 >0$ in the second case.

\newpage
\subsection{B: Momentum action in $(tw \otimes adj)$ module}
\label{sec: momentum action}

\begin{align}
& e^{\alpha \dot{\alpha}} P_{\alpha \dot{\alpha}} \Psi^{\vec{\lambda}}{}^{\bar{a}, \bar{b}}_{a, b}(Y_1,Y_2)=\sum_{\substack{k=0,\dots, b \\ l=0,\dots,\bar{b}}}\frac{b!\bar{b}!(a+b+1)(\bar{a}+\bar{b}+1)}{i^{k+l} k! l! (b-k)!(\bar{b}-l)! (a+b+l+1)!(\bar{a}+\bar{b}+k+1)!}\Psi_{a, b, 0}^{\bar{a}, \bar{b}, 0} \times\nonumber\\
&\times\bigg[\frac{k}{(a+b+1)\left(\bar{a}+\bar{b}+1\right)} \zeta^l \bar{\zeta}{}^{k-1}\left(\dd y_1\right)^{a+k}\left(\dd y_2\right)^{b-k+1}(\bar{\dd} \bar{y}_1)^{\bar{a}+l+1}(\bar{\dd} \bar{y}_2)^{\bar{b}-l} e\left(u, \bar{u}\right)\nonumber+\\
&+\frac{l}{(a+b+1)\left(\bar{a}+\bar{b}+1\right)} \zeta^{l-1} \bar{\zeta}{}^k\left(\dd y_1\right)^{a+k+1}\left(\dd y_2\right)^{b-k}\left(\bar{\dd} \bar{y}_1\right)^{\bar{a}+l}\left(\bar{\dd} \bar{y}_2\right)^{\bar{b}-l+1} e(u, \bar{u})\nonumber-\\
&-\frac{i}{(a+b+1)(\bar{a}+\bar{b}+1)} \zeta^l\bar{\zeta}{}^{k}\left(\dd y_1\right)^{a+k+1}\left(\dd y_2\right)^{b-k}\left(\bar{\dd} \bar{y}_1\right)^{\bar{a}+l+1}\left(\bar{\dd} \bar{y}_2\right)^{\bar{b}-l} e(u, \bar{u})\nonumber+\\
&+\frac{i\, l\, k}{(a+b+1)(\bar{a}+\bar{b}+1)} \zeta^{l-1} \bar{\zeta}{}^{k-1}\left(\dd y_1\right)^{a+k}\left(\dd y_2\right)^{b-k+1}\left(\bar{\dd} \bar{y}_1\right)^{\bar{a}+l}\left(\bar{\dd} \bar{y}_2\right)^{\bar{b}-l+1} e(u, \bar{u})\nonumber-\\
&-\frac{(a+k)(\bar{b}-l)(\bar{a}+\bar{b}+k+1)}{(a+b+1)(\bar{a}+\bar{b}+1)} \zeta^{l+1} \bar{\zeta}{}^{k}\left(\dd y_1\right)^{a+k-1}\left(\dd y_2\right)^{b-k}\left(\bar{\dd} \bar{y}_1\right)^{\bar{a}+l}\left(\bar{\dd} \bar{y}_2\right)^{\bar{b}-l-1} e(\dd, \bar{\dd})\nonumber-\\
& -\frac{(b-k)(\bar{a}+l)(a+b+l+1)}{(a+b+1)(\bar{a}+\bar{b}+1)} \zeta^l \bar{\zeta}{}^{k+1}\left(\dd y_1\right)^{a+k}\left(\dd y_2\right)^{b-k-1}\left(\bar{\dd} \bar{y}_1\right)^{\bar{a}+l-1}\left(\bar{\dd} \bar{y}_2\right)^{\bar{b}-l} e(\dd, \bar{\dd}) \nonumber-\\
&-\frac{i(b-k)(\bar{b}-l)}{(a+b+1)(\bar{a}+\bar{b}+1)} \zeta^{l+1} \bar{\zeta}{}^{k+1}\left(\dd y_1\right)^{a+k}\left(\dd y_2\right)^{b-k-1}(\bar{\dd} \bar{y}_1)^{\bar{a}+l}(\bar{\dd} \bar{y}_2)^{\bar{b}-l-1} e(\dd, \bar{\dd}) \nonumber+\\
& +\frac{i(a+k)(\bar{a}+l)(a+b+l+1)(\bar{a}+\bar{b}+k+1)}{(a+b+1)(\bar{a}+\bar{b}+1)} \times \nonumber\\
&\times \zeta^l \bar{\zeta}{}^k\left(\dd y_1\right)^{a+k-1}\left(\dd y_2\right)^{b-k}\left(\bar{\dd} \bar{y}_1\right)^{\bar{a}+l-1}\left(\bar{\dd} \bar{y}_2\right)^{\bar{b}-l} e(\dd, \bar{\dd})\nonumber+\\
&+\frac{(\bar{b}-l)(\bar{a}+\bar{b}+k+1)}{(a+b+1)(\bar{a}+\bar{b}+1)} \zeta^l \bar{\zeta}{}^{k}\left(\dd y_1\right)^{a+k}\left(\dd y_2\right)^{b-k+1}(\bar{\dd} \bar{y}_1)^{\bar{a}+l}\left(\bar{\dd} \bar{y}_2\right)^{\bar{b}-l-1} e(u, \bar{\dd})\nonumber-\\
&-\frac{l(\bar{a}+l)}{(a+b+1)(\bar{a}+\bar{b}+1)} \zeta^{l-1} \bar{\zeta}{}^{k+1}\left(\dd y_1\right)^{a+k+1}\left(\dd y_2\right)^{b-k}\left(\bar{\dd} \bar{y}_1\right)^{\bar{a}+l-1}\left(\bar{\dd} \bar{y}_2\right)^{\bar{b}-l} e(u, \bar{\dd})\nonumber-\\
&-\frac{i(\bar{b}-l)}{(a+b+1)(\bar{a}+\bar{b}+1)} \zeta^l \bar{\zeta}{}^{k+1}\left(\dd y_1\right)^{a+k+1}\left(\dd y_2\right)^{b-k}\left(\bar{\dd} \bar{y}_1\right)^{\bar{a}+l}\left(\bar{\dd} \bar{y}_2\right)^{\bar{b}-l-1} e(u, \bar{\dd})\nonumber-\\
&-\frac{i l(\bar{a}+l)(\bar{a}+\bar{b}+k+1)}{(a+b+1)(\bar{a}+\bar{b}+1)} \zeta^{l-1} \bar{\zeta}{}^{k}\left(\dd y_1\right)^{a+k}\left(\dd y_2\right)^{b-k+1}\left(\bar{\dd} \bar{y}_1\right)^{\bar{a}+l-1}(\bar{\dd} \bar{y}_2)^{\bar{b}-l} e(u, \bar{\dd})\nonumber-\\
&-\frac{k(a+k)}{(a+b+1)(\bar{a}+\bar{b}+1)} \zeta^{l+1}\bar{\zeta}{}^{k-1}\left(\dd y_1\right)^{a+k-1}\left(\dd y_2\right)^{b-k}\left(\bar{\dd} \bar{y}_1\right)^{\bar{a}+l+1}\left(\bar{\dd} \bar{y}_2\right)^{\bar{b}-l} e(\dd, \bar{u})\nonumber+\\
&+\frac{(b-k)(a+b+l+1)}{(a+b+1)(\bar{a}+\bar{b}+1)} \zeta^l \bar{\zeta}{}^{k}\left(\dd y_1\right)^{a+k}\left(\dd y_2\right)^{b-k-1}\left(\bar{\dd} \bar{y}_1\right)^{\bar{a}+l}\left(\bar{\dd} \bar{y}_2\right)^{\bar{b}-l+1} e(\dd, \bar{u})\nonumber-\\
&-\frac{i(b-k)}{(a+b+1)(\bar{a}+\bar{b}+1)} \zeta^{l+1} \bar{\zeta}{}^{k}\left(\dd y_1\right)^{a+k}\left(\dd y_2\right)^{b-k-1}\left(\bar{\dd} \bar{y}_1\right)^{\bar{a}+l+1}\left(\bar{\dd} \bar{y}_2\right)^{\bar{b}-l} e(\dd, \bar{u})\nonumber-\\
&-\frac{i k(a+k)(a+b+l+1)}{(a+b+1)(\bar{a}+\bar{b}+1)} \zeta^l\bar{\zeta}{}^{k-1}\left(\dd y_1\right)^{a+k-1}\left(\dd y_2\right)^{b-k}\left(\bar{\dd} \bar{y}_1\right)^{\bar{a}+l}\left(\bar{\dd} \bar{y}_2\right)^{\bar{b}-l+1} e(\dd, \bar{u})\bigg] \times \nonumber\\
&\times \psi_{a+b, \bar{a}+\bar{b}}(u, \bar{u})\,.
\end{align}

\subsection{C: Rank-one like equations on $(adj \otimes tw)$ and $(tw\otimes adj)$ modules}
\label{sec: rank-one zero-form eq}

Module $(adj \otimes tw)$:
\begin{align}
& D_L C_{a, b, c}^{\bar{a}, \bar{b}, \bar{c}}(u, \bar{u}; \hat{K} | x)-(\bar{c}+1) e(u, \bar{u}) C_{a-1, b, c}^{\bar{a}, \bar{b}-1, \bar{c}+1}(u, \bar{u} ; \hat{K} | x) -\nonumber\\
&-(c+1) e(u, \bar{u}) C_{a, b-1, c+1}^{\bar{a}-1, \bar{b}, \bar{c}}(u, \bar{u} ; \hat{K} | x) - \nonumber\\
&-i e(u, \bar{u}) C_{a, b-1, c}^{\bar{a}, \bar{b}-1, \bar{c}}(u, \bar{u}; \hat{K} | x)  +i(c+1)(\bar{c}+1) e(u, \bar{u}) C_{a-1, b, c+1}^{\bar{a}-1, \bar{b}, \bar{c}+1}(u, \bar{u}; \hat{K} | x) + \nonumber\\
&+\frac{(b+1)(\bar{a}+1)(\bar{a}+\bar{b}+\bar{c}+2)}{(a+b+1)(a+b+2)(\bar{a}+\bar{b}+1)(\bar{a}+\bar{b}+2)} e(\dd, \bar{\dd}) C_{a, b+1, c-1}^{\bar{a}+1, \bar{b}, \bar{c}}(u, \bar{u} ; \hat{K} | x) + \nonumber\\
&+\frac{(a+1)(\bar{b}+1)(a+b+c+2)}{(a+b+1)(a+b+2)(\bar{a}+\bar{b}+1)(\bar{a}+\bar{b}+2)} e(\dd, \bar{\dd}) C_{a+1, b, c}^{\bar{a}, \bar{b}+1, \bar{c}-1}(u, \bar{u}; \hat{K} | x) - \nonumber\\
& -\frac{i(a+1)(\bar{a}+1)}{(a+b+1)(a+b+2)(\bar{a}+\bar{b}+1)(\bar{a}+\bar{b}+2)} e(\dd, \bar{\dd}) C_{a+1, b, c-1}^{\bar{a}+1, \bar{b}, \bar{c}-1}(u, \bar{u}; \hat{K} | x) + \nonumber\\
&+\frac{i(b+1)(\bar{b}+1)(a+b+c+2)(\bar{a}+\bar{b}+\bar{c}+2)}{(a+b+1)(a+b+2)(\bar{a}+\bar{b}+1)(\bar{a}+\bar{b}+2)} e(\dd, \bar{\dd}) C_{a, b+1, c}^{\bar{a}, \bar{b}+1, \bar{c}}(u, \bar{u}; \bar{K} | x) + \nonumber\\
& +\frac{(\bar{a}+1)(\bar{a}+\bar{b}+\bar{c}+2)}{(\bar{a}+\bar{b}+1)(\bar{a}+\bar{b}+2)} e(u, \bar{\dd}) C_{a-1, b, c}^{\bar{a}+1, \bar{b}, \bar{c}}(u, \bar{u}; \hat{K} | x)- \nonumber\\
&-\frac{(c+1)(\bar{b}+1)}{(\bar{a}+\bar{b}+1)(\bar{a}+\bar{b}+2)} e(u, \bar{\dd}) C_{a, b-1, c+1}^{\bar{a}, \bar{b}+1, \bar{c}-1}(u, \bar{u}; \hat{K} | x) + \nonumber\\
& +\frac{i(\bar{a}+1)}{(\bar{a}+\bar{b}+1)(\bar{a}+\bar{b}+2)} e(u, \bar{\dd}) C_{a, b-1, c}^{\bar{a}+1, \bar{b}, \bar{c}-1}(u, \bar{u}; \hat{K} | x) + \nonumber\\
&+\frac{i(\bar{b}+1)(c+1)(\bar{a}+\bar{b}+\bar{c}+2)}{(\bar{a}+\bar{b}+1)(\bar{a}+\bar{b}+2)} e(u, \bar{\dd}) C_{a-1, b, c+1}^{\bar{a}, \bar{b}+1, \bar{c}}(u, \bar{u}; \hat{K} | x) - \nonumber\\
&-\frac{(b+1)(\bar{c}+1)}{(a+b+2)(a+b+1)} e(\dd, \bar{u}) C_{a, b+1, c-1}^{\bar{a}, \bar{b}-1, \bar{c}+1}(u, \bar{u}; \hat{K} | x)+ \nonumber\\
&+\frac{(a+1)(a+b+c+2)}{(a+b+2)(a+b+1)} e(\dd, \bar{u}) C_{a+1, b, c}^{\bar{a}-1, \bar{b}, \bar{c}}(u, \bar{u} ; \hat{K} | x) + \nonumber\\
& +\frac{i(a+1)}{(a+b+2)(a+b+1)} e(\dd, \bar{u}) C_{a+1, b, c-1}^{\bar{a}, \bar{b}-1, \bar{c}}(u, \bar{u} ; \hat{K} | x) + \nonumber\\
&+\frac{i(b+1)(\bar{c}+1)(a+b+c+2)}{(a+b+2)(a+b+1)} e(\dd, \bar{u}) C_{a, b+1, c}^{\bar{a}-1, \bar{b}, \bar{c}+1}(u, \bar{u}; \hat{K} | x)=0\,.
\end{align}

\newpage

Module $(tw \otimes adj)$:
\begin{align}
& D_L C_{a, b, c}^{\bar{a}, \bar{b}, \bar{c}}(u, \bar{u}; \hat{K} | x)+(\bar{c}+1) e(u, \bar{u}) C_{a, b-1, c}^{\bar{a}-1, \bar{b}, \bar{c}+1}(u, \bar{u} ; \hat{K} | x)+ \nonumber\\
&+(c+1) e(u, \bar{u}) C_{a-1, b, c+1}^{\bar{a}, \bar{b}-1, \bar{c}}(u, \bar{u} ; \hat{K} | x) - \nonumber\\
&-i e(u, \bar{u}) C_{a-1, b, c}^{\bar{a}-1, \bar{b}, \bar{c}}(u, \bar{u}; \hat{K} | x)  +i(c+1)(\bar{c}+1) e(u, \bar{u}) C_{a, b-1, c+1}^{\bar{a}, \bar{b}-1, \bar{c}+1}(u, \bar{u}; \hat{K} | x) - \nonumber\\
&-\frac{(a+1)(\bar{b}+1)(\bar{a}+\bar{b}+\bar{c}+2)}{(a+b+1)(a+b+2)(\bar{a}+\bar{b}+1)(\bar{a}+\bar{b}+2)} e(\dd, \bar{\dd}) C_{a+1, b, c-1}^{\bar{a}, \bar{b}+1, \bar{c}}(u, \bar{u} ; \hat{K} | x)- \nonumber\\
&-\frac{(b+1)(\bar{a}+1)(a+b+c+2)}{(a+b+1)(a+b+2)(\bar{a}+\bar{b}+1)(\bar{a}+\bar{b}+2)} e(\dd, \bar{\dd}) C_{a, b+1, c}^{\bar{a}+1, \bar{b}, \bar{c}-1}(u, \bar{u}; \hat{K} | x) -\nonumber\\
&-\frac{i(b+1)(\bar{b}+1)}{(a+b+1)(a+b+2)(\bar{a}+\bar{b}+1)(\bar{a}+\bar{b}+2)} e(\dd, \bar{\dd}) C_{a, b+1, c-1}^{\bar{a}, \bar{b}+1, \bar{c}-1}(u, \bar{u}; \hat{K} | x) +\nonumber\\
&+\frac{i(a+1)(\bar{a}+1)(a+b+c+2)(\bar{a}+\bar{b}+\bar{c}+2)}{(a+b+1)(a+b+2)(\bar{a}+\bar{b}+1)(\bar{a}+\bar{b}+2)} e(\dd, \bar{\dd}) C_{a+1, b, c}^{\bar{a}+1, \bar{b}, \bar{c}}(u, \bar{u}; \bar{K} | x) +\nonumber\\
&+\frac{(\bar{b}+1)(\bar{a}+\bar{b}+\bar{c}+2)}{(\bar{a}+\bar{b}+1)(\bar{a}+\bar{b}+2)} e(u, \bar{\dd}) C_{a, b-1, c}^{\bar{a}, \bar{b}+1, \bar{c}}(u, \bar{u}; \hat{K} | x)- \nonumber\\
&-\frac{(c+1)(\bar{a}+1)}{(\bar{a}+\bar{b}+1)(\bar{a}+\bar{b}+2)} e(u, \bar{\dd}) C_{a-1, b, c+1}^{\bar{a}+1, \bar{b}, \bar{c}-1}(u, \bar{u}; \hat{K} | x)-\nonumber\\
&-\frac{i(\bar{b}+1)}{(\bar{a}+\bar{b}+1)(\bar{a}+\bar{b}+2)} e(u, \bar{\dd}) C_{a-1, b, c}^{\bar{a}, \bar{b}+1, \bar{c}-1}(u, \bar{u}; \hat{K} | x) -\nonumber\\
&-\frac{i(\bar{a}+1)(c+1)(\bar{a}+\bar{b}+\bar{c}+2)}{(\bar{a}+\bar{b}+1)(\bar{a}+\bar{b}+2)} e(u, \bar{\dd}) C_{a, b-1, c+1}^{\bar{a}+1, \bar{b}, \bar{c}}(u, \bar{u}; \hat{K} | x) - \nonumber\\
&-\frac{(a+1)(\bar{c}+1)}{(a+b+2)(a+b+1)} e(\dd, \bar{u}) C_{a+1, b, c-1}^{\bar{a}-1, \bar{b}, \bar{c}+1}(u, \bar{u}; \hat{K} | x)+ \nonumber\\
&+\frac{(b+1)(a+b+c+2)}{(a+b+2)(a+b+1)} e(\dd, \bar{u}) C_{a, b+1, c}^{\bar{a}, \bar{b}-1, \bar{c}}(u, \bar{u} ; \hat{K} | x) -\nonumber\\
&-\frac{i(b+1)}{(a+b+2)(a+b+1)} e(\dd, \bar{u}) C_{a, b+1, c-1}^{\bar{a}-1, \bar{b}, \bar{c}}(u, \bar{u} ; \hat{K} | x) -\nonumber\\
&-\frac{i(a+1)(\bar{c}+1)(a+b+c+2)}{(a+b+2)(a+b+1)} e(\dd, \bar{u}) C_{a+1, b, c}^{\bar{a}, \bar{b}-1, \bar{c}+1}(u, \bar{u}; \hat{K} | x)=0\,.
\end{align}

\newpage



\begin{thebibliography}{99}
\parindent=0pt
\parskip=0pt

\bibitem{Coleman:1967ad}
S.~R.~Coleman and J.~Mandula,
Phys. Rev. \textbf{159} (1967), 1251-1256.


\bibitem{Haag:1974qh}
R.~Haag, J.~T.~Lopuszanski and M.~Sohnius,
Nucl. Phys. B \textbf{88} (1975), 257.


\bibitem{Fradkin:1987ks}
E.~S.~Fradkin and M.~A.~Vasiliev,
Phys. Lett. B \textbf{189} (1987), 89-95


\bibitem{Fradkin:1986qy}
E.~S.~Fradkin and M.~A.~Vasiliev,
Nucl. Phys. B \textbf{291} (1987), 141-171


\bibitem{Vasiliev:1988sa}
M.~A.~Vasiliev,
Annals Phys. \textbf{190} (1989), 59-106


\bibitem{Konshtein:1988yg}
S.~E.~Konshtein and M.~A.~Vasiliev,
Nucl. Phys. B \textbf{312} (1989), 402-418


\bibitem{Vasiliev:2004cm}
M.~A.~Vasiliev,
JHEP \textbf{12} (2004), 046
[arXiv:hep-th/0404124 [hep-th]].


\bibitem{Gross:1987ar}
D.~J.~Gross and P.~F.~Mende,
Nucl. Phys. B \textbf{303} (1988), 407-454


\bibitem{Gross:1988ue}
D.~J.~Gross,
Phys. Rev. Lett. \textbf{60} (1988), 1229


\bibitem{Lindstrom:2003mg}
U.~Lindstrom and M.~Zabzine,
Phys. Lett. B \textbf{584} (2004), 178-185
[arXiv:hep-th/0305098 [hep-th]].


\bibitem{Bonelli:2003kh}
G.~Bonelli,
Nucl. Phys. B \textbf{669} (2003), 159-172
[arXiv:hep-th/0305155 [hep-th]].


\bibitem{Sagnotti:2003qa}
A.~Sagnotti and M.~Tsulaia,
Nucl. Phys. B \textbf{682} (2004), 83-116
[arXiv:hep-th/0311257 [hep-th]].


\bibitem{Vasiliev:2018zer}
M.~A.~Vasiliev,
JHEP \textbf{08} (2018), 051
[arXiv:1804.06520 [hep-th]].


\bibitem{Vasiliev:1988xc}
M.~A.~Vasiliev,
Phys. Lett. B \textbf{209} (1988), 491-497.


\bibitem{Joung:2021bhf}
E.~Joung, M.~g.~Kim and Y.~Kim,
JHEP \textbf{12} (2021), 092
[arXiv:2108.05535 [hep-th]].


\bibitem{Iazeolla:2025btr}
C.~Iazeolla, P.~Sundell and B.~C.~Vallilo,
J. Phys. A \textbf{58} (2025) no.36, 365402
[arXiv:2503.14673 [hep-th]].


\bibitem{Misuna:2024ccj}
N.~Misuna,
JHEP \textbf{12} (2024), 090
[arXiv:2402.14164 [hep-th]].


\bibitem{Misuna:2024dlx}
N.~Misuna,
Phys. Lett. B \textbf{870} (2025), 139882
[arXiv:2408.13212 [hep-th]].


\bibitem{Shaynkman:2000ts}
O.~V.~Shaynkman and M.~A.~Vasiliev,
Theor. Math. Phys. \textbf{123} (2000), 683-700.
[arXiv:hep-th/0003123 [hep-th]].


\bibitem{Vasiliev:2001zy}
M.~A.~Vasiliev,
Phys. Rev. D \textbf{66} (2002), 066006
[arXiv:hep-th/0106149 [hep-th]].


\bibitem{Gelfond:2003vh}
O.~A.~Gelfond and M.~A.~Vasiliev,
Theor. Math. Phys. \textbf{145} (2005), 1400-1424
[arXiv:hep-th/0304020 [hep-th]].


\bibitem{Bekaert:2004qos}
X.~Bekaert, S.~Cnockaert, C.~Iazeolla and M.~A.~Vasiliev,
in \textit{1st Solvay Workshop on Higher Spin Gauge Theories}, pp.~132--197 (2004),
[arXiv:hep-th/0503128].

\bibitem{Boulanger:2008up}
N.~Boulanger, C.~Iazeolla and P.~Sundell,
JHEP \textbf{07} (2009), 013
[arXiv:0812.3615 [hep-th]].


\bibitem{Boulanger:2008kw}
N.~Boulanger, C.~Iazeolla and P.~Sundell,
JHEP \textbf{07} (2009), 014
[arXiv:0812.4438 [hep-th]].


\bibitem{Vasiliev:2009ck}
M.~A.~Vasiliev,
Nucl. Phys. B \textbf{829} (2010), 176-224
[arXiv:0909.5226 [hep-th]].


\bibitem{Spirin:2024zgy}
E.~O.~Spirin and M.~A.~Vasiliev,
Phys. Lett. B \textbf{852} (2024), 138625
[arXiv:2401.06933 [hep-th]].


\bibitem{Tatarenko:2025krq}
Y.~A.~Tatarenko,
Nucl. Phys. B \textbf{1021} (2025), 117205
[arXiv:2509.02364 [hep-th]].


\bibitem{Bychkov:2021zvd}
A.~S.~Bychkov, K.~A.~Ushakov and M.~A.~Vasiliev,
Symmetry \textbf{13} (2021) no.8, 1498
[arXiv:2107.01736 [hep-th]].


\bibitem{Skvortsov:2009nv}
E.~D.~Skvortsov,
JHEP \textbf{01} (2010), 106
[arXiv:0910.3334 [hep-th]].


\bibitem{Gelfond:2013lba}
O.~A.~Gelfond and M.~A.~Vasiliev,
JHEP \textbf{10} (2016), 067
[arXiv:1312.6673 [hep-th]].


\bibitem{Tarusov:2025sre}
A.~A.~Tarusov, K.~A.~Ushakov and M.~A.~Vasiliev,
JHEP \textbf{08} (2025), 052
[arXiv:2503.05948 [hep-th]].


\bibitem{Deser:2001pe}
S.~Deser and A.~Waldron,
Phys. Rev. Lett. \textbf{87} (2001), 031601
[arXiv:hep-th/0102166 [hep-th]].


\bibitem{Deser:2001us}
S.~Deser and A.~Waldron,
Nucl. Phys. B \textbf{607} (2001), 577-604
[arXiv:hep-th/0103198 [hep-th]].


\bibitem{Zinoviev:2001dt}
Y.~M.~Zinoviev,
[arXiv:hep-th/0108192 [hep-th]].


\bibitem{Zinoviev:2002ye}
Y.~M.~Zinoviev,
[arXiv:hep-th/0211233 [hep-th]].


\bibitem{Skvortsov:2006at}
E.~D.~Skvortsov and M.~A.~Vasiliev,
Nucl. Phys. B \textbf{756} (2006), 117-147
[arXiv:hep-th/0601095 [hep-th]].


\bibitem{Grigoriev:2020lzu}
M.~Grigoriev, K.~Mkrtchyan and E.~Skvortsov,
Phys. Rev. D \textbf{102} (2020) no.6, 066003
[arXiv:2005.05931 [hep-th]].


\bibitem{Basile:2022mif}
T.~Basile, S.~Dhasmana and E.~Skvortsov,
JHEP \textbf{05} (2023), 136
[arXiv:2212.06226 [hep-th]].


\bibitem{Basile:2024dcs}
T.~Basile and S.~Dhasmana,
JHEP \textbf{12} (2024), 152
[arXiv:2407.11884 [hep-th]].


\bibitem{Zinoviev:2024xta}
Y.~M.~Zinoviev,
JHEP \textbf{04} (2025), 019
[arXiv:2412.04982 [hep-th]].


\bibitem{Zinoviev:2025nlp}
Y.~M.~Zinoviev,
JHEP \textbf{06} (2025), 046
[arXiv:2503.16004 [hep-th]].


\bibitem{Zinoviev:2025tpx}
Y.~M.~Zinoviev,
JHEP \textbf{10} (2025), 132
[arXiv:2507.05744 [hep-th]].


\bibitem{Zinoviev:2025yqa}
Y.~M.~Zinoviev,
JHEP \textbf{10} (2025), 231
[arXiv:2508.06166 [hep-th]].


\bibitem{Zinoviev:2025jff}
Y.~M.~Zinoviev,
JHEP \textbf{01} (2026), 085
[arXiv:2509.18884 [hep-th]].


\bibitem{Fronsdal:1978rb}
C.~Fronsdal,
Phys. Rev. D \textbf{18} (1978), 3624.


\bibitem{Howe:1989}
R.~Howe,
Trans. Amer. Math. Soc. \textbf{313} (1989), 539-570
[Erratum-ibid.\textbf{318}(1990) 823].


\bibitem{Humphreys2008}
J.~E.~Humphreys,
Graduate Studies in Mathematics, Vol.~94, American Mathematical Society (2008),
ISBNs: 978-0-8218-4678-0 (print); 978-1-4704-2120-5 (online)

\bibitem{Shaynkman:2004vu}
O.~V.~Shaynkman, I.~Y.~Tipunin and M.~A.~Vasiliev,
Rev. Math. Phys. \textbf{18} (2006), 823-886
[arXiv:hep-th/0401086 [hep-th]].


\bibitem{Vasiliev:1999ba}
M.~A.~Vasiliev,
[arXiv:hep-th/9910096 [hep-th]].


\bibitem{Vasiliev:2007yc}
M.~A.~Vasiliev,
Nucl. Phys. B \textbf{793} (2008), 469--526,
[arXiv:0707.1085 [hep-th]].

\bibitem{Fierz:1939ix}
M.~Fierz and W.~Pauli,
Proc. Roy. Soc. Lond. A \textbf{173} (1939), 211-232


\bibitem{Singh:1974qz}
L.~P.~S.~Singh and C.~R.~Hagen,
Phys. Rev. D \textbf{9} (1974), 898-909


\bibitem{humphreys2012introduction}
J.~E.~Humphreys,
Vol.~9, Springer Science \& Business Media (2012),
eBook ISBN: 978-1-4612-6398-21

\bibitem{Misuna:2019ijn}
N.~Misuna,
Phys. Lett. B \textbf{798} (2019), 134956
[arXiv:1905.06925 [hep-th]].


\bibitem{Misuna:2020fck}
N.~G.~Misuna,
JHEP \textbf{12} (2021), 172
[arXiv:2012.06570 [hep-th]].


\bibitem{etingof2011introductionrepresentationtheory}
Pavel Etingof, Oleg Golberg, Sebastian Hensel, Tiankai Liu, Alex Schwendner, Dmitry Vaintrob and Elena Yudovina,
[arXiv:0901.0827 [math.RT]].


\bibitem{Benson_1991}
Benson,~D.~J.,
Cambridge Studies in Advanced Mathematics 1991
ISBN:9780521361347


\bibitem{Cartan}
Henri Cartan, Samuel Eilenberg,
Princeton University Press, 2016
ISBN 1400883849, 9781400883844

\end{thebibliography}
\end{document}